\DeclareMathAlphabet{\pazocal}{OMS}{zplm}{m}{n} 
\def\rev#1{\textcolor{black}{#1}}
\def\revbis#1{\textcolor{black}{#1}}
\renewcommand{\tilde}[1]{\widetilde{#1}}
\renewcommand{\bar}[1]{\overline{#1}}
\newcommand{\eps}{\varepsilon}
\newcommand{\RPuis}{\mathbf{R}\langle\eps\rangle}
\newcommand{\KK}{\mathbf{K}}
\newcommand{\KKbar}{\bar{\mathbf{K}}}
\newcommand{\CC}{\mathbf{C}}
\newcommand{\RR}{\mathbf{R}}
\newcommand{\QQ}{\mathbf{Q}}
\newcommand{\CCo}{\mathbb{C}}
\newcommand{\RRo}{\mathbb{R}}
\newcommand{\QQo}{\mathbb{Q}}
\renewcommand{\aa}{\bm{a}}
\newcommand{\bb}{\bm{b}}
\newcommand{\ff}{{\bm{f}}}
\renewcommand{\gg}{\bm{g}}
\newcommand{\hh}{{\bm{h}}}
\newcommand{\uu}{\bm{u}}
\newcommand{\ww}{\bm{w}}
\newcommand{\xx}{{\bm{x}}}
\newcommand{\yy}{{\bm{y}}}
\newcommand{\zz}{{\bm{z}}}
\newcommand{\FF}{{\bm{F}}}
\newcommand{\Wo}{W^\circ}
\newcommandx{\rinf}[2][1 = \bphi]{_{\mid #1 < #2}}
\newcommandx{\rinfeq}[2][1 = \bphi]{_{\mid #1 \leq #2}}
\newcommandx{\req}[2][1 = \bphi]{_{\mid #1 = #2}}
\newcommandx{\rinto}[2][1 = \bphi]{_{\mid #1 \in #2}}
\renewcommand{\phi}{\varphi}
\newcommand{\Rcal}{\mathcal{R}}
\newcommand{\Pcal}{\mathcal{P}}
\newcommand{\Dcal}{\mathcal{D}}
\newcommand{\Ppaz}{\pazocal{P}}
\newcommand{\Spaz}{\pazocal{S}}
\newcommand{\Opaz}{\pazocal{O}}
\newcommand{\Bpaz}{\pazocal{B}}
\newcommand{\Bfrak}{\mathfrak{B}}
\newcommand{\et}{\quad \text{and} \quad}
\newcommand{\V}{\bm{V}}
\newcommand{\I}{\bm{I}}
\newcommandx{\HH}[2][1= , 2= ]{$\mathbf{H}_{\bm{#1}}$#2\xspace}
\newcommandx{\HHp}[2][1= , 2= ]{$\mathbf{H'_{#1}}$#2\xspace}
\newcommand{\RM}{$\mathrm{RM}$\xspace}
\newcommandx{\RMpu}[1][1 = u]{$\mathrm{RM}(#1)$\xspace}
\newcommand{\SW}{\rev{S_i}}
\def\sfA{\mathsf{A}}
\def\sfB{\mathsf{B}}
\def\sfC{\mathsf{C}}
\def\sfP{\mathsf{P}}
\def\bphi{\bm{\phi}}
\newcommand{\balpha}{\bm{\alpha}}
\newcommandx{\map}[3][3=]{\bm{#1}_{#2}^{#3}}
\newcommand{\phiun}{\map{\phi}{1}}
\newcommand{\SA}{semi-\allowbreak algebraic\xspace}
\newcommand{\SAC}{semi-\allowbreak algebraically connected\xspace}
\newcommand{\SACC}{semi-\allowbreak algebraically connected component\xspace}
\newcommand{\SACCs}{semi-\allowbreak algebraically connected components\xspace}
\DeclareMathOperator{\ext}{ext}
\DeclareMathOperator{\sing}{sing}
\DeclareMathOperator{\rank}{rank}
\DeclareMathOperator{\jac}{Jac}
\DeclareMathOperator{\reg}{reg}
\DeclareMathOperator{\id}{id}
\DeclareMathOperator{\limeps}{lim_\eps}
\theoremstyle{plain}
\newtheorem{theorem}{Theorem}[section]
\newtheorem{lemma}[theorem]{Lemma}
\newtheorem{proposition}[theorem]{Proposition}
\newtheorem{corollary}[theorem]{Corollary}
\theoremstyle{definition}
\newtheorem{definition}[theorem]{Definition}
\newtheorem{example}{Example}
\newtheorem*{notation}{Notation}
\theoremstyle{remark}
\newtheorem*{remark}{Remark}
\begin{document}
\author[1]{Rémi \textsc{Pr\'ebet}}
\author[1]{Mohab \textsc{Safey El Din}}
\author[2]{\'Eric \textsc{Schost}}
\affil[1]{Sorbonne Universit\'e, LIP6 CNRS UMR 7606, Paris, France}
\affil[2]{University of Waterloo, David Cheriton School of Computer Science,
  Waterloo ON, Canada}
\date{\today}
\title{Computing roadmaps in unbounded smooth real algebraic sets I: 
connectivity results}

\maketitle 
 
\begin{abstract}
Answering connectivity queries in real algebraic sets is a fundamental problem 
in effective real algebraic geometry that finds many applications in e.g. 
robotics where motion planning issues are topical.
This computational problem is tackled through the computation of so-called 
\emph{roadmaps} which are real algebraic subsets of the set $V$ under study, of 
dimension at most one, and which have a connected intersection with all 
\SACCs of $V$.
Algorithms for computing roadmaps rely on statements establishing 
connectivity properties of some well-chosen subsets of $V$, assuming that $V$ 
is bounded.

In this paper, we extend such connectivity statements by dropping the 
boundedness assumption on $V$. This exploits properties of so-called 
\emph{generalized polar varieties}, which are critical loci of $V$ for some 
well-chosen polynomial maps.
\end{abstract}

\maketitle 

\section{Introduction}
Let $\QQ$ be a real field of real closure $\RR$ and let $\CC$ be its algebraic
closure (one can think about $\QQo$, $\RRo$ and $\CCo$ instead, for the sake of
understanding) and let $n \geq 0$ be an integer. An algebraic set $V \subset
\CC^n$ defined over $\QQ$ is the solution set in $\CC^n$ to a system of
polynomial equations \rev{in $n$ variables} with coefficients in $\QQ$. A real
algebraic set defined over $\QQ$ is the set of solutions in $\RR^n$ to a system
of polynomial equations \rev{in $n$ variables} with coefficients in $\QQ$. It is
also the real trace $V\cap \RR^n$ of an algebraic set $V\subset\CC^n$. Real
algebraic sets have finitely many connected components \cite[Theorem
2.4.4.]{BCR2013}. Counting these connected components \cite{GV1992,VG1990} or
answering connectivity queries over $V\cap \RR^n$ \cite{SS1983} finds many
applications in e.g. robotics \cite{Ca1988, CSS2020, Wenger07, NaSc17, 
CPSSW2022}.

Following \cite{Ca1988,Ca1993}, such computational issues are tackled
by computing a real algebraic subset of $V\cap \RR^n$, defined over
$\QQ$, which has dimension at most one and a connected intersection
with all connected components of $V$ and contains the input query
points. In~\cite{Ca1988}, Canny called such a subset a \emph{roadmap}
of $V$.

The effective construction of roadmaps, given a defining system for $V$, relies
on connectivity statements which allow one to define real algebraic subsets of
$V\cap \RR^n$, of smaller dimension than \rev{that} of $V\cap \RR^n$, and that 
have 
a connected intersection with the connected components of $V\cap \RR^n$. Such
existing statements in the literature make the assumption that $V$ has finitely
many singular points and $V\cap \RR^n$ is bounded. In this paper, we focus on
the problem of obtaining similar statements by dropping the boundedness
assumption. 
We prove a new connectivity statement which generalizes the one of 
\cite{SS2017} 
to the unbounded case and will be used in a separate paper to obtain 
asymptotically faster algorithms for computing roadmaps. 
We start by recalling the state-of-the-art connectivity statement,
which allows us to introduce some material we need to state our main result.

\paragraph*{State-of-the-art overview}
We start by introducing some terminology.
Recall that an \emph{algebraic set} $V\subset\CC^n$ is the set of solutions of 
a 
finite system of polynomials equations.
It can be uniquely decomposed into finitely many \emph{irreducible components}. 
When all these components have the same dimension $d$, we say that $V$ is 
\emph{$d$-equidimensional}.
Those points $\yy\in V$ at which the Jacobian matrix of a finite set of 
generators
of its associated ideal has rank $n-d$ are called \emph{regular} points and 
the 
set of those points is denoted by $\reg(V)$. 
The others are called \emph{singular} points; the set of singular points of $V$
(its singular locus) is denoted by $\sing(V)$ and is an algebraic subset of $V$.
We refer to \cite{Sh1994} for definitions and propositions about algebraic sets.

A \emph{\SA set} $S\subset\RR^n$ is the set of solutions of a finite system of 
polynomial equations and inequalities. We say that $S$ is \emph{\SAC} if for 
any $\yy, \yy' \in S$, $\yy$ and $\yy'$ can be connected by a \emph{\SA path} 
in $S$, that is a continuous \SA function $\gamma\colon[0,1]\to S$ such that 
$\gamma(0)=\yy$ and $\gamma(1)=\yy'$.
A \SA set $S$ can be decomposed into finitely many \emph{\SACCs} which are \SAC 
\SA sets that are both closed and open in $S$.
Finally, for a \SA set $S\subset \RR^n$, we denote by $\bar{S}$ its closure for 
the Euclidean topology on $\RR^n$.
We refer to \cite{BPR2006} and \cite{BCR2013} for definitions and propositions 
about \SA sets and functions.

Let $0\leq d \leq n$ and $V\subset \CC^n$ be a $d$-equidimensional algebraic set
such that $\sing(V)$ is finite. For $1 \leq i \leq n$, let $\pi_i$ be the
canonical projection: 
\[
 \pi_i\colon (\yy_1,\dotsc,\yy_n) \: \longmapsto \: (\yy_1,\dotsc,\yy_i)
\]

For a polynomial map $\bphi\colon\CC^n \rightarrow \CC^m$ a point $\yy\in V$ is
a \emph{critical point} of $\bphi$ if $\yy \in \reg(V)$ and \rev{the
  differential of the restriction of $\phi$ to $V$ at $\yy$}, denoted by $d_\yy
\bphi$, is not \rev{surjective}, that is
\[ 
 d_\yy \bphi(T_\yy V) \subsetneq \CC^m,
\]
where \rev{$T_\yy V$ denoted the tangent space to $V$ at $\yy$}. We will denote
by $\Wo(\bphi,V)$ the set of the critical points of $\bphi$ on $V$. A
\emph{critical value} is the image of a critical point. We \revbis{put} 
$K(\bphi,V) = \Wo(\bphi,V) \cup \sing(V)$. The points of $K(\bphi,V)$ are 
called the \emph{singular points} of $\bphi$ on $V$. Figure~\ref{fig:critlocus} 
show
examples of such critical loci.

\begin{figure}[h]\centering
\begin{minipage}[c]{0.5\linewidth}
  \includegraphics[width=\linewidth]{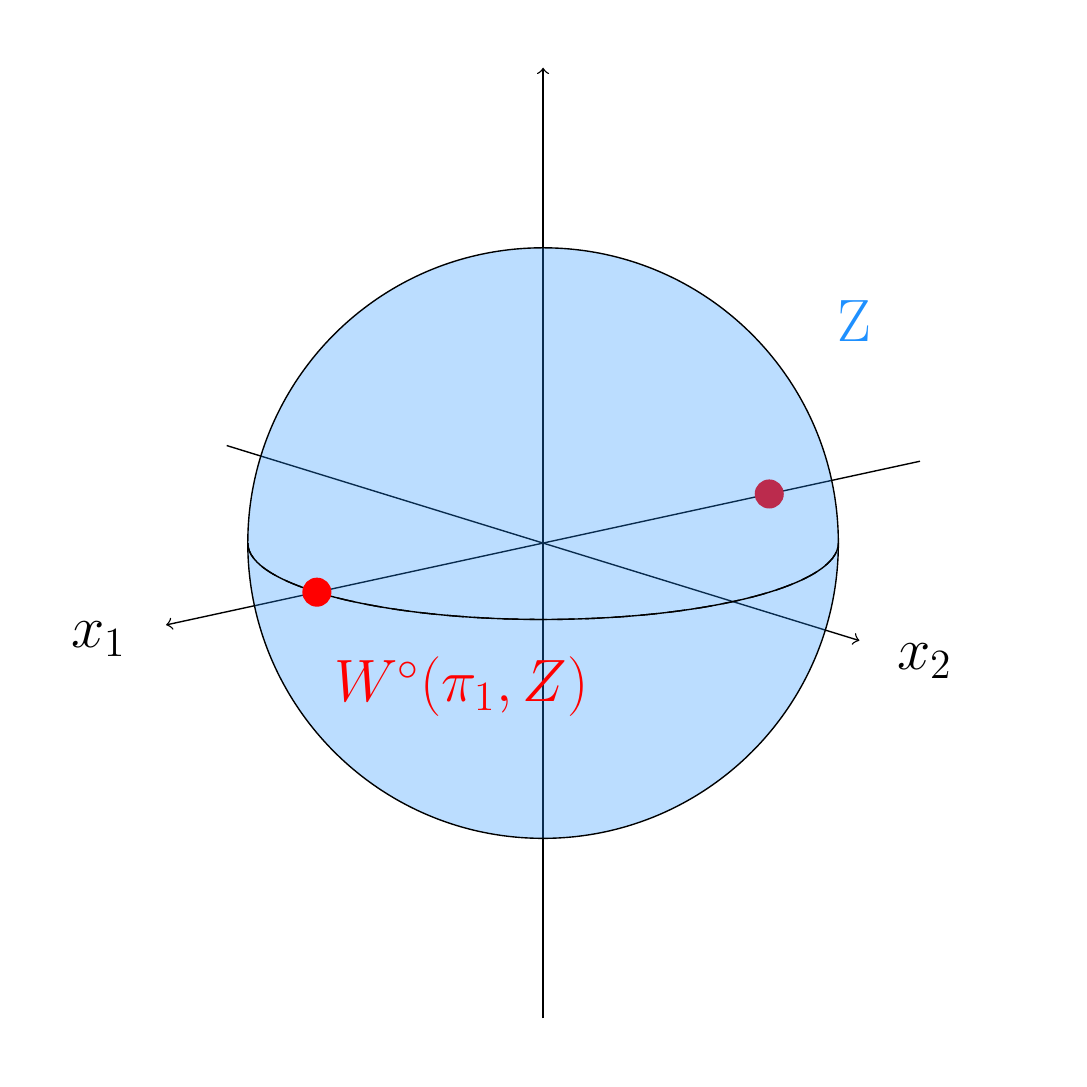}
\end{minipage}\hfill
\begin{minipage}[c]{0.5\linewidth}
 \includegraphics[width=\linewidth]{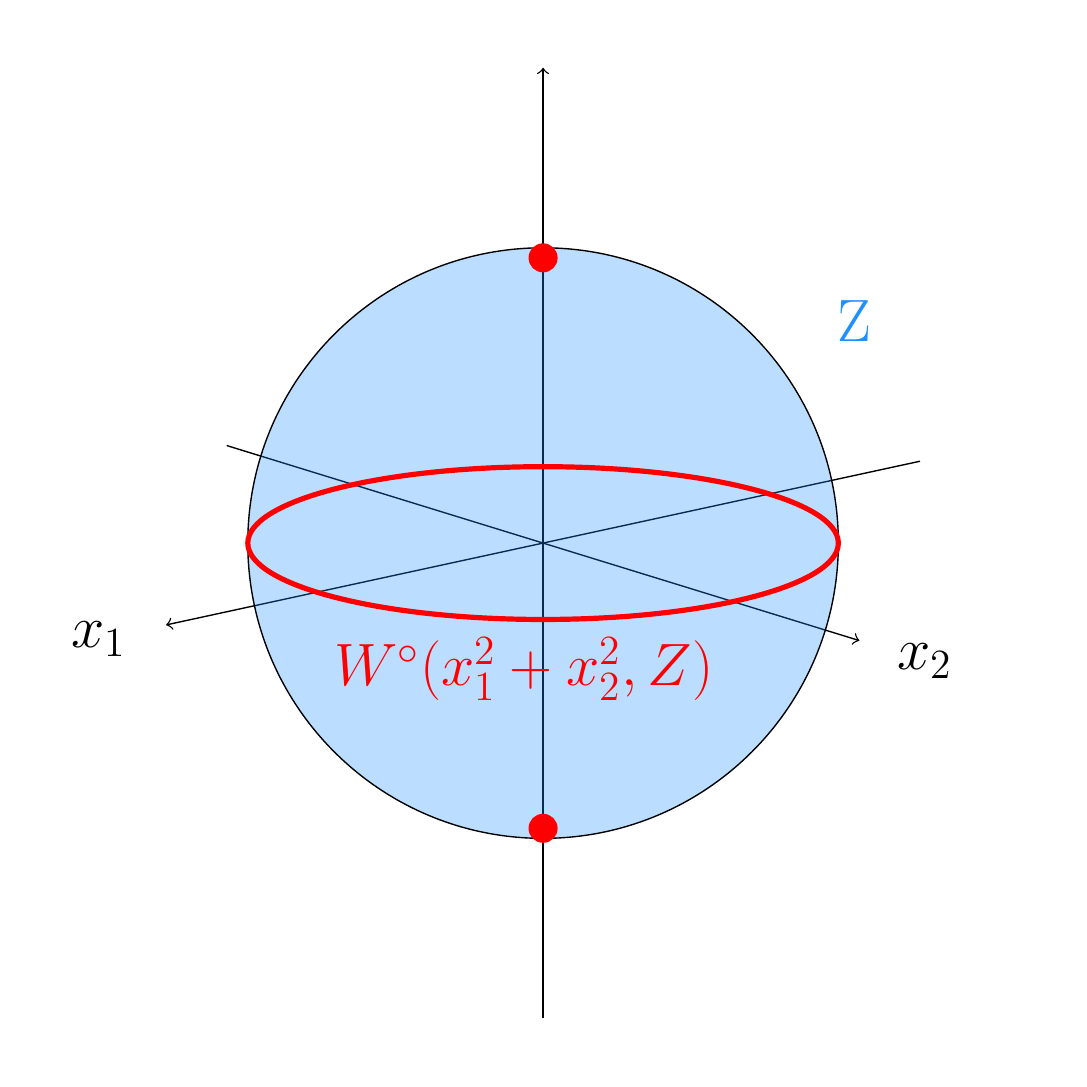}
\end{minipage}
\caption{Real trace of the critical locus on a sphere $Z$ for: the 
projection on the first coordinate $\pi_1$ (left); 
the polynomial map $\phi$ associated to $x_1^2+x_2^2 \in \RRo[x_1,x_2,x_3]$ 
(right).
\revbis{Let $\xx = (\xx_1,\xx_2,\xx_3) \in Z$.
The differential of the restriction of $\pi_1$ to $Z$ at $\xx$ is 
the restriction of $\pi_1$ to $T_{\xx}Z$.
The image is not $\CC$ if, and only if, $T_{\xx}Z$ is orthogonal to the 
$x_1$-axis, so that critical points of the restriction of $\pi$ to $Z$ occur at 
$(\pm1,0,0)$.
Besides, the differential of the restriction of $\phi$ to $Z$ at $\xx$ 
is the restriction of $-2x_3\cdot\pi_3$ to $T_{\xx}Z$.
Hence, $\xx$ is a critical point of the restriction of $\phi$ to $Z$ if, and 
only if, either $\xx_3=0$ or $T_{\xx}Z$ is orthogonal the $x_3$-axis.}}
\label{fig:critlocus}
\end{figure}

For $1\leq i \leq d$ we denote by $W(\pi_i,V)$ the $i$-th \emph{polar variety} 
defined as the Zariski closure of the critical locus $\Wo(\pi_i,V)$ 
of the restriction of $\pi_i$ to $V$.
Further, we extend this definition by considering $\bphi = (\phi_1, \ldots,
\phi_n) \subset \QQ[x_1,\dotsc,x_n]$ and, for $1\leq i \leq n$, the map
  \begin{equation}\label{eqn:defphii}
    \begin{array}{cccc}
        \map{\phi}{i}\colon &\CC^n& \longrightarrow &\CC^i\\
        &\yy & \mapsto & (\phi_1(\yy), \ldots, \phi_i(\yy))
    \end{array}
  \end{equation}
Following the ideas of \cite{BGHP2004,BGHP2005, BGHSS2010} we denote similarly 
$W(\map{\phi}{i},V)$ the $i$-th \emph{generalized polar variety} defined as the 
Zariski closure of the critical locus $\Wo(\map{\phi}{i},V)$ of the restriction 
of $\map{\phi}{i}$ to $V$. 
We recall below \cite[Theorem 14]{SS2011} (see also \cite[Proposition 
3.3]{BRSS2014} for a slight variant of it), making use of polar varieties to 
establish connectivity statements.

\smallskip
\emph{
For $2\leq i \leq d$, assume that the following holds:
\begin{itemize}
 \item $V\cap \RR^n$ is bounded;
 \item $W(\pi_i, V)$ is either empty or $(i-1)$-equidimensional and smooth
   outside $\sing(V)$;
 \item $W(\pi_1, W(\pi_i, V))$ is finite;
 \item for any $\yy\in \CC^{i-1}$, $\pi_{i-1}^{-1}(\yy)\cap V$ is either empty 
or
   $(d-i+1)$-equidimensional.
\end{itemize}
Let
\[
K_i = W(\pi_1,W(\pi_i, V)) \cup \sing(V) 
\et
F_i = \pi_{i-1}^{-1}(\pi_{i-1}(K_i)) \cap V.
\]
Then,\,the\,real\,trace\,of $W(\pi_i, V)\cup F_i$ has\,a\,non-empty\,and \SAC 
intersection\,with\,each\,\SACC\,of $V\cap\RR^n$.
}

For the special case $i = 2$, this result was  originally proved by Canny in
\cite{Ca1988, Ca1988bis}. A variant of it, again assuming $i=2$, is given for
general \SA sets in \cite{Ca1993,Ca1991}. By dropping the restriction $i=2$, the
result in \cite[Theorem 14]{SS2011} allows one more freedom in the choice of
$i$, and then, in the design of roadmap algorithms to obtain a better 
complexity.
The rationale is as follows.

Restricting to $i=2$, one expects (up to some linear change of variables or
other technical manipulations) a situation where $W(\pi_2, V)$ has dimension at
most $1$ and $F_2$ has dimension $d-1$ (see e.g. \cite[Lemma 31]{SS2011}). To
obtain a roadmap for $V\cap \RR^n$ one is led to call recursively roadmap
algorithms with input systems defining the fibers $F_i$'s. Hence, the depth of 
the
recursion is $n$. Besides, letting $D$ be the maximum degree of input equations
defining $V$, roughly speaking each recursive call requires $(nD)^{O(n)}$
arithmetic operations in $\QQ$ while the size of the input data grows by
$(nD)^{O(n)}$ according to \cite[Proposition 33]{SS2011}. Consequently, one
obtains roadmap algorithms using $(nD)^{O(n^2)}$ arithmetic operations in $\QQ$.

In \cite{SS2011}, using a baby steps/giant steps strategy, it is showed 
that one can take $i\simeq \sqrt{d}$ and then have a depth of the recursion 
$\simeq \sqrt{d}$. It is also proved that each recursive step needed to compute 
systems encoding $K_i$ and $F_i$ requires at most $(nD)^{O(n)}$ arithmetic 
operations in $\QQ$, while the size of the input data grows by $(nD)^{O(n)}$. 
All in all, up to technical details that we skip, one obtains roadmap 
algorithms using $(nD)^{O(n\sqrt{n})}$ 
arithmetic operations in $\QQ$. 
Finally, in \cite{SS2017}, it is shown how to apply \cite[Theorem 14]{SS2011}
with $i\simeq \frac{d}{2}$ so that the depth of the recursion becomes $\simeq
\log_2(d)$. Hence, proceeding as in \cite{SS2011}, an algorithm using
$(nD)^{6n\log_2(d)}$ arithmetic operations in $\QQ$ is obtained
in~\cite{SS2017}.

\rev{Such connectivity results and the algorithms that derive from them  are at 
the foundation
  of
  many
  implementations for answering connectivity queries in real
  algebraic sets. As far as we know, the first one was reported in 
\cite{MeSa06},
  showing that, at that time, basic computer algebra tools were mature enough to
  implement rather easily roadmap algorithms. More recently, practical
  results were reported applications of roadmap algorithms to kinematic
  singularity analysis in \cite{CSS2020, CSS23}, showing the interest of
  developing roadmap algorithms beyond applications to motion planning. In 
parallel,
 the interest in roadmap algorithms keeps growing as they have 
  also been adapted to the numerical side \cite{ReCh14, CWF19}. This 
illustrates the
  interest of improving roadmap algorithms and the connectivity results they
  rely on.}

Dropping the boundedness assumption in this  scheme was done in
\cite{BR2014, BRSS2014} using infinitesimal deformation techniques. The
algorithms proposed use respectively $(nD)^{O(n\sqrt{n})}$ and
$(nD)^{O(n\log^2(n))}$ arithmetic operations in $\QQ$. This induces a growth of
intermediate data; the algorithm is not polynomial in its output
size, \rev{which is $(nD)^{O(n\log(n))}$}.

\rev{In non-compact \revbis{cases}, one could also study the intersection of
  $V$ with either $[-c, c]^{n}$ or a ball of radius $c$, for $c$ large
  enough, but we would then have to deal with semi-algebraic sets instead
  of real algebraic sets, in which case \cite[Theorem 14]{SS2011} is
  still not sufficient.}

\rev{In order to ultimately obtain an algorithm dealing with unbounded smooth
  real algebraic sets with a complexity similar to that of
  \cite{SS2017}, the goal of this paper is instead to provide a new
  connectivity statement with no boundedness assumption and the same
  freedom brought by the one of \cite{SS2011}.}

\paragraph*{Main result}\label{ssec:contrib}
Let $V \subset \CC^n$ be an
algebraic set defined over $\QQ$ and $d>0$ be an integer. We say that $V$
satisfies assumption $(\sfA)$ when
\begin{description}
\item[$(\sfA)$] $V$ is $d$-equidimensional and its singular locus $\sing(V)$ is
  finite.
\end{description}

Recall that we say that a map $\psi\colon Y\subset\RR^n\to Z\subset\RR^m$ is a 
proper map if, for every closed (for Euclidean topology) and bounded subset 
$Z'\subset Z$, $\psi^{-1}(Z')$ is a closed and bounded subset of $Y$.
\rev{For $\bphi = (\phi_1, \ldots, \phi_n) \subset \QQ[x_1,\dotsc,x_n]$, and
with $\map{\phi}{i}$ the induced map defined in \eqref{eqn:defphii}, for 
$1\leq i \leq n$, we say that $\bphi$ satisfies assumption $(\sfP)$ when}
\begin{description}
 \item[$(\sfP)$] the restriction of the map $\map{\phi}{1}$ to $V\cap \RR^n$ is 
proper and bounded from below. 
\end{description}

We denote by $W_i = W(\map{\phi}{i},V)$ the Zariski closure of the set of
critical points of the restriction of $\map{\phi}{i}$ to $V$.
For $2\leq i \leq d$ \rev{and $\bphi$ as above}, we say that $(\bphi, i)$ 
satisfies assumption $(\sfB)$ when 
  \begin{enumerate}[label=$(\sfB_\arabic*)$] 
  \item $W_i$ is either empty or
    $(i-1)$-equidimensional and smooth outside $\sing(V)$;
  \item for any $\yy = (\yy_1, \ldots, \yy_i)\in \CC^i$,
    $V\cap\map{\phi}{i-1}[-1](\yy)$ is either empty or
    $(d-i+1)$-equidimensional. 
\end{enumerate}

Note that when $\sfB_{1}$ holds, $\sing(W_i)$ and critical loci of polynomial
maps restricted to $W_i$ are well-defined. 
\rev{For $\SW$  a finite subset of $V$,} we say 
that $\SW$ satisfies assumption $(\sfC)$ when
\begin{enumerate}[label=$(\sfC_\arabic*)$]
 \item $\SW$ is finite;
 \item $\SW$ \rev{has a non-empty intersection with} every \SACC of 
$W(\phiun,W_i)\cap\RR^n$.
\end{enumerate}
Finally, using a construction similar to the one used in \cite[Theorem
14]{SS2011}, we let 
\[
 K_i = W(\map{\phi}{1}, V) \cup \SW \cup \sing(V)
 \quad \et \quad F_i = \map{\phi}{i-1}[-1](\map{\phi}{i-1}(K_i))\cap V.
\]

\begin{theorem}\label{thm:mainresult}
  \indent For $V,d$, $i$ in $\{1,\dots,d\}$, $\bphi$ and $S_i$ as
  above, and under assumptions $(\sfA)$, $(\sfB)$, $(\sfC)$ and
  $(\sfP)$, the subset $W_i\cup F_i$ has a non-empty and \SAC
  intersection with each \SACC of $V\cap\RR^n$.
\end{theorem}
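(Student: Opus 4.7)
Fix a \SACC $C$ of $V\cap\RR^n$, set $X = C\cap (W_i\cup F_i)$, and show that $X$ is non-empty and \SAC. The driving idea is to use $\phiun$ as a proper ``height'' function on $C$ (which makes sense by $(\sfP)$) and to run a sweeping argument over its sublevel sets $C_{\leq t} := C\cap\{\phiun\leq t\}$, each of which is compact by properness. Non-emptiness is immediate: since $\phiun$ is proper on $V\cap\RR^n$ and bounded below, $\phiun|_C$ attains a minimum at some $\yy\in C$; that $\yy$ must lie in $W(\phiun,V)\cup\sing(V)\subset K_i\subset F_i$, hence $\yy\in X$.

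For \SA connectivity, the plan is to prove by induction on the (finite) set of critical values of $\phiun$ on $V$, $W_i$, and on the fibers $\map{\phi}{i-1}[-1](\yy)\cap V$ that $X_t := X\cap C_{\leq t}$ is \SAC for every $t\in\RR$, and then to conclude by taking the union as $t\to\infty$ (each $\xx,\yy\in X$ sits in some $X_t$ since $\phiun(\xx),\phiun(\yy)$ are bounded). Between consecutive critical values, a semi-algebraic triviality argument (Hardt's triviality applied to $\phiun|_V$, made effective by $(\sfA)$ and $(\sfB_1)$) yields a continuous deformation that carries $X_{t_1}$ into $X_{t_2}$ and preserves the \SAC property. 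At a critical value $c$, three distinct phenomena can occur: new \SACCs of $C_{\leq t}$ can appear below points of $\sing(V)$ or of $W(\phiun,V)$, or new \SACCs of $W_i\cap C_{\leq t}$ can appear at critical points of $\phiun$ restricted to $W_i$. The first two are handled because $\sing(V)\cup W(\phiun,V)\subset K_i$, so the entire fibers over their $\map{\phi}{i-1}$-image sit in $F_i$; assumption $(\sfB_2)$ ensures those fibers are $(d-i+1)$-equidimensional, and a standard ``horizontal versus vertical'' argument in the spirit of \cite[Theorem~14]{SS2011} shows that the newly appearing pieces can be linked through these fibers to the already-connected part of $X_{t-\eps}$.

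The subtle third case is where the new idea over the bounded setting of \cite{SS2011, SS2017} enters: when $W_i$ itself gains a new \SACC at level $c$, we must link it within $W_i\cup F_i$ to pieces already present. This is exactly the role of $(\sfC)$: every \SACC of $W(\phiun,W_i)\cap\RR^n$ is met by $\SW\subset K_i\subset F_i$, so the connecting point is guaranteed to lie in $F_i$ at level $c$, and the fiber $\map{\phi}{i-1}[-1](\map{\phi}{i-1}(\SW\cap C))\cap V\subset F_i$ provides a \SA path from the new component into an existing one via the equidimensional fiber provided by $(\sfB_2)$. I expect this third case to be the main obstacle, because in the bounded setting one could use a single closed top slice as a ``cap'' to collect all branches of $W_i$, whereas here one needs the combined information of $(\sfP)$, $(\sfB_1)$ and $(\sfC)$ to keep track of branches that escape to infinity; the properness of $\phiun$ is what guarantees that a branch of $W_i$ cannot evade detection by $S_i$ through running off to infinity without first producing a critical point of $\phiun|_{W_i}$.
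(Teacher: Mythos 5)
Your high-level plan coincides with the paper's strategy (sweep by sublevel sets of $\phiun$, properness for compactness, minimum attainment for non-emptiness, induction on critical values, fibers $F_i$ restoring connectivity, and $(\sfC)$ to detect branches of $W_i$). But there is a genuine gap in the induction hypothesis. You assert that $X_t := C\cap(W_i\cup F_i)\cap\{\phiun\leq t\}$ is semi-algebraically connected for every $t$, and conclude by taking the union as $t\to\infty$. This claim is false: the sublevel set $C_{\leq t}$ itself is typically \emph{disconnected} (think of $\phiun$ sweeping across a torus), and every bounded semi-algebraically connected component of $C_{\leq t}$ must meet $K(\phiun,V)\subset F_i$ (which is exactly the argument you use for non-emptiness), so $X_t$ inherits at least as many connected components as $C_{\leq t}$ has. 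The correct, weaker invariant — which is what the paper proves as its truncated property $\mathrm{RM}(u)$ — is that for \emph{each} semi-algebraically connected component $B$ of $V\cap\{\phiun\leq u\}$, the set $B\cap(W_i\cup F_i)$ is non-empty and semi-algebraically connected. The deduction of the full statement then requires the extra step (which the paper supplies) of showing that any two points of $C\cap(W_i\cup F_i)$ lie in a common component $B$ of some sublevel set, by maximizing $\phiun$ along a path joining them in $C$.

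Two smaller issues. First, invoking Hardt triviality for $\phiun|_V$ between critical values does not, as stated, carry $W_i\cup F_i$ into itself: a trivialization of $V$ need not respect the subvariety $W_i$. The paper instead applies its fibration corollary (a Nash version of Thom's isotopy lemma) separately to $V$ and to $W_i$ and then splices the information together; you would need either a Hardt trivialization compatible with the pair $(V, W_i)$, or to follow that two-track argument. Second, the ``horizontal versus vertical'' reconnection at a critical value that you describe as ``standard'' is in fact where the bulk of the technical work lives: the paper needs the curve selection lemma, real-closed Puiseux extensions, and a transfer argument so that the fiber lemma $(\sfB_2)$ can be applied to infinitesimal slices inside a component $B$. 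These are recoverable from the plan but are far from automatic.
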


\rev{The proof structure of the above result follows a pattern similar to the
  one of~\cite{SS2011}. Its foundations rely on the following basic idea, 
 sweeping the ambient space with level sets of $\map{\phi}{1}$, having a look
  at the connectivity of $V\cap {\map{\phi}{1}}^{-1}(]-\infty, a])$ and $\left(
    W_i\cup F_i \right)\cap {\map{\phi}{1}}^{-1}(]-\infty, a])$. The bulk of the
  proof consists in showing that these connectivities are the same. When one
  does not assume that $i=2$ but \revbis{does assume boundedness}, one can take
  for $\map{\phi}{1}$ a linear projection, so that its level sets are
  hyperplanes. In this context, the proof in~\cite{SS2011} also introduces 
  ingredients such as Thom's isotopy lemma, which can be used thanks to the
  boundedness assumption. Dropping the boundedness assumption makes these steps
  more difficult and requires us to use a quadratic form for $\map{\phi}{1}$
  to ensure a properness property. This in turn makes the
  geometric analysis more involved since now, the level sets of $\map{\phi}{1}$
  are not hyperplanes anymore.}

\paragraph*{\rev{Structure of the paper}} \rev{Section~\ref{sec:preliminaries} 
provides the necessary background on
  algebraic sets and polar varieties needed to follow the proof of
  Theorem~\ref{thm:mainresult}. Section~\ref{sec:auxiliary} proves two
  auxiliary results which analyze the connectivity of fibers of some polynomial
  maps. These are used in the proof of Theorem~\ref{thm:mainresult}, which is
  given in Section~\ref{sec:proof}. Finally, in Section~\ref{sec:perspectives},
  we sketch how Theorem~\ref{thm:mainresult} will be used to design new roadmap
  algorithms in upcoming work.}

 \section{Preliminaries}\label{sec:preliminaries}
\paragraph*{Basic properties of algebraic sets}

Recall that given a finite set of polynomials $\gg \subset \CC[x_1,\dotsc,x_n]$ 
we denote by $\V(\gg) \subset \CC^n$ the algebraic set defined as the vanishing 
locus of $\gg$.
For $\yy \in \CC^n$, we denote by $\jac_\yy(\gg)$ the Jacobian matrix of 
$\gg$ evaluated at $\yy$. 
Conversely, given an algebraic set $V\subset \CC^n$, we denote by $\I(V)$ the 
\emph{ideal of $V$}, that is the ideal of $\CC[x_1,\dotsc,x_n]$ of polynomials 
vanishing on $V$. 
Such an ideal is finitely generated by the Hilbert basis theorem.

Let $X \subset \CC^n$ and $Y \subset \CC^m$ be algebraic sets and $\KK\subset 
\CC$ be a subfield. 
A map $\alpha \colon X \rightarrow Y$ is a \emph{regular} map defined over 
$\KK$ if there exists $(f_1,\dotsc,f_m) \subset \KK[x_1,\dotsc,x_n]$ such that 
$\alpha(\yy) = (f_1(\yy), \dotsc, f_m(\yy))$ for all $\yy \in X$.
A regular map $\alpha\colon X\to Y$ is an \emph{isomorphism} defined over $\KK$ 
if there exists a regular map $\beta\colon Y \to X$, defined over $\KK$, such 
that $\alpha \circ \beta = \id_Y$ and $\beta \circ \alpha = \id_X$, where 
$\id_Z\colon Z \rightarrow Z$ is the identity map on $Z$.
We refer to \cite{Sh1994} for further details on these notions.
The following result is straightforward.
\begin{lemma}\label{lem:isoconnected}
Let $Y \subset \CC^n$ and $Z \subset \CC^m$ be two algebraic sets.
Let $\alpha\colon Y \rightarrow  Z$ be an isomorphism of algebraic sets defined 
over $\RR$. Then the \SAC subsets of $Y\cap \RR^n$ and $Z\cap \RR^m$ are in 
correspondence through $\alpha$.
\end{lemma}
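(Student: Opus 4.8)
The plan is to observe that an isomorphism of algebraic sets defined over $\RR$ restricts to a semi-algebraic homeomorphism between the real traces, and then to invoke the elementary fact that semi-algebraic connectedness is preserved by semi-algebraic homeomorphisms.

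First I would unwind the definition: since $\alpha\colon Y\to Z$ is an isomorphism defined over $\RR$, there is a regular inverse $\beta=\alpha^{-1}\colon Z\to Y$, also defined over $\RR$, with $\alpha\circ\beta=\id_Z$ and $\beta\circ\alpha=\id_Y$. Writing $\alpha=(f_1,\dots,f_m)$ with $f_j\in\RR[x_1,\dots,x_n]$ shows that $\alpha$ maps $\RR^n$ into $\RR^m$; combined with $\alpha(Y)=Z$, this means $\alpha$ restricts to a map $Y\cap\RR^n\to Z\cap\RR^m$, and symmetrically $\beta$ restricts to a map $Z\cap\RR^m\to Y\cap\RR^n$, the two restrictions being mutually inverse. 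Both restrictions are polynomial, hence continuous semi-algebraic maps for the Euclidean topology, so $\alpha$ induces a semi-algebraic homeomorphism between $Y\cap\RR^n$ and $Z\cap\RR^m$.

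Then I would check that $S\mapsto\alpha(S)$ realizes the claimed correspondence. For $S\subset Y\cap\RR^n$ a \SAC set, $\alpha(S)$ is a \SA subset of $Z\cap\RR^m$ by the Tarski--Seidenberg principle (image of a \SA set under a \SA map), and it is \SAC: given $\zz,\zz'\in\alpha(S)$, pick $\yy,\yy'\in S$ with $\alpha(\yy)=\zz$ and $\alpha(\yy')=\zz'$, and a \SA path $\gamma\colon[0,1]\to S$ from $\yy$ to $\yy'$; then $\alpha\circ\gamma\colon[0,1]\to\alpha(S)$ is a continuous \SA function from $\zz$ to $\zz'$, since compositions of continuous \SA maps are continuous and \SA. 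Applying the same argument to $\beta$ gives the reverse implication and shows that $T\mapsto\beta(T)$ is the inverse correspondence. I do not expect a genuine obstacle here: the only point deserving a word of care is that it is precisely the hypothesis that $\alpha$ (and $\beta$) is defined over $\RR$, i.e. has coefficients in $\RR$, that makes it send real points to real points, so that the restrictions to the real traces are well defined.
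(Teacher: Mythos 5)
Your proof is correct and takes essentially the same route the authors intended (the paper labels the lemma ``straightforward'' and omits the argument, but the idea is exactly to transport semi-algebraic paths through $\alpha$ and $\alpha^{-1}$, using that both are regular maps with real coefficients and hence send real points to real points). Your added remarks—that it is precisely the hypothesis ``defined over $\RR$'' that makes the restrictions to real traces well defined, and that images of semi-algebraic sets under semi-algebraic maps remain semi-algebraic—are the right points to flag and are consistent with the paper's treatment.
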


\paragraph*{Critical points of a polynomial map}

The following lemma from \cite[Lemma A.2]{SS2017} provides an algebraic 
characterization of critical points.
\begin{lemma}[Rank characterization]\label{lem:caraccritrank}
 Let $Z \subset \CC^n$ be a $d$-equidimensional algebraic set and 
$\gg=(g_1,\dotsc,g_p)$ be generators of $\I(Z)$.
Let $\bphi\colon Z \rightarrow \CC^m$ be a polynomial map, then the following 
holds.
\begin{align*}
 \Wo(\bphi, Z) &= \left\lbrace \yy \in Z \mid
 \begin{array}{l}
    \rank(\jac_\yy(\gg)) = n-d \\
    \text{and} \quad \rank(\jac_\yy([\gg,\bphi])) < n-d+m
 \end{array}\right\rbrace;\\
 K(\bphi, Z) &= \lbrace \yy \in Z \mid \rank(\jac_\yy([\gg,\bphi])) < 
 n-d+m \rbrace.
\end{align*}
\end{lemma}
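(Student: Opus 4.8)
\medskip

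\noindent\textbf{Proof proposal.} The statement consists of two rank identities, and the plan is to reduce both to a single linear-algebra fact combined with the classical description of regular points. First recall that, since $Z$ is $d$-equidimensional and $\gg$ generates the radical ideal $\I(Z)$, the Jacobian criterion gives $\rank(\jac_\yy(\gg)) \leq n-d$ for every $\yy \in Z$, with equality precisely on $\reg(Z)$; moreover, for $\yy \in \reg(Z)$ the tangent space is $T_\yy Z = \ker \jac_\yy(\gg)$, of dimension $d$. Finally, for $\yy \in \reg(Z)$ the differential of the restriction of $\bphi$ to $Z$ is the restriction of the linear map $\jac_\yy(\bphi)\colon \CC^n \to \CC^m$ to the subspace $T_\yy Z$, so that $\rank(d_\yy\bphi|_{T_\yy Z}) = \rank(\jac_\yy(\bphi)|_{\ker \jac_\yy(\gg)})$.

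The linear-algebra fact is: for any matrices $A \in \CC^{p\times n}$ and $B \in \CC^{m\times n}$,
\[
  \rank\begin{pmatrix} A \\ B \end{pmatrix} = \rank(A) + \rank(B|_{\ker A}),
\]
where $B|_{\ker A}$ denotes the restriction of $B$ to $\ker A$. Indeed $\ker\begin{pmatrix} A \\ B\end{pmatrix} = \ker A \cap \ker B = \ker(B|_{\ker A})$, so $\dim(\ker A \cap \ker B) = (n - \rank A) - \rank(B|_{\ker A})$ and the claim follows from the rank--nullity theorem. Applying this with $A = \jac_\yy(\gg)$ and $B = \jac_\yy(\bphi)$ at a point $\yy \in \reg(Z)$ yields
\[
  \rank(\jac_\yy([\gg,\bphi])) = (n-d) + \rank(d_\yy\bphi|_{T_\yy Z}).
\]
Hence, for $\yy \in \reg(Z)$, the differential $d_\yy\bphi|_{T_\yy Z}$ fails to be surjective --- i.e.\ $\yy \in \Wo(\bphi, Z)$ --- if and only if $\rank(\jac_\yy([\gg,\bphi])) < n-d+m$. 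Together with $\reg(Z) = \{\yy \in Z : \rank(\jac_\yy(\gg)) = n-d\}$, this is exactly the asserted description of $\Wo(\bphi,Z)$.

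For the second identity, decompose $Z = \reg(Z) \sqcup \sing(Z)$. Intersecting the set $\{\yy \in Z : \rank(\jac_\yy([\gg,\bphi])) < n-d+m\}$ with $\reg(Z)$ gives $\Wo(\bphi,Z)$ by the previous paragraph. On $\sing(Z)$ one has $\rank(\jac_\yy(\gg)) \leq n-d-1$, hence $\rank(\jac_\yy([\gg,\bphi])) \leq \rank(\jac_\yy(\gg)) + m \leq n-d-1+m < n-d+m$, so $\sing(Z)$ is entirely contained in that set. Taking the union and recalling $K(\bphi,Z) = \Wo(\bphi,Z) \cup \sing(Z)$ yields the claim.

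The only non-formal ingredient is the preliminary paragraph: that $\rank(\jac_\yy(\gg)) \leq n-d$ at every point of a $d$-equidimensional set and that $T_\yy Z = \ker\jac_\yy(\gg)$ at regular points. This is the standard Jacobian criterion for radical ideals (see \cite{Sh1994}); equidimensionality is what prevents the rank from exceeding $n-d$, since a point lying on the overlap of two components, or on a component of dimension $<d$, is singular and has strictly larger tangent space. I would also note that although $\rank(\jac_\yy([\gg,\bphi]))$ at a singular point might a priori depend on the polynomial representatives chosen for $\bphi$, the inequality used there holds for every such choice, so no well-definedness issue arises. I do not expect any serious obstacle; the one point requiring care is making the identification of $d_\yy\bphi|_{T_\yy Z}$ with $\jac_\yy(\bphi)|_{\ker\jac_\yy(\gg)}$ precise, which is a direct application of the chain rule to the inclusion $Z \hookrightarrow \CC^n$.
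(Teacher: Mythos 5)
Your proposal is correct and follows essentially the same route as the paper's argument (which the paper defers to \cite[Lemma A.2]{SS2017}): identify $T_\yy Z=\ker\jac_\yy(\gg)$ at regular points via the Jacobian criterion, observe that $\ker\jac_\yy([\gg,\bphi])=\ker\jac_\yy(\gg)\cap\ker\jac_\yy(\bphi)$ is the kernel of $d_\yy\bphi$ restricted to $T_\yy Z$, and conclude by rank--nullity, with singular points handled by the bound $\rank\jac_\yy([\gg,\bphi])\leq\rank\jac_\yy(\gg)+m<n-d+m$. Your packaging of the rank--nullity step as the stacked-matrix identity $\rank\bigl[\begin{smallmatrix}A\\ B\end{smallmatrix}\bigr]=\rank(A)+\rank(B|_{\ker A})$ is just a cleaner statement of the same computation, so there is nothing to add.
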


Let us present a direct consequence of this result, which gives a more 
effective criterion for the singular points of a polynomial map.
Let $\bphi = (\phi_1, \ldots, \phi_n) \subset \CC[x_1,\dotsc,x_n]$ and
$\map{\phi}{i}$ be the deduced map defined as in \eqref{eqn:defphii} for 
$1\leq i \leq n$. 

\begin{lemma}\label{lem:caraccritminor}
Let $Z \subset \CC^n$ be a $d$-equidimensional variety and 
$\gg$ be a finite set of generators of $\I(Z)$. 

Then for $1\leq i\leq n$, $K(\bphi_i, Z)$ is the algebraic subset
of $Z$ defined by the vanishing of $\gg$ and the $(p+i)$-minors of 
$\jac([\gg,\bphi_i])$, where $p=n-d$.
\end{lemma}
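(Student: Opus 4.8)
```latex
The plan is to deduce Lemma~\ref{lem:caraccritminor} directly from the rank
characterization in Lemma~\ref{lem:caraccritrank}, applied to the map
$\map{\phi}{i}\colon Z\to\CC^i$. Since $\map{\phi}{i}$ has $i$ components, that
lemma gives
\[
  K(\map{\phi}{i}, Z) = \{\yy\in Z \mid \rank\bigl(\jac_\yy([\gg,\map{\phi}{i}])\bigr)
  < n-d+i\},
\]
and writing $p = n-d$, this is exactly the locus where the $p\times n$-augmented
Jacobian matrix $\jac([\gg,\map{\phi}{i}])$ (which has $p+i$ rows, one block of
$p$ rows from $\gg$ and $i$ rows from $\phi_1,\dots,\phi_i$) has rank at most
$p+i-1$.

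The one genuine point to make is the standard linear-algebra translation
``$\rank M < r$ iff all $r$-minors of $M$ vanish'': a matrix has rank strictly
less than $p+i$ precisely when every $(p+i)\times(p+i)$ submatrix is singular,
i.e.\ when all $(p+i)$-minors of $\jac([\gg,\map{\phi}{i}])$ are zero. Hence the
set $K(\map{\phi}{i},Z)$ is cut out inside $Z$ by the simultaneous vanishing of
the generators $\gg$ of $\I(Z)$ together with all $(p+i)$-minors of
$\jac([\gg,\map{\phi}{i}])$. Since these minors are polynomials in
$\CC[x_1,\dots,x_n]$, this exhibits $K(\map{\phi}{i},Z)$ as an algebraic subset
of $Z$, which is the assertion.

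I would also remark, to match the statement's phrasing ``$K(\bphi_i,Z)$'' (which
I read as $K(\map{\phi}{i},Z)$ in the notation of \eqref{eqn:defphii}), that the
condition $\rank\jac_\yy(\gg)\le p-1$ would force $\yy\in\sing(Z)$ and is already
implied by the vanishing of all $(p+i)$-minors together with the fact that $Z$
is $d$-equidimensional, so no separate description of $\sing(Z)$ is needed: the
minor condition simultaneously captures both the regular critical points of
$\map{\phi}{i}$ and the singular points of $Z$, consistently with the second
formula of Lemma~\ref{lem:caraccritrank}.

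There is no real obstacle here; the only thing to be careful about is
bookkeeping the matrix sizes (the augmented Jacobian is $(p+i)\times n$, so
$(p+i)$-minors are the maximal-size minors among the relevant ones) and invoking
Lemma~\ref{lem:caraccritrank} with $m=i$ rather than with a general $m$. Thus
the proof is essentially a one-line specialization of the rank characterization
followed by the elementary equivalence between a rank bound and the vanishing of
minors of the corresponding size.
```
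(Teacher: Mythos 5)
Your argument is correct and is essentially the paper's own proof: specialize Lemma~\ref{lem:caraccritrank} to $m=i$, then convert the rank bound $\rank < p+i$ into the vanishing of all $(p+i)$-minors. One small inaccuracy in your bookkeeping remark: $\gg$ is only assumed to be \emph{a} finite generating set of $\I(Z)$, so it need not consist of exactly $p=n-d$ polynomials, and $\jac([\gg,\map{\phi}{i}])$ may have more than $p+i$ rows, in which case the $(p+i)$-minors are not maximal-size minors; this does not affect the equivalence between the rank condition and the vanishing of $(p+i)$-minors, so the proof stands.
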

\begin{proof}
One directly deduces from Lemma~\ref{lem:caraccritrank}
that $K(\bphi_i, Z)$ is exactly the intersection of $Z$, the zero-set of $\gg$, 
with the set of points $\yy\in\CC^n$ where $\rank(\jac_\yy([\gg,\bphi_i]))<p+i$.
The latter set is the zero-set of the 
$(p+i)$-minors of $\jac([\gg,\bphi_i])$.
\end{proof}

\begin{definition}[Polar variety]\label{def:polarvariety}
  Let $Z\subset \CC^n$ be a $d$-equidimensional algebraic set, and let $1\leq i
  \leq n$. \rev{As above, let $\bphi = (\phi_1, \ldots, \phi_n) \subset
    \CC[x_1,\dotsc,x_n]$ and $\map{\phi}{i}$ be the induced map, defined by
    $(\phi_1, \ldots, \phi_i)$.} We denote by $W(\bphi_i,Z)$
  the Zariski closure of $\Wo(\bphi_i,Z)$. It is called a \emph{generalized
    polar variety of $Z$}. Remark that
\[ 
 \Wo(\bphi_i,Z) \:\subset\: W(\bphi_i,Z) \:\subset\: K(\bphi_i,Z) \:\subset\: Z
\] 
by minimality of the Zariski closure. 
Hence $K(\bphi_i,Z) = W(\bphi_i,Z) \cup \sing(Z)$ but the union is not 
necessarily disjoint.
\end{definition}

 \section{Connectivity and critical values}\label{sec:auxiliary}
In this section we consider for $n \geq 1$ an equidimensional algebraic set $Z
\subset \CC^n$ of dimension $d>0$. We are going to prove two main connectivity
results on the \SACCs of \rev{$Z\cap \RR^{n}$} through some polynomial map.
These results, along with \rev{ingredients} of Morse theory \rev{such as
  critical loci and critical values of polynomial maps}, will be essential in
the proof of Theorem~\ref{thm:mainresult}. Most of the results presented here
are generalizations of those given in \cite[Section 3]{SS2011} in the
unbounded case, replacing projections by suitable polynomial maps.

\subsection{Connectivity changes at critical values}
The main result of this \rev{subsection} is to prove the following proposition,
which deals with the connectivity changes of \SACCs in the neighbourhood of
singular values of a polynomial map.

\rev{Let $X$ be a subset of $\CC^n$, $U \subset \RR$ and $f \in 
\RR[x_1,\dotsc,x_n]$.
With a slight abuse of notation, we still denote by $f$ the polynomial map 
$\yy\in\CC^n\mapsto f(\yy)\in \CC$, and we write $X\rinto[f]{U} = X \cap 
f^{-1}(U) \cap \RR^n$.
In particular if $u \in \RR$ we note
\[
    X\rinf[f]{u} = X\rinto[f]{]-\infty,u[},
    \quad X\rinfeq[f]{u} = X\rinto[f]{]-\infty,u]} 
    \et X\req[f]{u}=X\rinto[f]{\{u\}}.
\]}\vspace*{-1em}
\begin{proposition}\label{prop:firstresult}
  Let $\bphi\colon \CC^n \rightarrow \CC$ \rev{be} a regular map defined over
  $\RR$. Let $A \subset \RR^k$ be a \SAC semi-algebraic set, and $u \in \RR$ and
\[
  \gamma\colon A \rightarrow Z\rinfeq{u} - \left(Z\req{u} \cap K(\bphi,Z)\right)
\]
be a continuous semi-algebraic map. 
Then there exists a unique \SACC $B$ of $Z\rinf{u}$ such that $\gamma(A) 
\subset \overline{B}$.
\end{proposition}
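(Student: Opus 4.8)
The plan is to show that $\gamma(A)$, being connected, must lie in the closure of a single semi-algebraically connected component of $Z\rinf{u}$; the content is that it lies in the closure of \emph{some} component (uniqueness is then immediate since distinct \SACCs of $Z\rinf{u}$ have disjoint closures when restricted appropriately — actually this needs care, see below). First I would reduce to the case where $A$ is a single point, by the following argument: the set $Z\rinfeq{u} - (Z\req{u}\cap K(\bphi,Z))$ is a \SA set, and the map $\gamma$ is continuous \SA on the \SAC set $A$, so $\gamma(A)$ is \SAC; hence it suffices to prove that for each individual point $\yy\in \gamma(A)$ there is a \SACC $B_\yy$ of $Z\rinf{u}$ with $\yy\in\overline{B_\yy}$, and then to check that $\yy\mapsto B_\yy$ is locally constant on $\gamma(A)$ — since $\gamma(A)$ is connected this forces a single $B$.

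The heart of the matter is the local claim: for $\yy\in Z\rinfeq{u}$ with $\yy\notin Z\req{u}\cap K(\bphi,Z)$, there is a unique \SACC $B$ of $Z\rinf{u}$ accumulating at $\yy$, and the assignment is locally constant. There are two cases. If $\bphi(\yy)<u$, then $\yy$ already lies in $Z\rinf{u}$, which is open in $Z\cap\RR^n$ relative to the condition $\bphi<u$; here $B$ is simply the \SACC of $Z\rinf{u}$ containing $\yy$, and local constancy near such $\yy$ is clear since $Z\rinf{u}$ is locally connected (being a \SA set) and these are genuine components. The delicate case is $\bphi(\yy)=u$ with $\yy\in \reg(Z)$ and $d_\yy\bphi$ surjective, i.e. $\yy$ is a regular value situation for $\bphi|_Z$ at that point. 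Here I would use the implicit function theorem / local submersion normal form for the \SA map $\bphi$ restricted to the manifold $\reg(Z)$ near $\yy$: on a small \SA neighbourhood $\Omega$ of $\yy$ in $Z\cap\RR^n$, $\bphi$ looks like a coordinate projection, so $\Omega\cap\{\bphi<u\}$ is \SA-homeomorphic to a product of a ball with a half-open interval, in particular \SAC and non-empty, and it is contained in exactly one \SACC $B$ of $Z\rinf{u}$. Shrinking $\Omega$ if necessary shows this $B$ is the same for all nearby points of $Z\rinfeq{u}\setminus K(\bphi,Z)$, giving local constancy; this also handles the transition between the two cases since a point with $\bphi(\yy)<u$ sufficiently close to a boundary point falls inside the same $\Omega$.

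To make the "locally constant" argument rigorous over all of $\gamma(A)$, I would cover $\gamma(A)$ — a closed bounded \SA set if one first intersects with a large ball, or in general using local finiteness of \SA families — by finitely many such neighbourhoods $\Omega_1,\dots,\Omega_N$ and use that $\gamma(A)$ is \SAC to chain the assignments together; alternatively, phrase it as: the set $\{\yy\in\gamma(A): \yy\in\overline{B}\}$ is both open and closed in $\gamma(A)$ for each fixed \SACC $B$ of $Z\rinf{u}$, using the local normal form to get openness and the fact that $Z\rinf{u}$ has finitely many \SACCs with pairwise disjoint closures \emph{inside} $Z\rinfeq{u}\setminus K(\bphi,Z)$ to get that these sets partition $\gamma(A)$, hence each is also closed. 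I expect the main obstacle to be precisely this last disjointness statement: two distinct \SACCs $B,B'$ of $Z\rinf{u}$ could a priori both accumulate at a common point $\yy$ of $Z\req{u}$, and ruling this out is exactly where the hypothesis $\yy\notin K(\bphi,Z)$ must be used — via the submersion normal form, which forces the local picture of $Z\rinf{u}$ near such $\yy$ to be connected. Establishing a clean \SA local triviality statement for the sublevel set near a non-critical point, uniformly enough to conclude, is the technical crux; the rest is bookkeeping with \SA connectedness.
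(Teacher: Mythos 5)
Your proposal follows essentially the same route as the paper: use the semi-algebraic implicit function theorem to show that each point of $Z\req{u}-K(\bphi,Z)$ lies in the closure of a unique \SACC of $Z\rinf{u}$ and that this assignment is locally constant (this is the content of Lemmas~\ref{lem:implicitparam}, \ref{lem:voisinageSAC} and \ref{lem:adhcomponent}), then globalize by a locally-constant-implies-constant argument on a \SAC set. The only organizational difference is that the paper runs the connectedness argument on the domain $A$, decomposing $\gamma^{-1}(Z\req{u})$ into closed pieces and its complement into open pieces and invoking a separate Lemma~\ref{lem:firstresultweak} for the fiber part, whereas you work directly on $\gamma(A)$; this is cosmetic, and you correctly identify the unique-accumulating-component lemma as the technical crux.
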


\rev{Let us start by recalling a definition from \cite[Section 3.5]{BPR2006}.
  Let $U \subset \RR^k$ a semi-algebraic open set and $V \subset \RR^l$ a
  semi-algebraic set. The set of semi-algebraic functions from $U$ to $V$
  which admit partial derivatives up to order $m\geq 0$ is denoted by
  $\Spaz^m(U,V)$. The set $\Spaz ^\infty(U,V)$ is the intersection of all the
  sets $\Spaz^m(U,V)$ for $m \geq 0$. The ring $\Spaz^\infty(U,\RR)$ is called
  the ring of \emph{Nash functions}. }

\begin{notation}
  In this \rev{subsection} we fix a regular (polynomial) map
  $\bphi\colon \CC^n \rightarrow \CC$ defined over $\RR$. With a
  slight abuse of notation, the underlying polynomial in
  $\RR[x_1,\dotsc,x_n]$ will be denoted in the same manner.
\end{notation}

We start by proving an extended version of \cite[Lemma
  6]{SS2011}. This can be seen as the founding stone of all the
connectivity results presented in this paper. \rev{
  For any $\yy\in Z\cap\RR^n-K(\bphi,Z)$, it shows the
  existence of a regular map $\balpha : Z \to \CC^{n+1}$ such that
  $Z$ and $\balpha(Z)$ are isomorphic, with $\pi_1 \circ \balpha =
  \bphi$ on $\balpha(Z)$ and that 
  there is an open Euclidean neighborhood $N$ of $\balpha(\yy)$ such
  that the implicit function theorem applies to $\balpha(Z)\cap N$.}
(Recall that an open Euclidean neighborhood of a point $\yy \in \RR^n$
is any subset of $\RR^n$ that contains $\yy$ and is open for the
Euclidean topology on $\RR^n$.)

\begin{lemma}\label{lem:implicitparam}
  Let $\yy = (\yy_1,\dotsc,\yy_n)$ be in $Z \cap \RR^n - K(\bphi, Z)$. Then,
  there exists a \rev{regular} map $\balpha\colon Z \rightarrow \CC^{n+1}$
  such that the following holds :
\begin{enumerate}[label=\alph*)] 
\item there exist open Euclidean neighborhoods $N'\subset \RR^d$ of 
$\pi_d(\alpha(\yy))$ and $N \subset \RR^{n+1}$ of $\balpha(\yy)$, and  a 
continuous semi-algebraic map $\ff \colon N' \to \RR^{n+1-d}$
such that:
\[
 \balpha(Z) \cap N = \big\{ (\zz',\ff(\zz')) \mid \zz' \in N' \big\};
\]
 \item $\balpha\colon Z\to\balpha(Z)$ is an isomorphism of algebraic sets 
defined over $\RR$;
 \item $\bphi \circ \balpha^{-1} = \pi_1$ on $\balpha(Z)$.
\end{enumerate}
\end{lemma}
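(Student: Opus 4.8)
\textbf{Plan of proof.} The idea is to take for $\balpha$ the graph map of $\bphi$, the remaining coordinates being a suitable permutation of $x_1,\dots,x_n$; properties b) and c) are then immediate, and the whole content is in a), which follows from the semi-algebraic implicit function theorem once the permutation is chosen so that the first $d$ coordinates of $\CC^{n+1}$ restrict to local coordinates on $\balpha(Z)$ near $\balpha(\yy)$. First I would use the hypothesis $\yy\notin K(\bphi,Z)$: it says $\yy\in\reg(Z)$, so $\dim T_\yy Z = d$, and $d_\yy\bphi$ is onto $\CC$ on $T_\yy Z$, so that $H := \ker\big(d_\yy\bphi|_{T_\yy Z}\big)$ is a $\CC$-subspace of $\CC^n$ of dimension $d-1$. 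Writing a basis of $H$ as the rows of a $(d-1)\times n$ matrix of rank $d-1$, one obtains a set of indices $I=\{i_1<\dots<i_{d-1}\}\subset\{1,\dots,n\}$ such that $\zz\mapsto(z_{i_1},\dots,z_{i_{d-1}})$ restricts to an isomorphism $H\xrightarrow{\sim}\CC^{d-1}$. Fix a permutation $\sigma$ of $\{1,\dots,n\}$ with $\sigma(\{1,\dots,d-1\})=I$ and set
\[
  \balpha\colon Z\longrightarrow\CC^{n+1},\qquad \balpha(\zz)=\big(\bphi(\zz),\,z_{\sigma(1)},\dots,z_{\sigma(n)}\big),
\]
a regular map defined over $\RR$. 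The linear map $\bm\beta\colon\ww\mapsto(w_{1+\sigma^{-1}(1)},\dots,w_{1+\sigma^{-1}(n)})$, defined over $\QQ$, satisfies $\bm\beta\circ\balpha=\id_Z$, so $\balpha$ is injective, $\balpha(Z)=\{\ww : \bm\beta(\ww)\in Z,\ w_1=\bphi(\bm\beta(\ww))\}$ is an algebraic set and $\bm\beta|_{\balpha(Z)}$ is a regular inverse of $\balpha$; hence $\balpha\colon Z\to\balpha(Z)$ is an isomorphism of algebraic sets defined over $\RR$, which is b). For c), if $\ww=\balpha(\zz)$ then $\bphi(\balpha^{-1}(\ww))=\bphi(\zz)=w_1=\pi_1(\ww)$.

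For a), put $\bm q=\balpha(\yy)$. Since $\balpha$ is an isomorphism of algebraic sets, $\bm q\in\reg(\balpha(Z))$, $\balpha(Z)$ is $d$-equidimensional, and $d_\yy\balpha\colon T_\yy Z\to T_{\bm q}\balpha(Z)$ is a linear isomorphism. A direct computation gives $\pi_d\circ d_\yy\balpha(\vv)=(d_\yy\bphi(\vv),\,v_{i_1},\dots,v_{i_{d-1}})$, which by the choice of $I$ is an isomorphism $T_\yy Z\xrightarrow{\sim}\CC^d$ (decompose $T_\yy Z=\CC\uu\oplus H$ with $d_\yy\bphi(\uu)=1$ and use that the $I$-projection is an isomorphism on $H$); hence $\pi_d$ restricts to an isomorphism $T_{\bm q}\balpha(Z)\xrightarrow{\sim}\CC^d$. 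Let $\tilde\gg$ generate $\I(\balpha(Z))$; by the Jacobian criterion $\rank\jac_{\bm q}(\tilde\gg)=n+1-d$ and $T_{\bm q}\balpha(Z)=\ker\jac_{\bm q}(\tilde\gg)$, and a routine rank computation shows that $\pi_d$ being an isomorphism on this kernel is equivalent to the submatrix of $\jac_{\bm q}(\tilde\gg)$ formed by the columns of partial derivatives with respect to $w_{d+1},\dots,w_{n+1}$ having rank $n+1-d$. Choosing $n+1-d$ rows realizing this rank yields a subfamily $\hh=(h_1,\dots,h_{n+1-d})$ of $\tilde\gg$ whose Jacobian with respect to $(w_{d+1},\dots,w_{n+1})$ is invertible at $\bm q$.

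The semi-algebraic implicit function theorem (see \cite{BPR2006}) then provides connected semi-algebraic open neighborhoods $N'\subset\RR^d$ of $\pi_d(\bm q)$ and $N''\subset\RR^{n+1-d}$ of $(\bm q_{d+1},\dots,\bm q_{n+1})$ and a map $\ff\in\Spaz^\infty(N',N'')$ (in particular continuous) with $\ff(\pi_d(\bm q))=(\bm q_{d+1},\dots,\bm q_{n+1})$ such that, for $(\ww',\ww'')\in N'\times N''$, one has $\hh(\ww',\ww'')=0$ iff $\ww''=\ff(\ww')$. Set $N=N'\times N''$. As every point of $\balpha(Z)$ satisfies $\hh=0$, we get $\balpha(Z)\cap N\subseteq\{(\ww',\ff(\ww')) : \ww'\in N'\}$; conversely $\balpha(Z)\cap\RR^{n+1}$ is, near the smooth real point $\bm q$, a $d$-dimensional Nash submanifold of $\RR^{n+1}$ (real trace at a regular point of the $d$-equidimensional set $\balpha(Z)$), hence a full-dimensional, therefore open, subset of the $d$-dimensional Nash manifold $\{(\ww',\ff(\ww')) : \ww'\in N'\}$; being also closed in $N$ and containing $\bm q$, after shrinking $N'$ to a connected neighborhood it equals that graph. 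This gives a), with $N'$, $N$ and $\ff$ as required, and concludes the proof.

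\textbf{Main obstacle.} Everything in b) and c) is formal; the only genuine work is in a): turning the differential condition ``$d_\yy\bphi$ surjective on $T_\yy Z$'' into the choice of the index set $I$ and then into the invertibility of an $(n+1-d)\times(n+1-d)$ Jacobian minor of $\I(\balpha(Z))$ at $\balpha(\yy)$ so that the implicit function theorem applies, together with the dimension argument ensuring that the graph it produces is locally $\balpha(Z)$ itself and not merely the a priori larger zero set of the chosen subfamily $\hh$.
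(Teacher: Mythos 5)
Your proof is correct and follows the same overall strategy as the paper: define $\balpha$ as the graph map $\zz\mapsto(\bphi(\zz),\sigma\cdot\zz)$ for a well-chosen permutation $\sigma$, reduce $b)$ and $c)$ to trivial checks, and obtain $a)$ from the semi-algebraic implicit function theorem. The one place where the two arguments genuinely differ is in how the implicit function theorem is set up. You take generators $\tilde{\gg}$ of the ideal $\I(\balpha(Z))$, extract a subfamily $\hh$ of $n+1-d$ polynomials with invertible Jacobian submatrix at $\balpha(\yy)$, and then must close the gap between $\V(\hh)$ (which is a priori strictly larger than $\balpha(Z)$) and $\balpha(Z)$ itself via the supplementary open-and-closed dimension argument on real traces. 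The paper instead invokes \cite[Proposition 3.3.10]{BCR2013} to produce, on an open Euclidean neighborhood $\Opaz_{\yy}$ of $\yy$, a system of exactly $n-d$ polynomials $\gg$ with $Z\cap\Opaz_{\yy}=\V(\gg)\cap\Opaz_{\yy}$ and $\jac_{\yy}(\gg)$ of full rank; adjoining the graph equation $\bphi-x_0$ yields a square-to-the-purpose system $\hh$ of $n+1-d$ polynomials in $n+1$ variables whose local zero set is exactly $\balpha(Z)$ near $\balpha(\yy)$, so the implicit function theorem gives the equality of $a)$ directly and no dimension argument is needed. Both routes are sound; the paper's is a bit shorter and avoids the shrinking-and-connectedness bookkeeping. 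Finally, your criterion for picking $\sigma$ (the $I$-coordinate projection is an isomorphism on $H=\ker(d_{\yy}\bphi|_{T_{\yy}Z})$) is the dual, equivalent reformulation of the paper's criterion (a nonzero $(n-d+1)$-minor of $\jac_{\yy}([\gg,\bphi])$ using columns $\sigma(d),\dots,\sigma(n)$), so that step matches in substance.
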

\begin{proof}
  \rev{Let $\Opaz_{\yy} \subset \RR^n$ be an open Euclidean neighborhood of 
$\yy$
and let $\gg = (g_1, \dotsc, g_{n-d})$ be an $(n-d)$-tuple of polynomials in
    $\CC[x_1,\dotsc,x_n]$, such that $Z \cap \Opaz_{\yy} = \V(\gg) \cap
    \Opaz_{\yy}$ and $\jac_\yy(\gg)$ has full rank $n-d$. Such a $\Opaz_{\yy}$
    and $\gg$ are given by \cite[Proposition 3.3.10]{BCR2013} since $\yy$ is in
    $\reg(Z)$. Also, since $\yy \notin W(\bphi,Z)$,} there exists a non-zero
  $(n-d+1)$-minor of $\jac_\yy([\gg,\bphi])$ by Lemma~\ref{lem:caraccritminor}.
  Therefore, there exists a permutation $\sigma$ of $\{1,\dotsc,n\}$ such that
  the matrix
\[
    \left[
    \begin{array}{c}
    \frac{\partial \gg}{\partial x_{\sigma(i)}}(\yy)\\
    \frac{\partial \bphi}{\partial x_{\sigma(i)}}(\yy) 
    \end{array}
    \right]_{d \leq j \leq n}
\]
is invertible. Let $x_0$ be a new variable and define $\hh$ as the following 
finite subset of polynomials of $\RR[x_0,x_1,\dotsc,x_n]$,
\[
 \hh = (\tilde{\gg},\tilde{\bphi})= 
\left( \gg(\sigma^{-1}\cdot(x_1,\dotsc,x_n)), 
\bphi(\sigma^{-1}\cdot(x_1,\dotsc,x_n))-x_0 \right)
\]
where $\tau \cdot (x_1,\dotsc,x_n) = (x_{\tau(1)},\dotsc,x_{\tau(n)})$ for any 
permutation $\tau$ of $\{1,\dotsc,n\}$. 
Hence,
\[
  \V(\hh)\cap (\RR \times \Opaz_{\yy}) = \big\{
  (\bphi(\rev{\zz}),\sigma\cdot\rev{\zz}) \mid \rev{\zz} \in Z \cap
  \Opaz_{\yy}\big\} \subset \RR^{n+1}.
\]
By the chain rule, for any $1 \leq j \leq n$ and $\zz \in \RR^n$,
\[
 \frac{\partial \tilde{\gg}}{\partial x_j}(\bphi(\zz),\zz) = \frac{\partial 
\gg}{\partial x_{\sigma(j)}}(\sigma^{-1}\cdot \zz)
 \et 
 \frac{\partial \tilde{\bphi}}{\partial x_j}(\bphi(\zz),\zz) = \frac{\partial 
\bphi}{\partial x_{\sigma(j)}}(\sigma^{-1}\cdot \zz).
\]
Hence, for $\jac(\ff,i)$ the Jacobian matrix of $\ff$ with respect to 
$(x_{i+1},\dotsc,x_n)$, and $\tilde{\yy} = (\bphi(\yy),\sigma \cdot \yy)$,
\[
\jac_{\tilde{\yy}}(\hh,d-1) = 
\left[
\begin{array}{cc}
 \jac_{\tilde{\yy}}(\tilde{\gg},d-1)\\
 \jac_{\tilde{\yy}}(\tilde{\bphi},d-1)
\end{array}
\right]
=
\left[
\begin{array}{c}
\frac{\partial \gg}{\partial x_{\sigma(i)}}(\yy)\\
\frac{\partial \bphi}{\partial x_{\sigma(i)}}(\yy) 
\end{array}
\right]_{d \leq j \leq n},
\]
which is invertible by assumption on $\sigma$. 

Therefore, applying the semi-algebraic implicit function theorem \cite[Th 
3.30]{BPR2006} to $\hh$, there is an open Euclidean neighborhoods
$N'\subset \RR^{d}$ of $(\bphi(\yy),\yy')$ where 
$\yy'=(\yy_{\sigma(\ell)}, \, 1\leq \ell \leq d-1) $, an open Euclidean 
neighborhood $N'' \subset \RR^{n-d+1}$ of $\yy'' = (\yy_{\sigma(\ell)},\,d\leq 
\ell \leq n)$ and a map $\ff = (f_1, \dotsc, f_{n-d+1}) \in 
\Spaz^\infty(N',N'')$ (since $\bphi$ and the $g_i$'s are polynomials) such that:
\[
     \forall \, \zz = (\zz', \zz'') \in N' \times N'', \: \Big[ \hh(\zz)=0 
\Longleftrightarrow \zz'' = \ff(\zz') \Big] 
\]
Then, let $N = (N'\times N'') \cap (\RR\times\sigma\cdot \Opaz_{\yy}) \subset 
\RR^{n+1}$, the previous assertion becomes:
\begin{equation}\label{eqn:paramalpha}
    \big\{(\bphi(\zz),\sigma\cdot\zz) \mid \zz \in Z\big\} \cap N =
    \big\{ \left(\zz', \ff(\zz')\right) \: \mid \: \zz' \in N' \big\}
\end{equation}
Finally, we claim that taking $\balpha\colon \zz \in Z \mapsto
(\bphi(\zz),\sigma\cdot\zz)$ ends the proof. Indeed, by
equation~\eqref{eqn:paramalpha}, assertion $a)$ immediately holds since $N'$ and
$N$ are Euclidean open neighborhood of $\pi_d(\balpha(\yy))$ and $\balpha(\yy)$
respectively. Further, one checks that $\balpha$ is a Zariski isomorphism, of
inverse $\sigma^{-1}$ after projecting on the last $n$ coordinates, which proves
$b)$. Finally, one sees that \rev{$\pi_1 \circ \balpha = \bphi$} so that $c)$
holds as well.
\end{proof}
\begin{remark}
  The previous lemma shows in particular that $Z\cap\RR^n-K(\bphi,Z)$ is a Nash
  manifold (see \cite[Section 3.4]{BPR2006}) of dimension $d$, {\it i.e.}\  
locally
  $\Spaz^\infty$-diffeomorphic to $\RR^d$.
\end{remark}

\begin{lemma}\label{lem:voisinageSAC}
Let $\yy$ be in $Z \cap \RR^n - K(\bphi, Z)$ and $u = \bphi(\yy)$. 
Then there exists an open Euclidean neighborhood $N(\yy)$ of $\yy$ such that 
the following holds:
\begin{enumerate}[label=\alph*)]
 \item $N(\yy)$ is \SAC;
 \item $(Z \cap N(\yy))\rinf{u}$ is non-empty and \SAC;
 \item $(Z \cap N(\yy))\req{u}$ is contained in
 \rev{$\bar{\left(Z \cap N(\yy)\right)\rinf{u}}$}.
\end{enumerate}
\end{lemma}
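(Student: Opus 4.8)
The plan is to transport the entire statement, via the local parametrization of Lemma~\ref{lem:implicitparam}, to an elementary fact about one half of an open ball in $\RR^d$. First I apply Lemma~\ref{lem:implicitparam} at $\yy$: it produces a regular map $\balpha\colon Z\to\CC^{n+1}$ which is an isomorphism onto $\balpha(Z)$ over $\RR$, open Euclidean neighborhoods $N'\subset\RR^d$ of $c:=\pi_d(\balpha(\yy))$ and $N\subset\RR^{n+1}$ of $\balpha(\yy)$, and a continuous semi-algebraic $\ff\colon N'\to\RR^{n+1-d}$ with $\balpha(Z)\cap N=\{(\zz',\ff(\zz'))\mid\zz'\in N'\}$, together with $\bphi=\pi_1\circ\balpha$ on $Z$ (item c). The one bookkeeping fact to extract at this stage is that, since $\balpha(\yy)=(\bphi(\yy),\sigma\cdot\yy)$, the first coordinate of $c=\pi_d(\balpha(\yy))$ equals $\bphi(\yy)=u$.

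Next I shrink $N'$ to an open ball $\mathcal B=\mathcal B(c,r)\subset N'$. Then $\Omega:=\big(\balpha(Z)\cap N\big)\cap\pi_d^{-1}(\mathcal B)$ is open in $\balpha(Z)$, and $\pi_d$ restricts to a semi-algebraic homeomorphism $\Omega\to\mathcal B$, with inverse $\zz'\mapsto(\zz',\ff(\zz'))$ (semi-algebraic since $\ff$ is). I set $N(\yy):=\balpha^{-1}(\Omega)$: since $\balpha$ is a Euclidean homeomorphism onto $\balpha(Z)$ (both it and its inverse being polynomial), $N(\yy)$ is an open neighborhood of $\yy$ in the real trace $Z\cap\RR^n$, and $g:=\pi_d\circ\balpha$ restricts to a homeomorphism $g\colon N(\yy)\to\mathcal B$ whose inverse is continuous and semi-algebraic, with $g(\yy)=c$ and, using $\pi_1\circ\pi_d=\pi_1$ and $\bphi=\pi_1\circ\balpha$, with $\bphi=\pi_1\circ g$ on $N(\yy)$. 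Consequently $(Z\cap N(\yy))\rinf{u}=g^{-1}(\mathcal B^-)$ and $(Z\cap N(\yy))\req{u}=g^{-1}(\mathcal B^0)$, where $\mathcal B^-:=\mathcal B\cap\{\pi_1<u\}$ and $\mathcal B^0:=\mathcal B\cap\{\pi_1=u\}$.

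The three items then follow from the geometry of $\mathcal B$. As $\mathcal B$ is convex it is \SAC, hence so is its continuous semi-algebraic image $N(\yy)=g^{-1}(\mathcal B)$, giving a). Since $\pi_1(c)=u$, the hyperplane $\{\pi_1=u\}$ runs through the centre of $\mathcal B$, so $\mathcal B^-$ is a genuine half-ball: it is non-empty (it contains $c-(r/2,0,\dots,0)$) and convex, hence \SAC, and therefore so is $(Z\cap N(\yy))\rinf{u}=g^{-1}(\mathcal B^-)$, giving b). Finally, every $w\in\mathcal B^0$ is the limit of $w-(\eps,0,\dots,0)\in\mathcal B^-$ as $\eps\to0^+$, so $\mathcal B^0\subset\overline{\mathcal B^-}$ (closure taken in $\mathcal B$); applying continuity of $g^{-1}$ then gives $(Z\cap N(\yy))\req{u}=g^{-1}(\mathcal B^0)\subset g^{-1}\big(\overline{\mathcal B^-}\big)\subset\overline{g^{-1}(\mathcal B^-)}=\overline{(Z\cap N(\yy))\rinf{u}}$, which is c).

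I expect the only delicate point to be the topological bookkeeping rather than any genuine difficulty: checking that $\Omega$ is open in $\balpha(Z)$ and hence $N(\yy)$ open in $Z\cap\RR^n$; verifying that $g$ and $g^{-1}$ are semi-algebraic and continuous so that semi-algebraic connectedness and Euclidean closures transport along them without worrying about the ambient space in which closures are taken; and, above all, pinning down that the reference level hyperplane $\{\pi_1=u\}$ passes exactly through the centre of $\mathcal B$ — this is what makes $\mathcal B^-$ non-empty, and it is exactly the combination of $\bphi=\pi_1\circ\balpha$ with $\bphi(\yy)=u$.
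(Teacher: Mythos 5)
Your proof is correct and follows essentially the same route as the paper's: both apply Lemma~\ref{lem:implicitparam}, restrict to an open ball $\Bpaz$ around $\pi_d(\balpha(\yy))$ centered on the hyperplane $\{\pi_1=u\}$, transport convexity of the ball and its half via the resulting semi-algebraic homeomorphism to the three claims, where your $g^{-1}$ is the paper's $\balpha^{-1}\circ\FF$ restricted to $\Bpaz$. The only cosmetic difference is that you work with the homeomorphism $g=\pi_d\circ\balpha$ directly while the paper composes $\FF$ and $\balpha^{-1}$ explicitly.
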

This result is illustrated by Figure~\ref{fig:voisinageSAC}.

\begin{figure}[h]\centering
\begin{minipage}[c]{0.5\linewidth}
 \includegraphics[width=\linewidth]{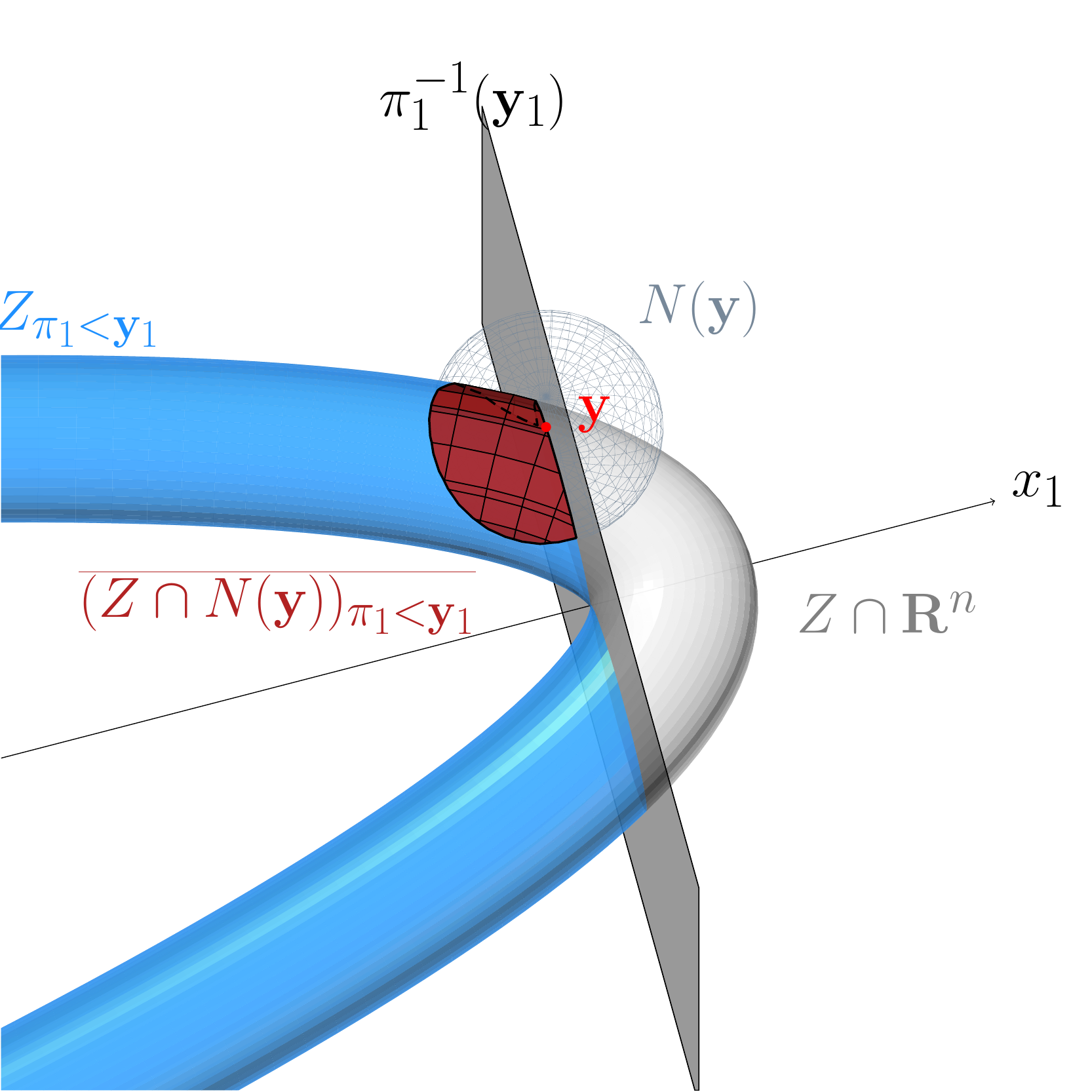}
\end{minipage}\hfill
\begin{minipage}[c]{0.5\linewidth}
 \includegraphics[width=\linewidth]{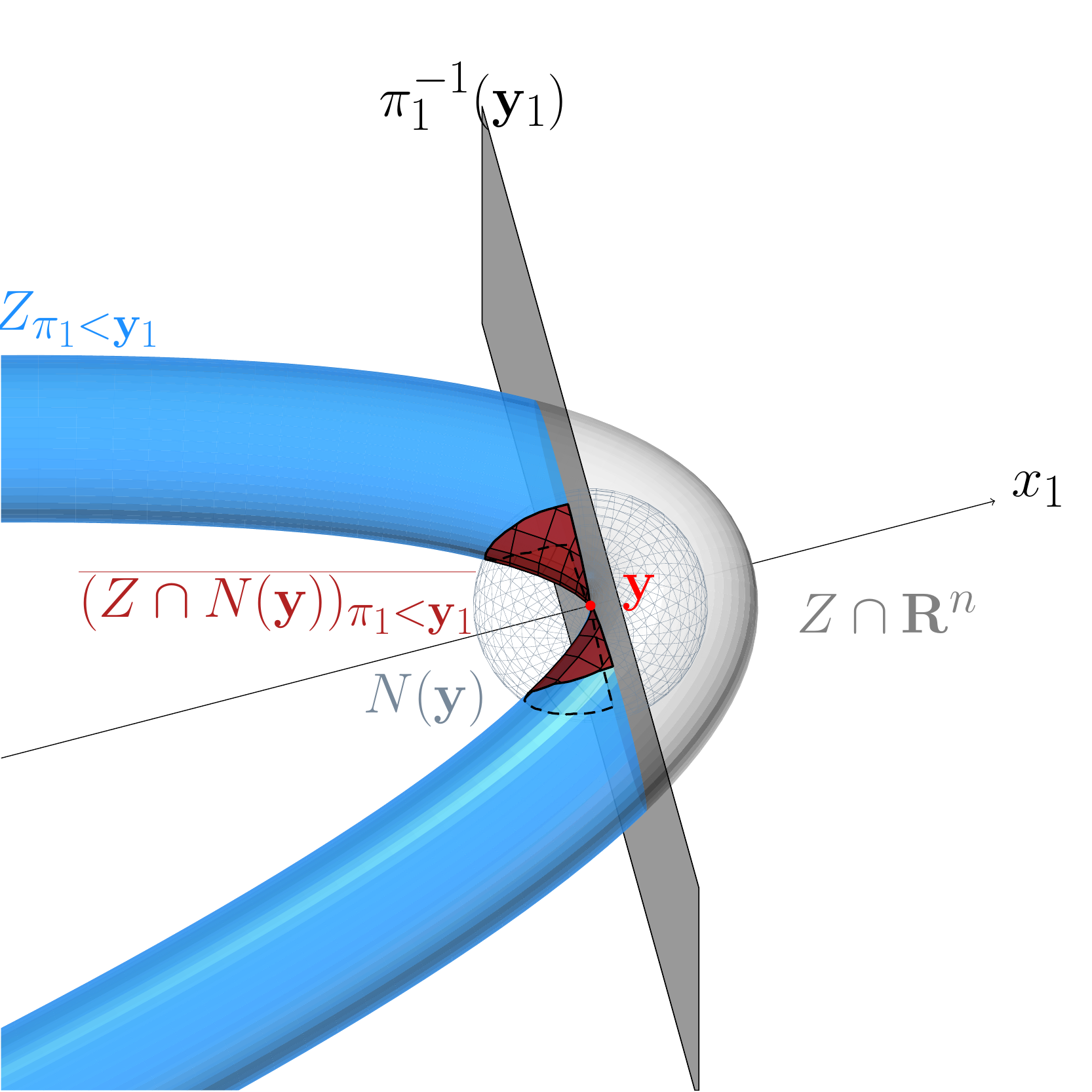}
\end{minipage}
\caption{Illustration of Lemma~\ref{lem:voisinageSAC} where $\bphi = 
\pi_1$, $u=\yy_1$ and $Z$ is isomorphic to $\V(x_1^2+x_2^2-1)\times 
\rev{\V(x_1+x_2^2)}$. 
On the left, $\yy$ is not critical and one sees that it satisfies all the 
statements. 
On the right $\yy$ is critical, and $(Z\cap N(\yy))\rinf[\pi_1]{\yy_1}$ is 
disconnected.
Note that in both cases, $\yy_1$ is a critical \emph{value}.}
\label{fig:voisinageSAC}
\end{figure}

\begin{proof}
Let $\balpha, N',N$ and $\ff$ be obtained by applying Lemma 
\ref{lem:implicitparam}. 
Let $\FF\colon \zz' \in N' \mapsto \left(\zz', \ff(\zz')\right) \in N$.
Let $\eps > 0$ be such that
\[
\Bpaz = \mathcal{B}\left(\pi_d(\alpha(\yy)),\eps \right) \subset N' \subset 
\RR^d 
\]
where $\mathcal{B}\left(\pi_d(\alpha(\yy)),\eps \right)$ is the open ball of 
$\RR^d$ with radius $\eps$ and center $\pi_d(\alpha(\yy))$.
We claim that taking $N(\yy) = \balpha^{-1}(\FF(\Bpaz))$ is enough to prove 
the result.

First, $\FF(\Bpaz)$ is open, semi-algebraic and \SAC, since $\FF$ is an open 
continuous map on $\Bpaz$.
Then, by assumptions on $\balpha$, together with Lemma~\ref{lem:isoconnected}, 
$\balpha ^{-1}(\FF(\Bpaz))$ is a \SAC open neighborhood of $\yy$.
Hence $N(\yy)$ satisfies statement $a)$.

Besides, remark that $\FF(\Bpaz) \subset \balpha(Z)$,
so that
\[
( \balpha(Z) \cap \FF(\Bpaz))\rinf[\pi_1]{u} = 
\FF(\Bpaz)\rinf[\pi_1]{u} = \FF(\Bpaz\rinf[\pi_1]{u})
\]
as $\pi_1(\FF(\zz')) = \pi_1(\zz')$ for $\zz' \in N'$. 
Since $\pi_1(\alpha(\yy)) = \bphi(\yy) = u$, the semi-algebraic set 
$\Bpaz\rinf[\pi_1]{u}$ is non-empty and \SAC (since $\Bpaz$ is convex), and  
so is its image through $\FF$ by \cite[Section 3.2]{BPR2006}.
But remark that for all $X\subset \RR$,
\begin{equation}\label{eqn:equivalenceiso}
 (Z \cap N(\yy))\rinto{X} 
 = \balpha^{-1}\left(( \balpha(Z) \cap \FF(\Bpaz))\rinto[\pi_1]{X}\right) = 
\alpha^{-1}\circ \FF(\Bpaz\rinto[\pi_1]{X}),
\end{equation}
since $\bphi \circ \balpha^{-1}=\pi_1$.
Therefore, by Lemma~\ref{lem:isoconnected}, $(Z \cap N(\yy))\rinf{u}$ 
is non-empty and \SAC, as claimed in statement $b)$.

To prove assertion $c)$, remark that $\Bpaz\req[\pi_1]{u}$ is contained in 
$\bar{\Bpaz\rinf[\pi_1]{u}}$, so that $\alpha^{-1} \circ 
\FF(\Bpaz\req[\pi_1]{u})$ is contained in $\alpha^{-1} \circ 
\FF(\bar{\Bpaz\rinf[\pi_1]{u}})$. 
Since $\FF$ and $\alpha^{-1}$ are continuous,
\[
\alpha^{-1} \circ \FF\left(\bar{\Bpaz\rinf[\pi_1]{u}}\right) 
\subset \bar{\alpha^{-1} \circ \FF\left(\Bpaz\rinf[\pi_1]{u}\right)}.
\]
Finally, by~\eqref{eqn:equivalenceiso}, we get
\[
(Z \cap N(\yy))\req{u} \subset
\bar{(Z \cap N(\yy)\rinf{u}}.
\]
\end{proof}
\begin{figure}[h]\centering
\begin{minipage}[c]{0.5\linewidth}
 \includegraphics[width=\linewidth]{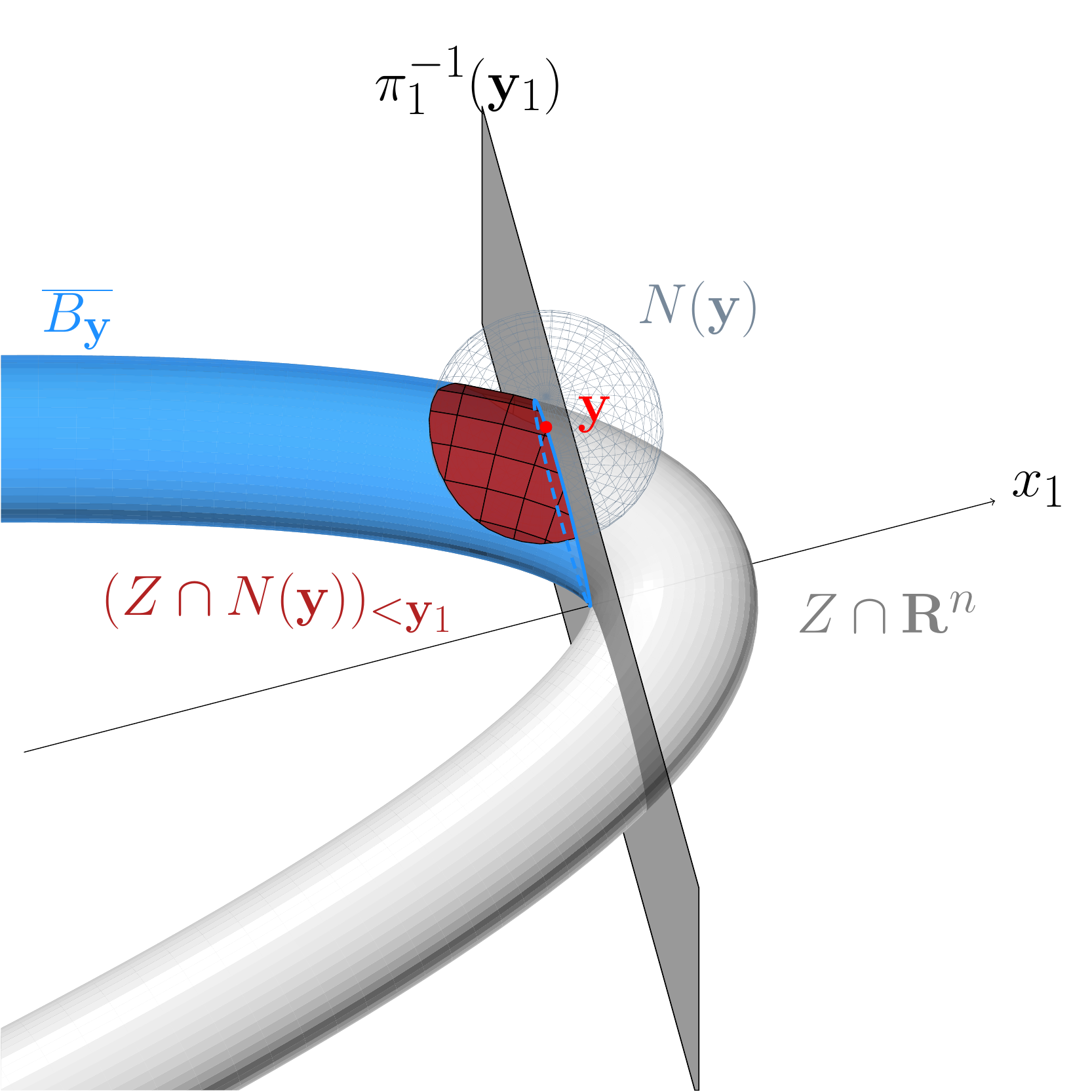}
\end{minipage}\hfill
\begin{minipage}[c]{0.5\linewidth}
 \includegraphics[width=\linewidth]{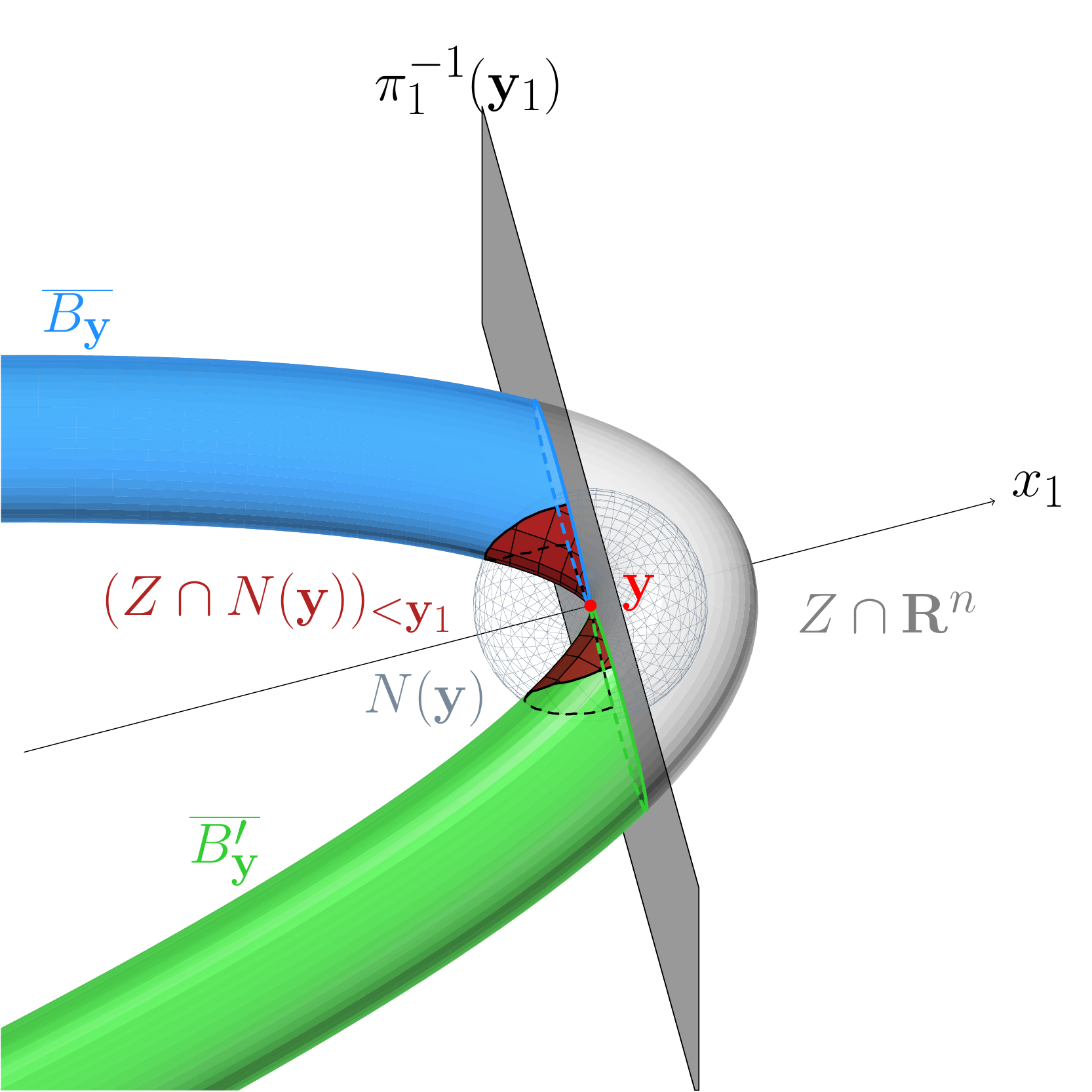}
\end{minipage}
\caption{Illustration of Lemma~\ref{lem:adhcomponent} where $\bphi = 
\pi_1$, $u=\yy_1$ and $Z$ is isomorphic to $\V(x_1^2+x_2^2-1)\times 
\rev{\V(x_1+x_2^2)}$. 
On the left $\yy$ is not critical and one sees that $\yy \in \bar{B_\yy}$ and 
$(Z \cap N(\yy))\rinf[\pi_1]{\yy_1} \subset B_\yy$.
However on the right, $\yy$ is critical, and one observes that $\yy$ belongs to 
both $\bar{B_\yy}$ and $\bar{B_\yy'}$, and, in addition, that $(Z \cap 
N(\yy))\rinf[\pi_1]{\yy_1}$ is not contained in any of these components.
Note that in both cases, $\yy_1$ is a critical \emph{value}.}
\label{fig:adhcomponent}
\end{figure}
\begin{lemma}\label{lem:adhcomponent} 
Let $\yy$ be in $Z \cap \RR^n - K(\bphi,Z)$, let $u=\bphi(\yy)$ and let 
$N(\yy)$ as in Lemma~\ref{lem:voisinageSAC}.
Then, there exists a \emph{unique} \SACC $B_\yy$ of $Z\rinf{u}$ such that $\yy 
\in \bar{B_\yy}$. 
Moreover,
\[
 (Z \cap N(\yy))\rinf{u} \subset B_\yy.
\]
\end{lemma}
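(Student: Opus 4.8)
The plan is to derive both assertions from the local description furnished by Lemma~\ref{lem:voisinageSAC}, together with a connectedness argument for the semi-algebraically connected component of $Z\rinf{u}$ adjacent to $\yy$. First I would observe that by Lemma~\ref{lem:voisinageSAC}, the set $(Z\cap N(\yy))\rinf{u}$ is non-empty and \SAC, hence contained in a unique \SACC of $Z\rinf{u}$; call it $B_\yy$. Since $(Z\cap N(\yy))\req{u}$ is contained in $\bar{(Z\cap N(\yy))\rinf{u}}\subset\bar{B_\yy}$ and $\yy$ itself lies in $(Z\cap N(\yy))\req{u}$ (as $\bphi(\yy)=u$ and $\yy\in Z\cap N(\yy)$), we get $\yy\in\bar{B_\yy}$ immediately, and the displayed inclusion $(Z\cap N(\yy))\rinf{u}\subset B_\yy$ is exactly how $B_\yy$ was chosen. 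So the existence part and the inclusion are essentially free once $B_\yy$ is defined this way.

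The real content is uniqueness: I must show that if $B$ is any \SACC of $Z\rinf{u}$ with $\yy\in\bar{B}$, then $B=B_\yy$. The idea is that any point of such a $B$ sufficiently close to $\yy$ must in fact lie in $N(\yy)$, hence in $(Z\cap N(\yy))\rinf{u}\subset B_\yy$; then $B$ and $B_\yy$ share a point, and being \SACCs of the same set they coincide. Concretely: since $\yy\in\bar{B}$, there is a sequence (or a \SA curve, using the curve selection lemma) of points $\zz_k\in B$ converging to $\yy$. Because $N(\yy)$ is an open Euclidean neighborhood of $\yy$, for $k$ large we have $\zz_k\in Z\cap N(\yy)$, and since $\zz_k\in B\subset Z\rinf{u}$ we have $\bphi(\zz_k)<u$, so $\zz_k\in(Z\cap N(\yy))\rinf{u}\subset B_\yy$. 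Thus $\zz_k\in B\cap B_\yy$, which forces $B=B_\yy$ since distinct \SACCs of $Z\rinf{u}$ are disjoint. This also shows uniqueness in the strong sense that no other \SACC has $\yy$ in its closure.

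The main obstacle — really the only non-routine point — is making sure that \emph{every} \SACC whose closure contains $\yy$ is captured, i.e.\ that one cannot have $\yy\in\bar{B}$ for some component $B$ that never enters $N(\yy)$ near $\yy$; but this is precisely ruled out by the fact that $N(\yy)$ is a genuine (Euclidean-)open neighborhood of $\yy$, so any net in $Z\rinf{u}$ converging to $\yy$ eventually lies in $Z\cap N(\yy)$. A secondary point to handle carefully is that $Z\rinf{u}$ may a priori have infinitely many \SACCs in a neighborhood of $\yy$ accumulating at $\yy$; again the neighborhood $N(\yy)$ kills this, since all of $(Z\cap N(\yy))\rinf{u}$ lies in the single component $B_\yy$, so near $\yy$ there is only one component of $Z\rinf{u}$. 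I would phrase the convergence argument either via sequences (using that semi-algebraic sets are locally closed, so $\bar{B}$ behaves well) or, more robustly, via the curve selection lemma applied to $\yy\in\bar{B}$, which yields a continuous \SA arc $\gamma\colon[0,1)\to B$ with $\gamma(0^+)=\yy$; for $t$ small, $\gamma(t)\in N(\yy)$, concluding as above.
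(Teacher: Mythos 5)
Your proof is correct and takes essentially the same approach as the paper: define $B_\yy$ as the unique \SACC of $Z\rinf{u}$ containing $(Z\cap N(\yy))\rinf{u}$, derive $\yy\in\bar{B_\yy}$ from part~$c)$ of Lemma~\ref{lem:voisinageSAC}, and establish uniqueness by noting that any other component $B$ with $\yy\in\bar{B}$ must meet the open neighborhood $N(\yy)$, hence meets $(Z\cap N(\yy))\rinf{u}\subset B_\yy$, forcing $B=B_\yy$. The paper reaches the last step more directly from the definition of closure (``$\yy\in\bar{B'}$ and $N(\yy)$ a neighborhood of $\yy$ give $N(\yy)\cap B'\neq\emptyset$'') without invoking sequences or curve selection, but the substance is identical.
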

This lemma is illustrated in Figure~\ref{fig:adhcomponent}.
\begin{proof}
By the second item of Lemma~\ref{lem:voisinageSAC}, $(Z \cap N(\yy))\rinf{u}$ 
is non-empty and \SAC. 
Thus, it is contained in a \SACC $B_\yy$ of $Z\rinf{u}$.
Since the \SACCs of $Z\rinf{u}$ are pairwise disjoint, $B_\yy$ is well defined 
and unique.
Moreover by Lemma~\ref{lem:voisinageSAC},
\[
\yy \in \bar{(Z \cap N(\yy))\rinf{u}} \subset \bar{B_\yy}.
\]
Finally, suppose that there exists another connected component $B'$ of 
$Z\rinf{u}$ such that $\yy \in \overline{B'}$. 
Then $\yy$ belongs to the closure of $B'$, so that $N(\yy) \cap B' \neq 
\emptyset$, since $N(\yy)$ is a neighborhood of $\yy$. 
Thus $B' \cap B_\yy$ is not empty, and since they are both {\SACC}s of the same 
set, $B' = B_\yy$.
\end{proof}

Let us see a geometric consequence of this result.
The following lemma shows that if $u$ is the least element of $\RR$ such that 
the hypersurface $\bphi^{-1}(\{u\})$ intersects a \SACC $C$ of $Z \cap \RR^n$, 
then this intersection consists entirely of singular points of $\bphi$ on $Z$.
It is illustrated by Figure~\ref{fig:SACCvide}.
\begin{lemma}\label{lem:SACCvide}
 Let $\yy \in Z\cap\RR^n$ with $u=\bphi(\yy)$ and let $C$ be the \SACC of 
$Z\rinfeq{u}$ containing $\yy$.
 If $C\rinf{u} = \emptyset$ then $C = C\req{u} \subset K(\bphi,Z)$.
 In particular, $\yy \in K(\bphi,Z)$.
\end{lemma}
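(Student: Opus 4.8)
The plan is to treat the two assertions separately. The identity $C = C\req{u}$ is essentially formal: since $C$ is by definition a subset of $Z\rinfeq{u}$, every $\zz \in C$ satisfies $\bphi(\zz)\leq u$, and if $\bphi(\zz)<u$ held for some $\zz \in C$ then $\zz$ would lie in $C\rinf{u}$, contradicting the hypothesis $C\rinf{u}=\emptyset$. Hence $\bphi\equiv u$ on $C$, that is $C = C\req{u}$; in particular the final claim $\yy\in K(\bphi,Z)$ will follow from the inclusion $C\subseteq K(\bphi,Z)$.

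To establish $C\subseteq K(\bphi,Z)$ I would argue by contradiction. Suppose some $\yy'\in C$ does not lie in $K(\bphi,Z)$. By the previous paragraph $\bphi(\yy')=u$, so Lemma~\ref{lem:voisinageSAC} (equivalently Lemma~\ref{lem:adhcomponent}) applies at $\yy'$ and provides a \SAC open Euclidean neighborhood $N(\yy')$ of $\yy'$ for which $(Z\cap N(\yy'))\rinf{u}$ is non-empty and \SAC, and $\yy'\in (Z\cap N(\yy'))\req{u}\subseteq\overline{(Z\cap N(\yy'))\rinf{u}}$. Being a non-empty \SAC semi-algebraic subset of $Z\rinf{u}$, the set $(Z\cap N(\yy'))\rinf{u}$ is contained in one \SACC $B$ of $Z\rinf{u}$; and since $Z\rinf{u}\subseteq Z\rinfeq{u}$, the \SAC set $B$ is in turn contained in one \SACC $C'$ of $Z\rinfeq{u}$. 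The next step is to identify $C'$ with $C$: from $\yy'\in\overline{(Z\cap N(\yy'))\rinf{u}}\subseteq\overline{B}\subseteq\overline{C'}$, together with the facts that a \SACC is closed in the ambient \SA set and that $\yy'\in Z\rinfeq{u}$, one gets $\yy'\in C'$, hence $C'=C$. But then every point of $(Z\cap N(\yy'))\rinf{u}\subseteq B\subseteq C$ has $\bphi$-value strictly below $u$ and thus lies in $C\rinf{u}$, contradicting $C\rinf{u}=\emptyset$. This gives $C\subseteq K(\bphi,Z)$, and since $\yy\in C$ we conclude $\yy\in K(\bphi,Z)$.

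I expect the only genuinely delicate point to be the passage from the \emph{local} information supplied by Lemma~\ref{lem:voisinageSAC} at $\yy'$ (non-emptiness of $(Z\cap N(\yy'))\rinf{u}$ and the fact that $\yy'$ lies in its Euclidean closure) to a statement about the \emph{global} component $C$. This rests on two elementary facts about semi-algebraic sets: a non-empty \SAC semi-algebraic set is contained in exactly one \SACC of any \SA set containing it, and \SACCs are closed in the ambient set, so that belonging both to $\overline{C'}$ and to $Z\rinfeq{u}$ already forces membership in $C'$. An alternative to this closedness argument would be to invoke the semi-algebraic curve selection lemma to build an explicit \SA arc from $\yy'$ into $B\subseteq Z\rinf{u}$ that stays inside $Z\rinfeq{u}$, directly exhibiting a point of $C\rinf{u}$; I would keep the closedness formulation, as it avoids that extra tool. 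Everything else is routine bookkeeping.
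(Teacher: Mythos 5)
Your proof is correct and follows essentially the same route as the paper's: both begin with the formal observation $C=C\req{u}$, both invoke Lemma~\ref{lem:adhcomponent} at a hypothetical point of $C\req{u}-K(\bphi,Z)$ to produce a non-empty \SACC of $Z\rinf{u}$ whose closure contains that point, and both conclude that this component sits inside $C$, making $C\rinf{u}$ non-empty. The only cosmetic difference is that you phrase the argument as a contradiction and explicitly introduce the \SACC $C'$ of $Z\rinfeq{u}$ containing $B$ before identifying $C'=C$, whereas the paper proves the contrapositive and notes directly that $\overline{B_\zz}$, being a \SAC subset of $Z\rinfeq{u}$ containing $\zz$, must lie in $C$.
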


\begin{figure}[h]\centering
\begin{minipage}[c]{0.5\linewidth}
 \includegraphics[width=\linewidth]{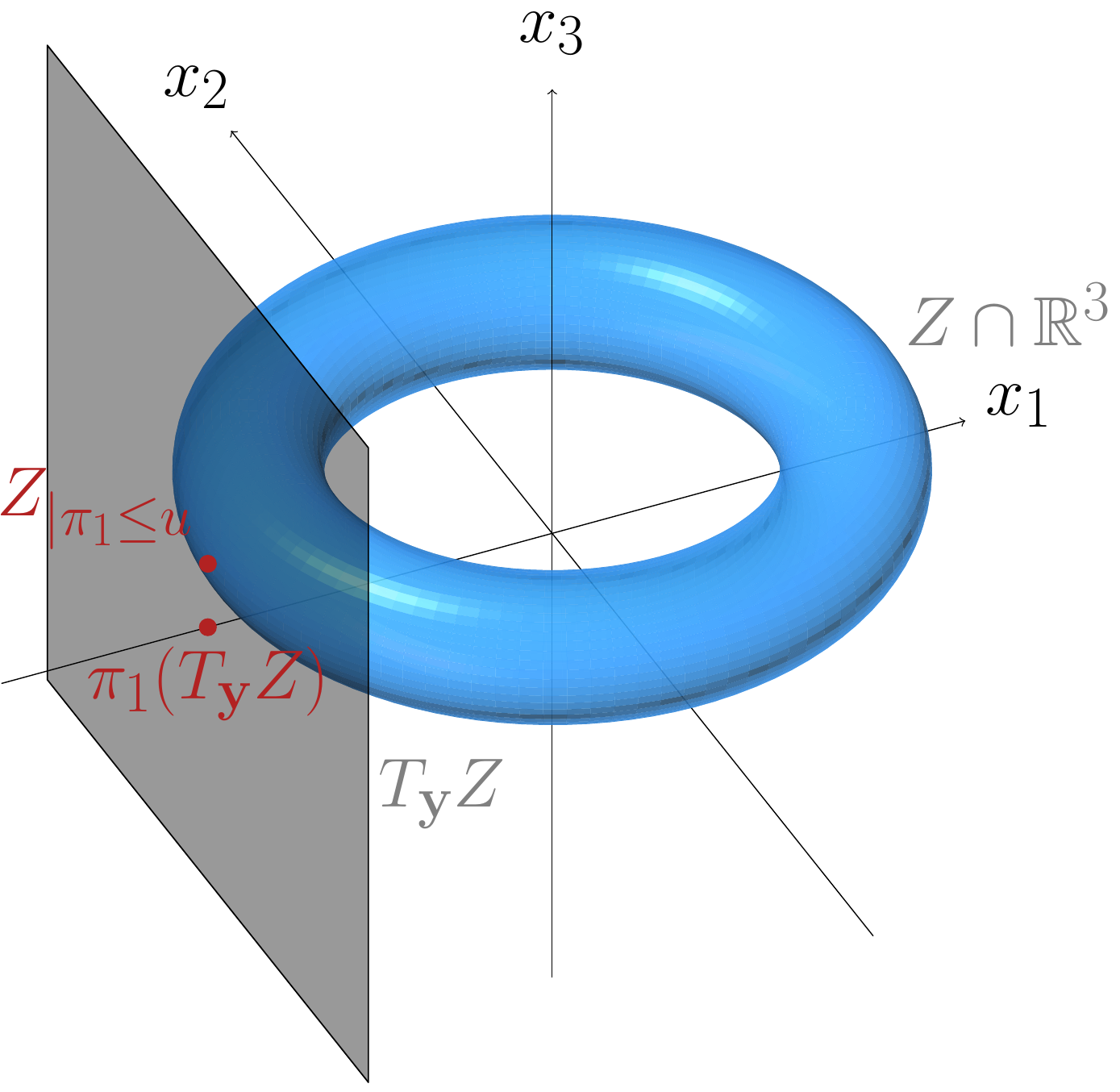}
\end{minipage}\hfill
\begin{minipage}[c]{0.5\linewidth}
 \includegraphics[width=\linewidth]{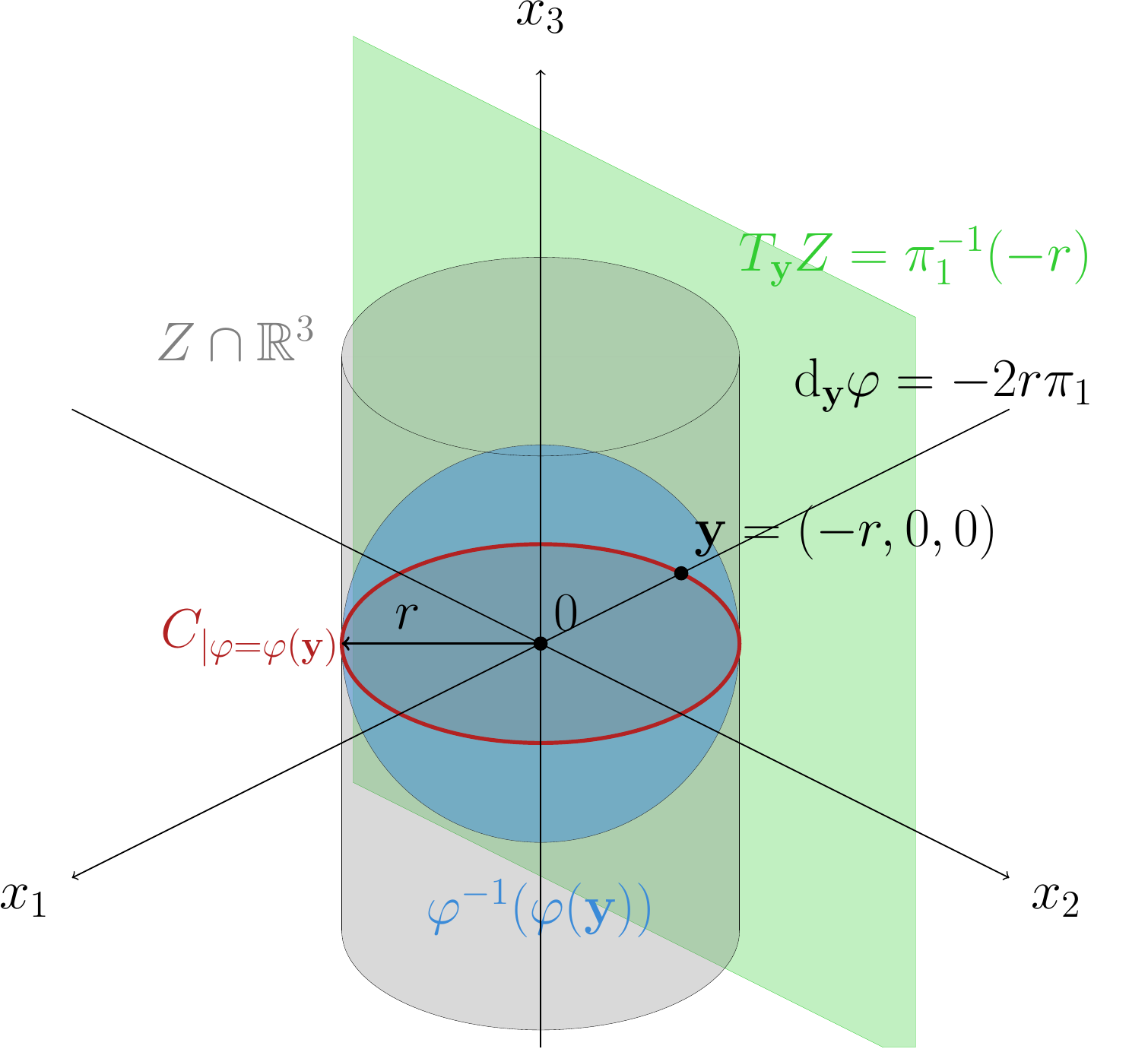}
\end{minipage}
\caption{Illustration of Lemma~\ref{lem:SACCvide} in two cases. 
\revbis{On the left, $\bphi = \pi_1$ and $Z\cap\RRo^3$ is a torus. 
The plane $\{x_1 = u\}$ indicated satisfies $C\rinf{u} = \emptyset$.
One sees that $C\req{u} \subset K(\bphi,Z)$, and indeed $C\req{u} = \{\yy\}$.
On the \revbis{right}, $\bphi$ is the square of the Euclidean norm, and $Z$ is 
a cylinder of radius $r$.
Remark first that $C\rinf{r} = \emptyset$.
Moreover, for $\xx=(\xx_1,\xx_2,0) \in Z$, the differential at $\xx$ of 
restriction of $\phi$ to $Z$ is the restriction of the projection on the 
$(x_1,x_2)$-plane to $T_{\xx}Z$. Since these two latter planes are orthogonal, 
$\xx$ is indeed a critical point.}}
\label{fig:SACCvide}
\end{figure}

\begin{proof}
If $C\rinf{u}=\emptyset$, since $C \subset Z\rinfeq{u}$ then $C = C\req{u}$ 
holds. 
Let us prove the contrapositive of the rest of the lemma.
Suppose that $C\req{u} \not\subset K(\bphi,Z)$, and let
\[
\zz \in C\req{u}-K(\bphi,Z).
\]
Let $B_{\zz}$ be the \SACC of $Z\rinf{u}$ obtained by applying Lemma 
\ref{lem:adhcomponent}. 
Since $\bar{B_{\zz}}$ contains $\zz$ and is a \SAC set of $Z\rinfeq{u}$, 
$\bar{B_{\zz}} \subset C$. 
Hence $C\rinf{u}$ contains $(\bar{B_{\zz}})\rinf{u} = B_{\zz}$,  which is then 
not empty.
\end{proof}

We prove now an important consequence of the previous lemma.
It is a fundamental property of generalized polar varieties and motivates 
their introduction among the ingredients of a roadmap.
\begin{proposition}\label{prop:SACCcontientptcrit}
 Let $u \in \RR$ and let $B$ be a \emph{bounded} \SACC of $Z\rinf{u}$.
 Then $B \cap K(\bphi,Z) \neq \emptyset$.
\end{proposition}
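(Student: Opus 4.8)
The plan is to argue by contradiction: suppose $B$ is a bounded \SACC of $Z\rinf{u}$ with $B\cap K(\bphi,Z)=\emptyset$. The first step is to consider the closure $\overline{B}$, which is a closed and bounded (hence compact) semi-algebraic subset of $Z\cap\RR^n$. Since $\bphi$ is continuous and $\overline{B}$ is compact, the restriction $\bphi|_{\overline{B}}$ attains its minimum at some point $\yy_0\in\overline{B}$; let $u_0=\bphi(\yy_0)$. Note $u_0\le \bphi(\zz)<u$ for all $\zz\in B$, and since $B\neq\emptyset$ we have $u_0<u$ (indeed points of $B$ strictly below $u$ accumulate to points of $\overline B$ of value $\le u_0 < u$). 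The key point is that $\yy_0$ realizes the infimum of $\bphi$ on $\overline B$, so $\overline{B}\rinf{u_0}=\emptyset$.

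Next I would like to say that $\yy_0\in K(\bphi,Z)$, and Lemma~\ref{lem:SACCvide} is exactly the tool: it shows that if $C$ is the \SACC of $Z\rinfeq{u_0}$ containing $\yy_0$ and $C\rinf{u_0}=\emptyset$, then $\yy_0\in K(\bphi,Z)$. So the work is to check $C\rinf{u_0}=\emptyset$, i.e. that the whole \SACC of $Z\rinfeq{u_0}$ through $\yy_0$ lies in the level set $\{\bphi=u_0\}$. For this I would use that $\overline{B}$ is contained in $Z\rinfeq{u}$ and is \SAC (the closure of a \SAC set is \SAC), and that $\overline B\subset Z\rinfeq{u_0}\cup(Z\req{}$\dots$)$ — more precisely, since every point of $\overline B$ has $\bphi$-value in $[u_0,u]$, we need to relate $\overline B$ to a \SACC of $Z\rinfeq{u_0}$. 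The clean way: apply Lemma~\ref{lem:adhcomponent} in the spirit of the proof of Lemma~\ref{lem:SACCvide}. If $\yy_0\notin K(\bphi,Z)$, then taking $N(\yy_0)$ as in Lemma~\ref{lem:voisinageSAC} and $B_{\yy_0}$ the \SACC of $Z\rinf{u_0}$ from Lemma~\ref{lem:adhcomponent}, we get $\yy_0\in\overline{B_{\yy_0}}$ and $(Z\cap N(\yy_0))\rinf{u_0}\subset B_{\yy_0}$, so $B_{\yy_0}\neq\emptyset$. But $\yy_0\in\overline B$ means $N(\yy_0)\cap B\neq\emptyset$; since $B$ is a \SACC of $Z\rinf{u}$ and $N(\yy_0)$ meets points of $\bphi$-value arbitrarily close to $u_0<u$, these points lie in both $B$ and in the neighborhood, and connecting through $N(\yy_0)\cap Z\rinf{u_0}\subset N(\yy_0)\cap Z\rinf u$ we would produce points of $\overline{B}$ with $\bphi$-value $<u_0$, contradicting minimality of $u_0$.

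The main obstacle, and the place I would be most careful, is making the last paragraph rigorous: one must ensure that the \SACC $B$ of $Z\rinf u$ is genuinely "the same" component seen from near $\yy_0$, and that moving into $N(\yy_0)$ on the $Z\rinf{u_0}$ side really stays inside $B$ rather than some other component of $Z\rinf u$. The cleanest route is probably to invoke Lemma~\ref{lem:SACCvide} directly on the \SACC $C$ of $Z\rinfeq{u_0}$ through $\yy_0$: show $C\rinf{u_0}=\emptyset$. If not, pick $\zz\in C\rinf{u_0}\subset Z\rinf{u_0}\subset Z\rinf u$; since $C$ is \SAC it is contained in a \SACC of $Z\rinfeq u$, which (because $C$ meets $\overline B$ at $\yy_0$, and $\overline B$ is \SAC inside $Z\rinfeq u$) must be the \SACC of $Z\rinfeq u$ containing $\overline B$. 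Then $\zz$ and points of $B$ lie in the same \SACC of $Z\rinfeq u$; intersecting with $Z\rinf u$ and using that $B$ is open-closed there forces $\zz\in B$, whence $\bphi(\zz)\ge u_0$, contradicting $\zz\in C\rinf{u_0}$. Therefore $C\rinf{u_0}=\emptyset$, Lemma~\ref{lem:SACCvide} gives $\yy_0\in K(\bphi,Z)$, and since $\yy_0\in\overline B$ with $\bphi(\yy_0)=u_0<u$... actually $\yy_0$ need not be in $B$ itself, so to finish I would instead take $u_0$ to be the infimum over $B$ (not $\overline B$) and note it is attained on $\overline B$ at some $\yy_0\in\overline B$; then observe $\overline B\cap K(\bphi,Z)\neq\emptyset$ is not quite what we want — we need $B\cap K(\bphi,Z)\neq\emptyset$. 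So the correct final move is: the set of minimizers of $\bphi$ on $\overline B$ is a whole \SACC's worth by the above, it equals $C\req{u_0}\subset K(\bphi,Z)$, and bounded\-ness plus the structure of \SACCs of $Z\rinf u$ forces this minimizing set to already lie in $B$ (any point of $\overline B$ that is a limit of $B$ and has a neighborhood meeting $Z\rinf u$ only inside $B\cup$other components — one shows it is in $B$ because $B$ is closed in $Z\rinf u$, not in $Z$; here one uses that near such $\yy_0$, by Lemma~\ref{lem:adhcomponent} applied in reverse or by a direct limiting argument, points of $B$ of value just below $u$ but also just below $u_0+\delta$ are connected within $Z\rinf u$ through the compact set $\overline B$). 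I expect the referee-level care to go into this identification of the minimizing locus with a subset of $B$; modulo that, the result follows.
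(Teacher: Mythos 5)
Your overall plan — minimize $\bphi$ on $\overline{B}$ and apply Lemma~\ref{lem:SACCvide} to the minimizer — is the right one and matches the paper's approach, but you leave a real gap exactly where you flag uncertainty ("actually $\yy_0$ need not be in $B$ itself\dots I expect the referee-level care to go into this identification of the minimizing locus with a subset of $B$"). In fact the minimizer \emph{is} in $B$, and for a simple reason you did not invoke: $B$ is a \SACC of $Z\rinf{u}$, hence \emph{closed} in $Z\rinf{u}$, so $\overline{B}\cap Z\rinf{u}=B$. Combined with $\overline{B}\subset Z\rinfeq{u}$ (closures of subsets of a closed set), this gives $\overline{B}-B\subset Z\req{u}$. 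Your minimizer $\yy_0$ has $\bphi(\yy_0)=u_0<u$ (since $B\neq\emptyset$ and $B\subset Z\rinf{u}$), so $\yy_0\notin Z\req{u}$, and therefore $\yy_0\in B$. From there the proof closes immediately: $B\rinf{u_0}=\emptyset$ because $u_0$ is the minimum of $\bphi$ on $B$, the \SACC of $Z\rinfeq{u_0}$ through $\yy_0$ is contained in $B$ (it is a \SAC subset of $Z\rinf{u}$ meeting $B$), so Lemma~\ref{lem:SACCvide} applies and yields $\yy_0\in K(\bphi,Z)$, giving $\yy_0\in B\cap K(\bphi,Z)$.

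The lengthy middle of your write-up — the detour through $N(\yy_0)$, Lemma~\ref{lem:adhcomponent}, and trying to connect through neighborhoods to derive a contradiction — is unnecessary once this closedness fact is noticed, and as written it is not airtight (you concede as much). There is also no need to argue by contradiction at all: the argument is direct. So the structure of your proof is correct, but it is incomplete without the observation that \SACCs are closed in the ambient set, which is precisely what the paper uses to place the minimizer inside $B$ rather than only in $\overline{B}$.
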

\begin{proof}
Since $\bphi$ is a semi-algebraic continuous map and $B$ is semi-algebraic, then
$\bphi(\bar{B})$ is a closed and bounded semi-algebraic set by \cite[Theorem 
3.23]{BPR2006}.
In particular, $\bphi$ reaches its minimum $\bphi(\zz)$ on $\bar{B}$ and since 
$\emptyset \neq B\subset Z\rinf{u}$,then $\bphi(\zz) <u$.
But $B$ is a \SACC of $Z\rinf{u}$, so in particular it is closed in 
$Z\rinf{u}$, 
so that 
\[
 \bar{B} - B\subset Z\req{u}.
\]
Therefore $\zz \in B$ and as $B\rinf{\bphi(\zz)}$ is empty ($\zz$ is a 
minimizer), $B\req{\bphi(\zz)}$ and $\zz$ is in $K(\bphi, Z)$ by 
Lemma~\ref{lem:SACCvide}.
Finally $\zz \in B \cap K(\bphi,Z)$, and the latter is non-empty.
\end{proof}

We are now able to prove a weaker version of 
Proposition~\ref{prop:firstresult}, which is illustrated in 
Figure~\ref{fig:firstresultweak}.
It deals with the particular case when the map has values in some fiber 
$Z\req{u}$, where $u \in \RR$.
\begin{lemma}\label{lem:firstresultweak}
Let $u\in\RR$ and $A \subset \RR^k$ be a \SAC set. Let
\[
 \gamma\colon A \longrightarrow Z\req{u} -K(\bphi,Z)
\]
be a continuous semi-algebraic map. Then there exists a \emph{unique} \SACC $B$ 
of $Z\rinf{u}$ such that $\gamma(A) \subset \bar{B}$.
\end{lemma}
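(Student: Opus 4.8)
The plan is to reduce the statement to Lemma~\ref{lem:adhcomponent} via a connectedness argument. Since $A$ is semi-algebraically connected and $\gamma$ is continuous semi-algebraic, it suffices to exhibit, for each $\aa\in A$, a \SACC $B_\aa$ of $Z\rinf{u}$ with $\gamma(\aa)\in\bar{B_\aa}$, and then to show that the assignment $\aa\mapsto B_\aa$ is locally constant on $A$; connectedness of $A$ then forces $B_\aa$ to be a single component $B$ independent of $\aa$, and $\gamma(A)\subset\bar B$ follows. Uniqueness is immediate once existence is established: if $\gamma(A)\subset\bar B\cap\bar{B'}$ for two \SACCs $B,B'$ of $Z\rinf{u}$, pick any $\aa\in A$; then $\gamma(\aa)\in Z\req{u}-K(\bphi,Z)$, so Lemma~\ref{lem:adhcomponent} applied to $\yy=\gamma(\aa)$ gives a \emph{unique} \SACC of $Z\rinf{u}$ whose closure contains $\gamma(\aa)$, whence $B=B'=B_{\gamma(\aa)}$.

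The key steps, in order: First, for each $\aa\in A$ set $\yy=\gamma(\aa)$; since $\bphi(\yy)=u$ (the map $\gamma$ lands in $Z\req{u}$) and $\yy\notin K(\bphi,Z)$, Lemma~\ref{lem:adhcomponent} furnishes a unique \SACC $B_\aa$ of $Z\rinf{u}$ with $\yy\in\bar{B_\aa}$, and moreover $(Z\cap N(\yy))\rinf{u}\subset B_\aa$, where $N(\yy)$ is the neighbourhood from Lemma~\ref{lem:voisinageSAC}. Second, I claim $\aa\mapsto B_\aa$ is locally constant. Fix $\aa_0\in A$ and let $\yy_0=\gamma(\aa_0)$, $N(\yy_0)$ the associated neighbourhood. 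By continuity of $\gamma$, the set $A_0=\gamma^{-1}(N(\yy_0))$ is a semi-algebraic neighbourhood of $\aa_0$ in $A$. For $\aa\in A_0$, write $\yy=\gamma(\aa)\in Z\cap N(\yy_0)$ with $\bphi(\yy)=u$, i.e. $\yy\in(Z\cap N(\yy_0))\req{u}$; by Lemma~\ref{lem:voisinageSAC}(c), $\yy\in\overline{(Z\cap N(\yy_0))\rinf{u}}$, and since (by Lemma~\ref{lem:adhcomponent}) $(Z\cap N(\yy_0))\rinf{u}\subset B_{\aa_0}$, we get $\yy\in\overline{B_{\aa_0}}$. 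By the uniqueness clause of Lemma~\ref{lem:adhcomponent} (applied at $\yy$), the unique \SACC of $Z\rinf{u}$ whose closure contains $\yy$ is both $B_\aa$ and $B_{\aa_0}$, so $B_\aa=B_{\aa_0}$. Third, the \SACCs of $Z\rinf{u}$ being pairwise disjoint, the fibers of the locally constant map $\aa\mapsto B_\aa$ partition $A$ into disjoint (relatively) open semi-algebraic subsets; since $A$ is \SAC, only one such fiber is non-empty, so $B_\aa=B$ for all $\aa$, and thus $\gamma(A)\subset\bigcup_{\aa}\{\gamma(\aa)\}\subset\bar B$ because each $\gamma(\aa)\in\bar{B_\aa}=\bar B$.

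The main obstacle is making the local-constancy step airtight: one must be careful that $N(\yy_0)$ "sees" the component $B_{\aa_0}$ correctly, i.e. that $(Z\cap N(\yy_0))\rinf{u}$ genuinely lies in $B_{\aa_0}$ and has $\yy$ in its closure for every nearby $\yy$ — this is exactly what Lemmas~\ref{lem:voisinageSAC}(c) and~\ref{lem:adhcomponent} are designed to give, so the difficulty is really one of bookkeeping rather than of new ideas. A secondary subtlety is the passage from "locally constant semi-algebraic map on a \SAC set" to "constant": this is the standard fact that a \SAC set cannot be partitioned into two or more non-empty disjoint relatively-open semi-algebraic pieces, which follows from the definition of \SAC-ness via semi-algebraic paths (a path from a point in one piece to a point in another would have to leave every piece open, a contradiction). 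No boundedness of $Z$ or of the components is needed here, which is why this weaker statement is available before the properness hypotheses come into play.
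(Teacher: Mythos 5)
Your proposal is correct and follows essentially the same route as the paper: define $\aa\mapsto B_{\gamma(\aa)}$ via Lemma~\ref{lem:adhcomponent}, prove local constancy by pulling back the neighbourhood $N(\yy_0)$ from Lemma~\ref{lem:voisinageSAC} through $\gamma$ and invoking the uniqueness clause of Lemma~\ref{lem:adhcomponent}, conclude constancy by connectedness of $A$, and get uniqueness of $B$ from the same uniqueness clause. The only cosmetic difference is that you observe the codomain of $\gamma$ already avoids $K(\bphi,Z)$, so you need not subtract it again when taking the preimage of $N(\yy_0)$.
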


\begin{figure}[h]\centering
\begin{minipage}[c]{0.49\linewidth}
 \includegraphics[width=\linewidth]{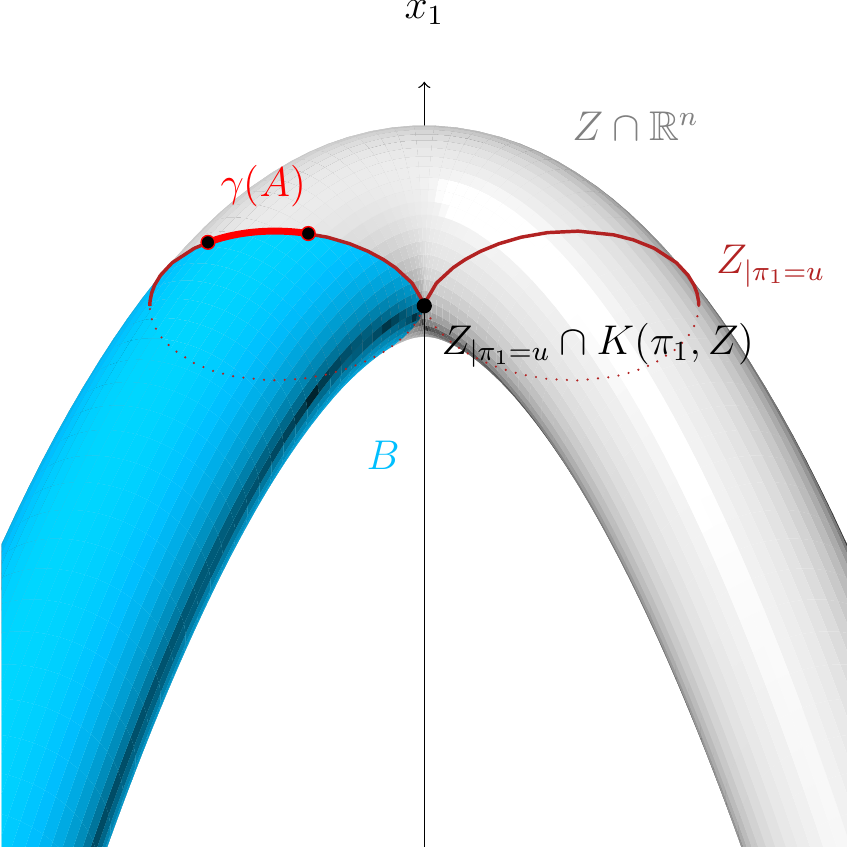}
\end{minipage}\hfill
\begin{minipage}[c]{0.49\linewidth}
\includegraphics[width=\linewidth]{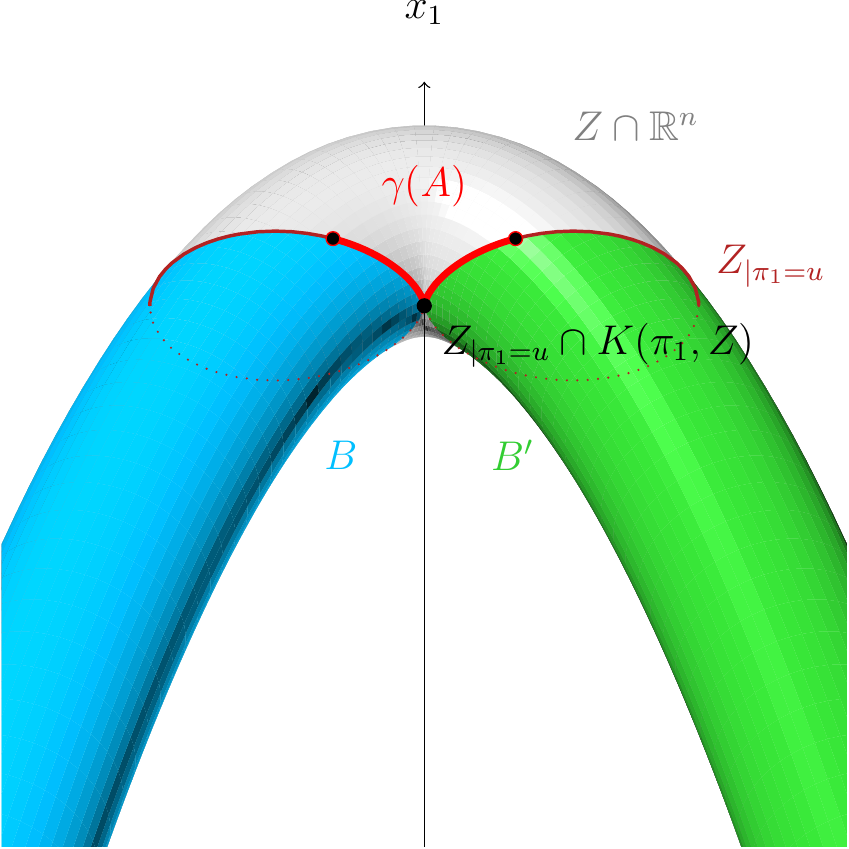}
\end{minipage}
\caption{Illustration of the proof of Proposition~\ref{prop:firstresult} where 
$\bphi = \pi_1$ and $Z$ is isomorphic to $\V(x_1^2+x_2^2-1)\times 
\rev{\V(x_1+x_2^2)}$ in two cases. 
On the left the $\gamma(A) \cap (Z\req[\pi_1]{u} \cap K(\pi_1,Z)) = 
\emptyset$ and on the right, this intersection is non-empty.}
\label{fig:firstresultweak}
\end{figure}

\begin{proof}
Let $\aa_0 \in A$ and $\yy = \gamma(\aa_0)$, by assumption, $\yy \in 
Z\req{u}-K(\bphi,Z)$.
Then by Lemmas~\ref{lem:voisinageSAC} and \ref{lem:adhcomponent},
there exist an open neighborhood $N(\yy)$ of $\yy$ and a \SACC $B_\yy$ of 
$Z\rinf{u}$ such that
\[
 (Z \cap N(\yy))\req{u} \subset
 \bar{(Z \cap N(\yy))\rinf{u}} \subset \bar{B_{\yy}}. 
\]
Hence for every $\zz \in (Z \cap N(\yy))\req{u}  - K(\bphi,Z)$, $\zz \in 
\bar{B_{\yy}}$ so that $B_{\zz} = B_{\yy}$ by application of 
Lemma~\ref{lem:adhcomponent}.
Since $\gamma$ is a continuous semi-algebraic map, there exists an open 
semi-algebraic neighborhood $N'(\aa_0)$ of $\aa_0$ such that 
\[
\gamma(N'(\aa_0)) \subset (Z \cap N(\yy))\req{u}  - K(\bphi,Z).
\]
\rev{
Hence the map $\aa \mapsto B_{\gamma(\aa)}$ is constant on 
$N(\aa_0)$.
Let 
\[
 \Bfrak: \aa\in A \mapsto B_{\gamma(\aa)} \in \Ppaz(Z\rinf{u})
\]
be the map given by Lemma~\ref{lem:adhcomponent}, where $\Ppaz(Z\rinf{u})$ 
denote the power set of $Z\rinf{u}$.
We proved that $\Bfrak$ is locally constant on $A$ and then, equivalently, 
continuous for the discrete topology on $\Ppaz(Z\rinf{u})$.
But since $A$ is \SAC, $\Bfrak(A)$ is connected for the discrete topology, that 
is $\Bfrak$ is constant $A$.}

\rev{Let then $B$ be the constant value that $\Bfrak$ takes on $A$.}
By Lemma~\ref{lem:adhcomponent}, for all $\aa \in A$, $\gamma(\aa) \in 
\overline{B_{\gamma(\aa)}} = \overline{B}$, that is $\gamma(A)\subset\bar{B}$.
Besides, if $B'$ is another \SACC of $Z\rinf{u}$ such that $\gamma(A) \subset 
\bar{B'}$, then for all $\aa \in A$,
\[
 \gamma(\aa) \in \bar{B} \cap \bar{B'} \cap Z\req{u} - K(\bphi,Z),
\]
so that $B = B'$ by uniqueness in Lemma~\ref{lem:adhcomponent}.
\end{proof}

We can now prove the main proposition by sticking together all the pieces. 
The points of the map that belong to the fiber $Z\req{u}$ are managed by 
Lemma~\ref{lem:firstresultweak}, while the remaining ones, in $Z\rinf{u}$, are 
more convenient to deal with. This proof is illustrated by 
Figure~\ref{fig:firstresult}.

\begin{proof}[Proof of Proposition~\ref{prop:firstresult}]
Since $\gamma$ is semi-algebraic and continuous, $\gamma(A)$ is \SAC.
Hence, if $\gamma(A) \subset Z\rinf{u}$, it is contained in a unique \SACC $B$ 
of $Z\rinf{u}$ and we are done.

We assume now that $\gamma(A) \not\subset Z\rinf{u}$. 
Let $G = \gamma^{-1}(Z\req{u})$.
It is a closed subset of $A$ since $Z\req{u}$ is closed in $Z\rinfeq{u}$ and 
$\gamma$ is continuous. 
Then, let $G_1,\dotsc,G_N$ be the \SACCs of $G$; they are closed in $A$ 
since they are closed in $G$, which is closed in $A$.
Besides, let $H_1,\dotsc, H_M$ be the \SACCs of $A-G$. They are open in $A$ 
since they are open in $A-G$ , which is open in $A$.

We define a map $\Bfrak\colon A \to \Ppaz(Z\rinf{u})$, where $\Ppaz(Z\rinf{u})$ 
is
the power set of $Z\rinf{u}$. 
The family formed by both $G_1,\dotsc,G_N$ and $H_1,\dotsc H_M$
is a partition of $A$; hence, we can define $\Bfrak$ by defining it on this 
partition.
\begin{description}
 \item[$\bm{H_i:}$] Since $H_i \subset A-G$, $\gamma(H_i) \subset 
Z\rinf{u}$ and $\gamma(H_i)$ is \SAC as $\gamma$ is continuous. 
Then, there exists a unique \SACC $B_i$ of $Z\rinf{u}$ such that 
$\gamma(H_i) \subset B_i \subset \bar{B_i}$.
   
 \item[$\bm{G_i:}$] Since $G_i$ is \SAC and $\gamma(G_i) \subset Z\req{u} - 
K(\bphi,Z)$, Lemma~\ref{lem:firstresultweak} with $A= G_i$ states that there is 
a unique \SACC $B'_i$ of $Z\rinf{u}$ such that $\gamma(G_i) \subset 
\overline{B'_i}$.
\end{description}
Therefore, for all $\aa \in A$, let $\Bfrak$ such that
\begin{align*}
  \Bfrak(\aa) = \left\{\begin{array}{ll}
            B_i & \text{if $\aa \in H_i$}\\
            B_i' & \text{if $\aa \in G_i$}\\
          \end{array}\right. \text{so that }
  \gamma(\aa) \in \bar{\Bfrak(\aa)}.
\end{align*}
Let us show that $\Bfrak$ is locally constant, that is, for every $\aa \in A$, 
there exists an open Euclidean neighborhood $N(\aa) \subset A$ of $\aa$, such 
that for all $\aa' \in N(\aa)$, $\Bfrak(\aa') = \Bfrak(\aa)$.
Then, we will conclude by connectedness \rev{as above}.
Let $\aa \in A$ and $1\leq i \leq\max(M,N)$.
\begin{itemize}
\item If $\aa \in H_i$, since $H_i$ is open in $A$, there exists an open 
Euclidean neighborhood $N(\aa)$ of $\aa$ contained in $H_i$. 
By construction, for all $\aa' \in N(\aa)$, $\Bfrak(\aa') = \Bfrak(\aa)$. 
Moreover, since $H_i$ is \SAC, this also proves that $\Bfrak$ is actually 
constant on $H_i$, and we let $\Bfrak(H_i)$ be the unique value it assumes on 
$H_i$.
\item Else $\aa \in G_i$, since the $G_j$'s are closed in $A$, then
$\aa$ does not belong to the closure of any other $G_j$, $j \neq i$.
However, the set 
\[
J = \left\{ 1 \leq j \leq M \mid \aa \in \overline{H_j} \right\}
\] 
is not empty. 
By construction, $\gamma(\aa) \in \overline{\Bfrak(\aa)}$ and by definition of 
$J$, for every $j \in J$, $\gamma(\aa) \in \overline{\Bfrak(H_j)}$. 
But, by Lemma~\ref{lem:adhcomponent} applied with $\yy=\gamma(\aa)$,
such a \SACC is unique. 
Hence for all $j \in J$, $\Bfrak(H_j) = \Bfrak(\aa)$.
One can then take $N(\aa) = \mathcal{B}(\aa,r)$ with $r>0$ such that this open 
ball intersects either the $H_j$'s for $j \in J$ or $G_i$, and only them.
\end{itemize}

\rev{Finally, we proved that $\Bfrak$ is locally constant and then, 
equivalently, continuous for the discrete topology on $\Ppaz(Z\rinf{u})$.
Since $A$ is \SAC, $\Bfrak(A)$ is connected for the discrete topology and
$\Bfrak$ is constant on $A$.}
Denoting by $B\subset Z\rinf{u}$ the unique value it assumes, we have 
$\gamma(A) 
\subset \overline{B}$ as claimed.
Besides if $B'$ is another \SACC of $Z\rinf{u}$ such that $\gamma(A) \subset 
\bar{B'}$, then in particular $\bar{B} \cap \bar{B'}$ contains $\gamma(G_1) 
\subset Z\req{u}-K(\bphi,Z)$, so that $B = B'$ by 
Lemma~\ref{lem:firstresultweak}.
\end{proof}

\begin{figure}[h]\centering
\begin{minipage}[c]{0.49\linewidth}
 \includegraphics[width=\linewidth]{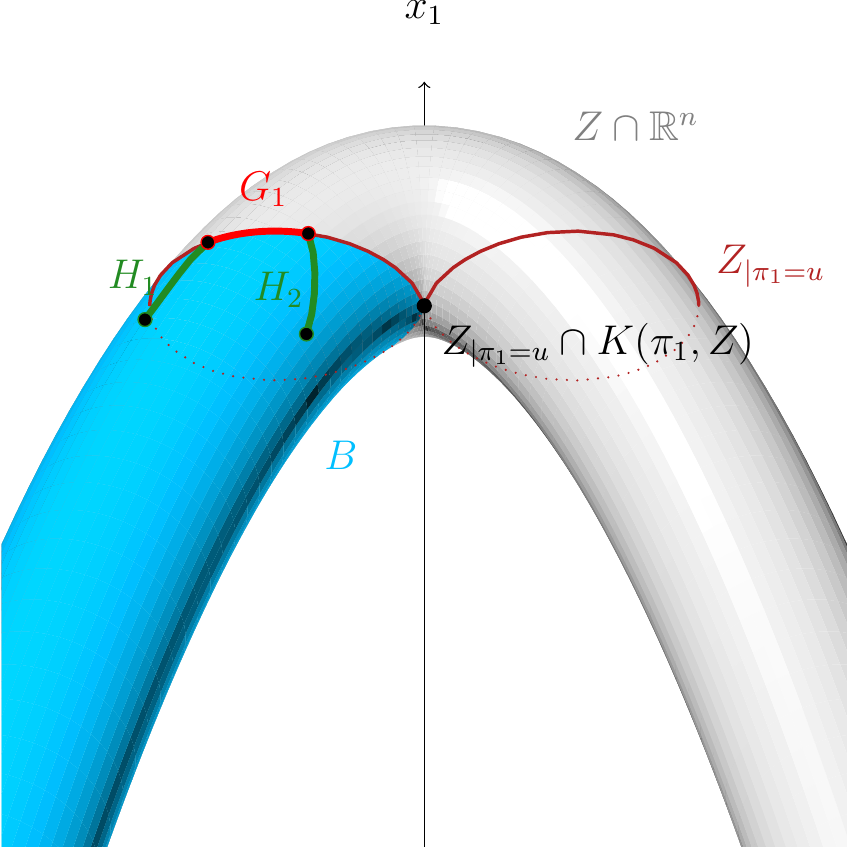}
\end{minipage}\hfill
\begin{minipage}[c]{0.49\linewidth}
 \includegraphics[width=\linewidth]{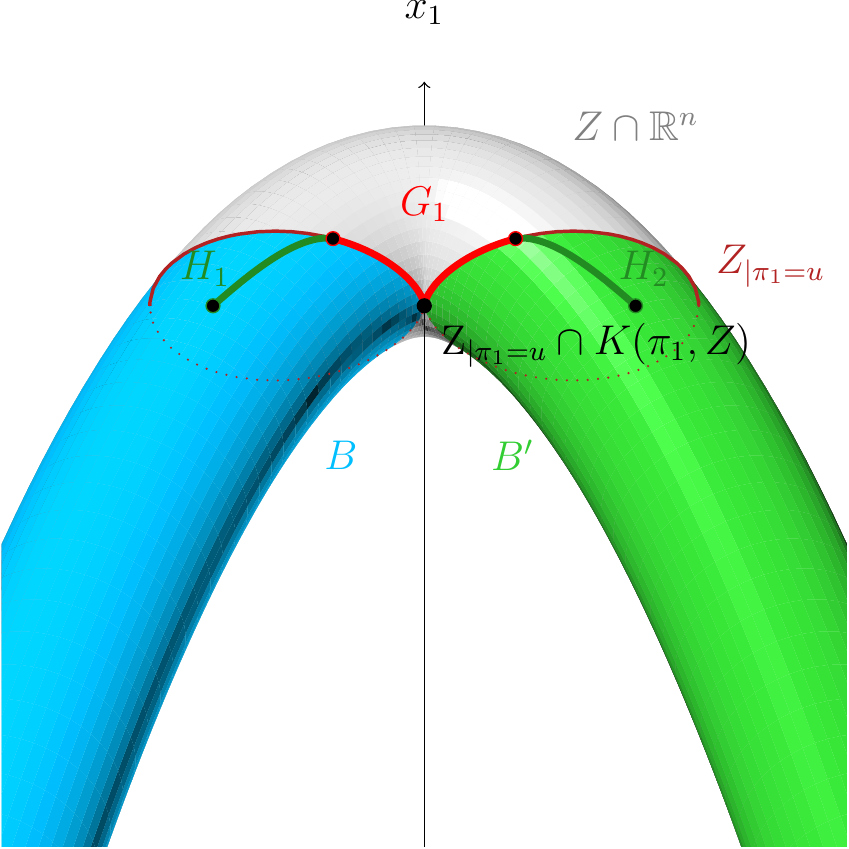}
\end{minipage}
\caption{Illustration of the proof of Proposition~\ref{prop:firstresult} with 
$\bphi = \pi_1$ and $Z$ is isomorphic to $\V(x_1^2+x_2^2-1)\times 
\rev{\V(x_1+x_2^2)}$ in two cases. 
The intersection $\gamma(A) \cap (Z\req[\pi_1]{u} \cap K(\pi_1,Z))$ is 
empty on the left while, on the right, it is not.}
\label{fig:firstresult}
\end{figure}

We then deduce the following consequence on the \SACCs of $Z$ with respect to 
$\bphi$.
This result is illustrated in Figure~\ref{fig:firstresultcor}.
\begin{corollary}\label{cor:firstresult}
Let $\bphi\colon \CC^n \rightarrow \CC$ be a regular map defined over
$\RR$ and $Z \subset \CC^n$ be an equidimensional algebraic set of positive 
dimension.
Let $u\in\RR$ such that $Z\req{u} \cap K(\bphi,Z) = \emptyset$ and let $C$ be a 
\SACC of $Z\rinfeq{u}$. 
Then, $C\rinf{u}$ is a \SACC of $Z\rinf{u}$.
\end{corollary}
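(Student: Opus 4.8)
\textbf{Proof plan for Corollary~\ref{cor:firstresult}.}

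The plan is to use Proposition~\ref{prop:firstresult} twice: once to see that $C\rinf{u}$ lies inside a single \SACC of $Z\rinf{u}$, and once (in the reverse direction) to see that this inclusion is in fact an equality, by showing any \SACC of $Z\rinf{u}$ whose closure meets $C$ is captured back inside $C\rinf{u}$. First I would observe that since $Z\req{u}\cap K(\bphi,Z)=\emptyset$, the target set $Z\rinfeq{u}-(Z\req{u}\cap K(\bphi,Z))$ appearing in Proposition~\ref{prop:firstresult} is simply all of $Z\rinfeq{u}$, so the hypothesis on $\gamma$ becomes vacuous in that respect. Apply Proposition~\ref{prop:firstresult} with $A=C$ (which is \SAC as a \SACC, hence \SAC semi-algebraic, and we may take $k=n$ viewing $C\subset\RR^n$) and $\gamma$ the inclusion map $C\hookrightarrow Z\rinfeq{u}$: this is continuous and semi-algebraic. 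We obtain a unique \SACC $B$ of $Z\rinf{u}$ with $C\subset\bar B$. In particular $C\rinf{u}=C\cap\bphi^{-1}(]-\infty,u[)\subset \bar B\cap Z\rinf{u}=B$, using that $B$ is closed in $Z\rinf{u}$ so $\bar B\cap Z\rinf{u}=B$.

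Next I would show $B\subset C\rinf{u}$. Note $B\subset Z\rinf{u}\subset Z\rinfeq{u}$ and $B$ is \SAC, so $B$ is contained in a unique \SACC $C'$ of $Z\rinfeq{u}$; I claim $C'=C$. Indeed, pick any $\zz\in C\rinf{u}$ (which is non-empty: otherwise by Lemma~\ref{lem:SACCvide}, $C=C\req{u}\subset K(\bphi,Z)$, contradicting $C\req{u}\subset Z\req{u}\cap K(\bphi,Z)=\emptyset$ together with $C\neq\emptyset$). Then $\zz\in B$ by the previous paragraph, and $\zz\in C$ by choice, so the \SACC of $Z\rinfeq{u}$ containing $\zz$ is both $C'$ and $C$; hence $C'=C$, and therefore $B\subset C$. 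Since moreover $B\subset Z\rinf{u}$, we get $B\subset C\cap Z\rinf{u}=C\rinf{u}$. Combined with $C\rinf{u}\subset B$ this yields $C\rinf{u}=B$, a \SACC of $Z\rinf{u}$, as desired.

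The only subtlety — and the place I would be most careful — is the non-emptiness of $C\rinf{u}$, which is exactly what rules out the degenerate case and lets me run Proposition~\ref{prop:firstresult} meaningfully; this is where Lemma~\ref{lem:SACCvide} and the hypothesis $Z\req{u}\cap K(\bphi,Z)=\emptyset$ combine. A second point worth stating explicitly is that a \SACC $B$ of $Z\rinf{u}$, being both open and closed in $Z\rinf{u}$, satisfies $\bar B\cap Z\rinf{u}=B$, so that $C\subset\bar B$ forces $C\rinf{u}\subset B$; this is elementary but is the hinge of the argument. Everything else is bookkeeping about \SACCs being the pieces of a partition into relatively clopen \SAC pieces, so uniqueness statements transfer cleanly.
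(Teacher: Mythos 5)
Your proof is correct and follows essentially the same route as the paper's: apply Proposition~\ref{prop:firstresult} with $A=C$ and the inclusion map to get a unique $B$ with $C\subset\bar B$, deduce $C\rinf{u}\subset\bar B\cap Z\rinf{u}=B$, then use the non-emptiness of $C\rinf{u}$ (via the contrapositive of Lemma~\ref{lem:SACCvide}) to argue that the \SACC of $Z\rinfeq{u}$ containing $B$ is $C$ itself, giving $B\subset C\rinf{u}$ and hence equality. Your write-up is in fact slightly more careful than the paper's on the non-emptiness step — you derive the contradiction cleanly from $C=C\req{u}\subset Z\req{u}\cap K(\bphi,Z)=\emptyset$, whereas the paper's phrase ``in particular $C\req{u}\not\subset K(\bphi,Z)$'' tacitly assumes $C\req{u}\neq\emptyset$ — but the content and structure are the same.
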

\begin{proof}
Let $\gamma$ be the inclusion map $\gamma\colon C \hookrightarrow Z\rinfeq{u}$. 
Since $Z\req{u} \cap K(\bphi,Z) = \emptyset$, $\gamma$ satisfies the 
assumptions of Proposition~\ref{prop:firstresult} with $A = C$. 
Then there exists a unique \SACC $B$ of $Z\rinf{u}$ such that $C \subset 
\overline{B}$, so that $C\rinf{u} \subset \overline{B}\rinf{u}=B$. 

First, since $Z\req{u} \cap K(\bphi,Z) = \emptyset$ by assumption, then in 
particular $C\req{u} \not\subset K(\bphi,Z)$. 
By the contrapositive of Lemma~\ref{lem:SACCvide}, $C\rinf{u}$ is not empty.
Hence, since $B$ is a \SAC set of $Z\rinfeq{u}$, containing $C\rinf{u}$, 
$B$ is contained in the \SACC $C$ of $Z\rinfeq{u}$. 
Finally $B \subset Z\rinf{u}\cap C = C\rinf{u}$ and $C\rinf{u} = B$, which is a 
\SACC of $Z\rinf{u}$.
\end{proof}

\begin{figure}[h]\centering
\begin{minipage}[c]{0.5\linewidth}
 \includegraphics[width=\linewidth]{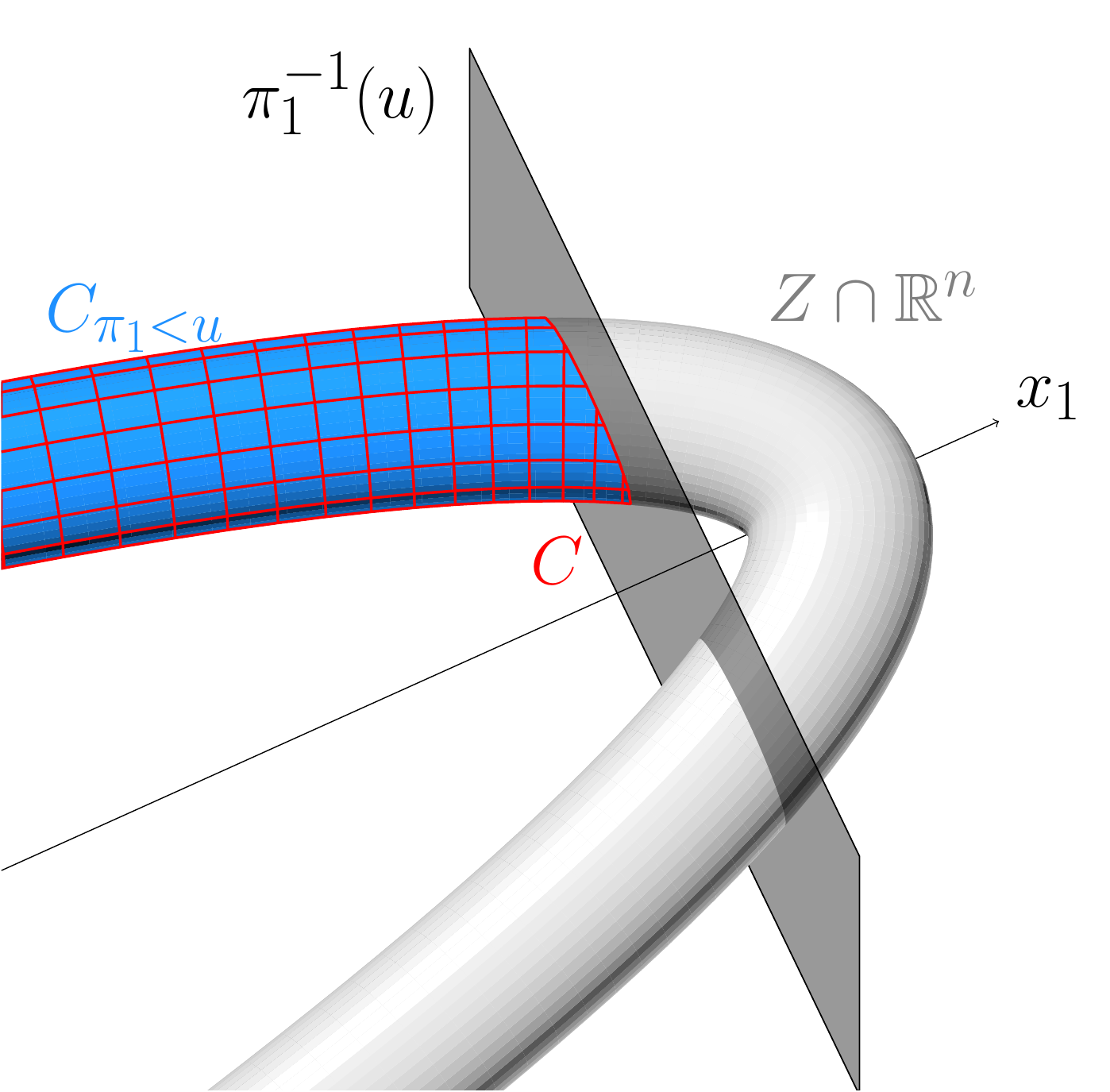}
\end{minipage}\hfill
\begin{minipage}[c]{0.5\linewidth}
 \includegraphics[width=\linewidth]{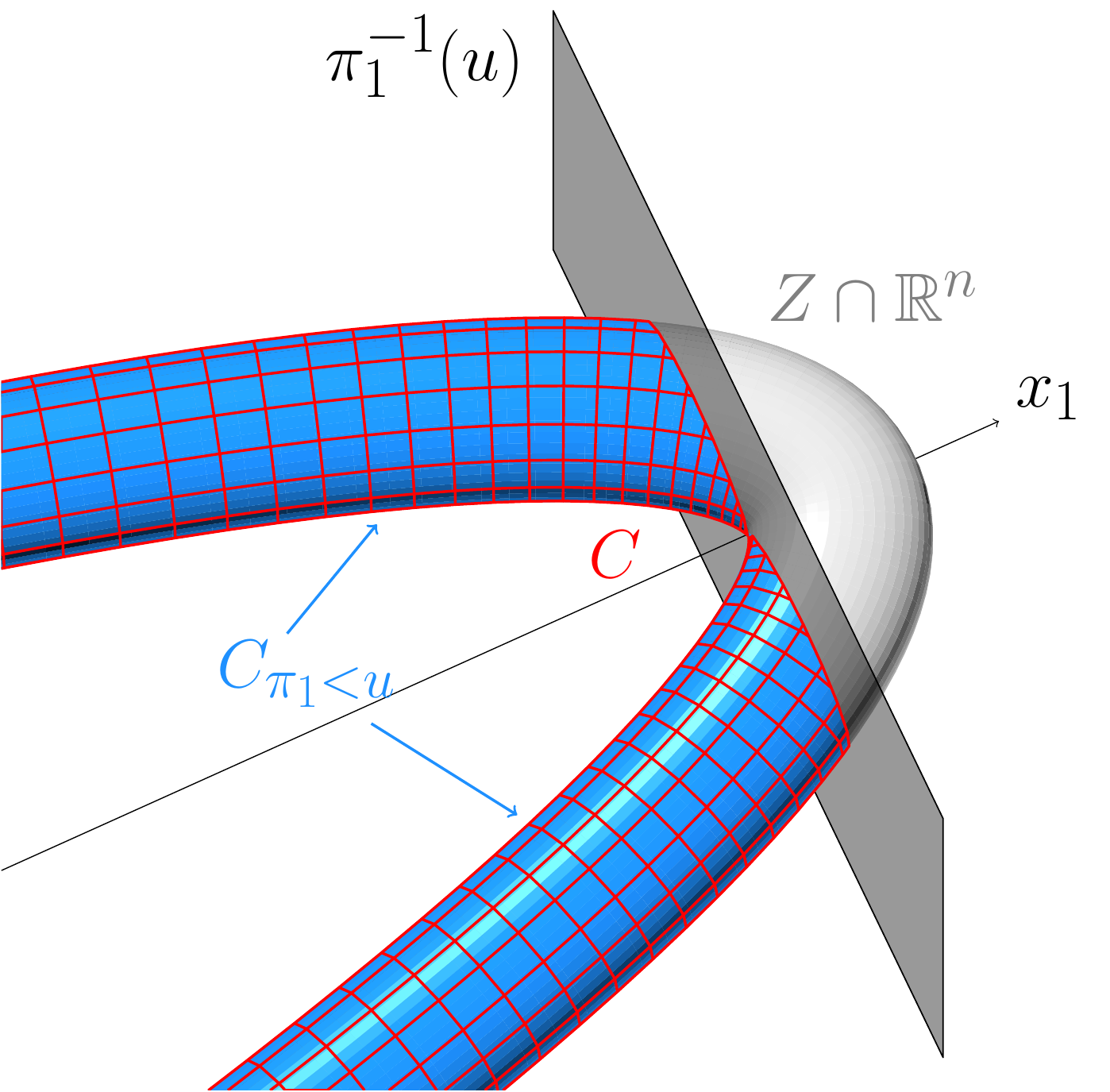}
\end{minipage}
\caption{Illustration of Corollary~\ref{cor:firstresult} where $\bphi = 
\pi_1$ and $Z$ is isomorphic to $\V(x_1^2+x_2^2-1)\times \rev{\V(x_1+x_2^2)}$.
On the left $Z\req[\pi_1]{u} \cap K(\pi_1,Z) = \emptyset$ and one sees that 
$C\rinf[\pi_1]{u}$ is still a \SACC of $Z\rinf[\pi_1]{u}$. 
On the right $Z\req[\pi_1]{u} \cap K(\pi_1,Z) \neq \emptyset$ and one sees that 
$C\rinf[\pi_1]{u}$ is disconnected.}
\label{fig:firstresultcor}
\end{figure}

\subsection{Fibration and critical values}
As in \cite[Section 3.2]{SS2011} we are going to use a Nash version of Thom's
isotopy lemma, stated in \cite{CS1992}\rev{, which, again, is an ingredient of
  Morse theory}. We refer to \cite[Section 3.5]{BPR2006} for the definitions of
Nash diffeomorphisms, manifolds and submersions together with their properties.

\begin{proposition}\label{prop:secondresult}
 Let $\bphi\colon \CC^n \rightarrow \CC$ be a regular map defined over
$\RR$ and $A\subset \bphi^{-1}((-\infty,w)) \cap \RR^n$ be a \SAC 
semi-algebraic set. 
Let $v<w$ such that $A\rinto{(v,w)}$ is a non-empty Nash manifold, bounded, 
closed in $\bphi^{-1}((v, w)) \cap \RR^n$ and such that $\bphi$ is a submersion 
on $A\rinto{(v,w)}$.
Then for all $u \in [v,w)$, $A\rinfeq{u}$ is non-empty and \SAC.
\end{proposition}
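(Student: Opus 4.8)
The plan is to prove the statement in two stages: first treat the "slab" $A\rinto{(v,w)}$, then bridge the gap down to the level $u=v$ using the results of the previous subsection. For the first stage, I would apply the Nash version of Thom's isotopy lemma (the result of \cite{CS1992} recalled in the discussion preceding this proposition) to the restriction of $\bphi$ to $A\rinto{(v,w)}$. By hypothesis $A\rinto{(v,w)}$ is a Nash manifold on which $\bphi$ is a submersion; it is bounded and closed in $\bphi^{-1}((v,w))\cap\RR^n$, which is exactly the properness needed for the map $\bphi\colon A\rinto{(v,w)}\to (v,w)$ to be proper onto its image. One should first check the image is all of $(v,w)$: since $A\rinto{(v,w)}$ is non-empty, its image is a non-empty open (submersion) and closed (properness) subset of the connected set $(v,w)$, hence equals $(v,w)$. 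Thom's isotopy lemma then yields a Nash diffeomorphism $\Phi\colon (v,w)\times \bphi^{-1}(\{t_0\})\cap A \to A\rinto{(v,w)}$ compatible with $\bphi$, for any fixed $t_0\in(v,w)$. In particular all the fibers $A\req{t}$ for $t\in(v,w)$ are Nash-diffeomorphic, and each slab $A\rinto{(v,u']}$ and $A\rinto{[u',w)}$ is homeomorphic to a product, so it is \SAC if and only if the fiber $A\req{t_0}$ is.

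The next step is to identify the number of \SACCs of $A\req{t_0}$ with that of $A\rinfeq{u}$ for $u\in[v,w)$, using a connectivity argument in both directions. One direction: each \SACC $C$ of $A\rinfeq{u}$ meets the level $\{u\}$, because $A\subset\bphi^{-1}((-\infty,w))$ and the isotopy shows $A\rinf{u}$ retracts (via $\Phi$, after possibly handling the sublevel part below $v$) onto $A\req{u}$ or, more carefully, because a component of $A\rinfeq u$ that avoided level $u$ would be a component of $A\rinf u$, contradicting... — here I need to be careful, since Proposition~\ref{prop:secondresult} does not assume $\bphi$ is proper on all of $A$. The cleaner route is: fix $u\in[v,w)$, pick $u'$ with $u\le u'<w$ close to $w$; by the isotopy the inclusion $A\req{u'}\hookrightarrow A\rinto{(v,w)}$ and $A\rinf{u'}\hookrightarrow A\rinto{(v,w)}$ induce bijections on \SACCs; then $A\rinfeq{u}=A\rinf{u'}$ when... no, that is false. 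Instead I would use Corollary~\ref{cor:firstresult} or Proposition~\ref{prop:firstresult} repeatedly to "push down" connectivity from level $u'$ to level $u$: crucially, inside the slab $(v,w)$ there are no singular points of $\bphi$ on $A\rinto{(v,w)}$ (since $\bphi$ is a submersion there and $A\rinto{(v,w)}$ is a manifold), so $A\req{t}\cap K(\bphi, \cdot)=\emptyset$ for every $t\in(v,w)$, and Corollary~\ref{cor:firstresult} applies with $Z$ replaced by the algebraic set underlying the slab — this shows that for $v<u_1<u_2<w$, the map sending a \SACC of $A\rinfeq{u_2}$ to its part below $u_1$ is a bijection onto \SACCs of $A\rinf{u_1}$, hence all the sub-level sets in the open slab have the same (finite, non-zero) number of \SACCs, equal to that of $A\req{t_0}$.

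The final, and I expect hardest, step is to pass from $u>v$ down to $u=v$ itself, i.e. to show $A\rinfeq{v}=A\rinf{w}\cap\bphi^{-1}((-\infty,v])$ — which could a priori be empty or could gain components — is still non-empty and \SAC. Here I would argue that $A\rinfeq{v}$ is not empty: taking $u_k\downarrow v$ one gets a decreasing family of non-empty \SACCs whose intersections behave well because $A$ is closed in the relevant region near the slab, and one extracts a point in $A\req{v}$ as a limit (using boundedness of the slab near $v$, or rather closedness of $A\rinto{(v,w)}$ in $\bphi^{-1}((v,w))$ together with a compactness argument on a bounded piece). Then, for connectivity of $A\rinfeq{v}$, I would apply Proposition~\ref{prop:firstresult}: given two points of $A\rinfeq{v}$, each lies in $\bar{B}$ for a unique \SACC $B$ of $A\rinf{v}$; but $A\rinf{v}=\emptyset$ in general, so this must instead be run one notch up — take $v<u_1<w$, observe $A\rinfeq{v}\subset A\rinfeq{u_1}$ which is \SAC by the slab analysis, and that any two points of $A\rinfeq{v}$, being in the connected $A\rinfeq{u_1}$, are joined by a path there; then project/retract that path down below level $v$ using the Thom isotopy $\Phi$ on the portion lying in the slab, contracting the $(v,u_1)$-direction to $v$ while leaving the part already below $v$ fixed, and glue. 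The delicate point is that the path may re-enter the slab several times and one must do the retraction piecewise and continuously; verifying that the glued map is still a continuous semi-algebraic path in $A\rinfeq{v}$ is the main technical obstacle, and it is exactly the kind of argument where the closedness of $A\rinto{(v,w)}$ in $\bphi^{-1}((v,w))\cap\RR^n$ is used to control the behaviour at the boundary level $v$.
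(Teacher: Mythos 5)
Your first stage matches the paper: the paper likewise verifies properness (from boundedness) and surjectivity (from openness plus closedness of the image in the connected interval $(v,w)$), then invokes the Nash version of Thom's isotopy lemma to obtain a trivialization $\Psi\colon A\rinto{(v,w)}\to(v,w)\times A\req{\zeta}$. Your second stage is more convoluted than necessary: you do not need to count \SACCs via Corollary~\ref{cor:firstresult} (which is stated for algebraic $Z$, not an arbitrary semi-algebraic $A$); since $A$ is already \SAC by hypothesis, the task is only to push a path in $A$ down into $A\rinfeq{u}$, and the trivialization does this directly for $u\in(v,w)$ by replacing $\gamma(t)$ with $\Psi^{-1}(u,\psi(\gamma(t)))$ whenever $\bphi(\gamma(t))\ge u$, the two pieces agreeing on the overlap.

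The genuine gap is in your third stage, and you have correctly located it: the retraction cannot simply ``contract the $(v,u_1)$-direction to $v$'' because $\Psi^{-1}(v,\cdot)$ is undefined — $v$ lies on the boundary of the open interval over which the isotopy trivializes, so the contracted path would have to leave the slab, and whether the limiting points still lie in $A$ is exactly what needs to be established. The paper resolves this — and in fact handles $u\in[v,w)$ uniformly — by passing to the real closed field $\RR'=\RPuis$ of algebraic Puiseux series: one contracts the path to the level $u+\eps\in(v,w)'$ (which makes sense for an infinitesimal $\eps$ even when $u=v$), obtaining a bounded-over-$\RR$ semi-algebraic connected set $G'\subset A'\rinfeq{u+\eps}$, then applies $\limeps$ and \cite[Proposition~12.49]{BPR2006} to obtain a \SAC closed bounded set $G\subset A\rinfeq{u}$ containing both endpoints. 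This is also how the paper gets non-emptiness: the path $t\mapsto\Psi^{-1}(tu+(1-t)\zeta,\zz)$ on $[0,1)$ is bounded, hence continuously extendible to $t=1$ by \cite[Proposition~3.21]{BPR2006}, giving a point of $A\req{u}$; your proposed ``decreasing family of \SACCs whose intersections behave well'' is gesturing at the same thing but is not a proof. Without the Puiseux/$\limeps$ machinery (or an equivalent compactness-in-families argument), the case $u=v$ remains open in your plan, and you yourself flag this as ``the main technical obstacle'' without offering a way to close it.
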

\begin{proof}
We first prove that $\bphi\colon A\rinto{(v,w)} \rightarrow (v,w)$ is a proper 
surjective submersion.
Since $A\rinto{(v,w)}$ is bounded and $\bphi$ is semi-algebraic and continuous, 
$\bphi\colon A\rinto{(v,w)} \rightarrow (v,w)$ is a proper map.
Let us prove that $\bphi$ is also surjective on $A\rinto{(v,w)}$ that is 
\[
 \bphi(A\rinto{(v,w)}) = (v,w).
\]

By assumption, $\bphi$ is a submersion from $A\rinto{(v,w)}$ to $(v,w)$.
Then by the semi-algebraic inverse function theorem \cite[Proposition 
3.29]{BPR2006}, $\bphi$ is an open map.
Besides, as $A\rinto{(v,w)}$ is closed and bounded, there exists a closed and 
bounded semi-algebraic set $X \subset \RR^n$ such that $A\rinto{(v,w)} = X \cap 
\bphi^{-1}((v, w)) = X\rinto{(v,w)}$. 
Then 
\[
\bphi(A\rinto{(v,w)}) = \bphi(X\rinto{(v,w)}) = \bphi(X) \cap (v,w).
\]
Since $X$ is bounded and closed, $\bphi(X)$ is closed and bounded by 
\cite[Theorem 3.23]{BPR2006}.
Hence, $\bphi(A\rinto{(v,w)})$ is both open and closed in $(v,w)$. 
Since $(v,w)$ is \SAC, $\bphi(A\rinto{(v,w)}) = (v,w)$.

By the Nash version of Thom's isotopy lemma \cite[Theorem 2.4]{CS1992}, since 
the map $\bphi\colon A\rinto{(v,w)} \rightarrow (v,w)$ is a proper surjective 
submersion, it is a globally trivial fibration.
Hence, for $\zeta \in (v,w)$, there exists a Nash diffeomorphism $\Psi$ of the 
form
\[
\begin{array}{cccc} 
\Psi\colon & A\rinto{(v,w)} &\longrightarrow & (v,w) \times A\req{\zeta} 
\\[0.3em]
& \yy & \longmapsto & (\;\bphi(\yy)\;\:,\;\,\psi(\yy)\;\;)
\end{array}.
\]
We now proceed to prove the main statement of the proposition. 
There are, at first sight, two different situations to consider:
whether $u>v$ or $u=v$ (see Figure~\ref{fig:secondresult}).
Using Puiseux series, we actually prove them simultaneously.

Take $u\in[v,w)$; we prove that $A\rinfeq{u}$ is non-empty and \SAC. To prove 
that $A\req{u}$ is non-empty, we consider $\zz \in A\req{\zeta}$ and the map
\[
    \begin{array}{cccc} 
    \gamma\colon & [0,1) &\rightarrow & A\rinto{(v, w)}\\[0.3em]
    & t & \mapsto & \Psi^{-1}(tu + (1-t)\zeta, \zz).
    \end{array}
\]
This map is well defined and continuous, since $\Psi$ is a Nash diffeomorphism 
from  $A\rinto{(v,w)}$ to $(v,w) \times A\req{\zeta}$, and satisfies 
$\bphi(\gamma(t)) = tu + (1-t)\zeta$ for every $t\in [0,1)$.
Moreover $\gamma$ is a bounded map as $A\rinto{(v,w)}$ is bounded by 
assumption. 
Then, by \cite[Proposition 3.21]{BPR2006}, $\gamma$ can be continuously 
extended to $[0,1]$, with $\bphi(\gamma(t)) = tu + (1-t)\zeta$ continuous on 
$[0,1]$, and 
$\bphi(\gamma(1)) = u$. 
Finally $\gamma(1) \in A\rinfeq{u} $ and $A\rinfeq{u}$ is not empty.

We prove now that $A\rinfeq{u}$ is \SAC. 
Consider two points $\yy$ and $\yy'$ in $A\rinfeq{u}$. 
Since $A$ is \SAC by assumption, there exists a continuous path $\gamma\colon 
[0,1] \rightarrow A$ such that $\gamma(0) = \yy$ and $\gamma(1) = \yy'$. 
Let us construct, from $\gamma$, another path that lies in $A\rinfeq{u}$. 

Let $\eps$ be an infinitesimal, and let $\RR' = \RPuis$ be the field of 
algebraic Puiseux series in $\eps$ (see \cite[Section 2.6]{BPR2006}). 
We denote by $A', (v,w)',\Psi', \psi', \bphi'$ and $\gamma'$ the extensions of 
respectively $A, (v,w), \Psi, \psi, \bphi$ and $\gamma$ to $\RR'$ in the sense 
of \cite[Proposition 2.108]{BPR2006}.
According to \cite[Exercise 2.110]{BPR2006}, $\Psi'\colon 
A\rinto{(v,w)'}'\rightarrow(v,w)'\times A\req{\zeta}'$ is a bijective map. 
Then let $g'\colon[0,1]'\subset\RR' \rightarrow A'$ be such that
\begin{align*}
    \hspace*{1.5cm}&g'(t) =  \gamma'(t)  
    &\hspace*{-1.5cm}\text{if}&\quad \bphi'(\gamma'(t))\leq u+\eps,\\
    \hspace*{1.5cm}&g'(t) = \Psi'^{-1}(u +\eps, \psi'(\gamma'(t))) 
    &\hspace*{-1.5cm} \text{if}&\quad u + \eps 
    \leq \bphi'(\gamma'(t)) < w.
\end{align*}\\[-1.35em]
This map is well defined since $u+\eps \in (v,w)$ and if $\bphi'(\gamma'(t)) = 
u+\eps$, then $\Psi'^{-1}(u +\eps, \psi'(\gamma'(t))) = \gamma'(t)$. 
Moreover $g'$ is a continuous semi-algebraic map since by \cite[Exercise 
3.4]{BPR2006}, $\Psi'^{-1}$, $\psi'$ and $\gamma'$ are continuous 
semi-algebraic maps.

Finally one observes that $g'$ is bounded over $\RR$.
Indeed if $\bphi'(\gamma'(t))\leq u+\eps$, then $g'(t)=\gamma(t)$, which is 
continuous on $[0,1]'$ and then bounded over $\RR$. 
Else $\bphi'(\gamma'(t)) \in (v,w)$ and $g'(t) \in A'\rinto{(v,w)'}$, which is 
bounded over $\RR$ by \cite[Proposition 3.19]{BPR2006} since 
$A\rinto{(v,w)}$ is. 
Hence, its image $G' = g'([0,1]')$ is a \SAC semi-algebraic set, bounded over 
$\RR$ and contained in $A'\rinfeq{u+\eps}$. 

Let $G = \limeps{G'}$. By \cite[Proposition 12.49]{BPR2006}, $G$ is a closed 
and bounded semi-algebraic set.
Then, since $\bphi$ is a continuous semi-algebraic map defined over $G$, by 
\cite[Lemma 3.24]{BPR2006} for all $\zz' \in G'$,
\[
\bphi( \limeps \zz') = \limeps \bphi(\zz') \leq \limeps{(u + \eps)} = u
\] 
So that $G$ is contained in $A\rinfeq{u}$.
In addition, since $G'$ is \SAC and bounded over $\RR$, then by 
\cite[Proposition 12.49]{BPR2006}, $G$ is \SAC and contains
$\yy = \limeps{g(0)}$ and $\yy' = \limeps{g(1)}$.
 We deduce that there exists, inside $G$, a semi-algebraic path connecting 
$\yy$ to $\yy'$ in $A\rinfeq{u}$, which ends the proof.
\end{proof}

\begin{figure}[h]\centering
\begin{minipage}[c]{0.49\linewidth}
 \includegraphics[width=\linewidth]{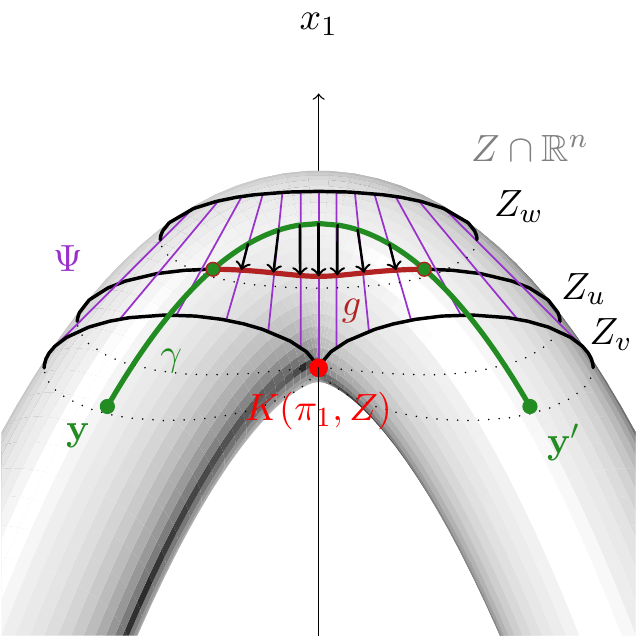}
\end{minipage}\hfill
\begin{minipage}[c]{0.49\linewidth}
 \includegraphics[width=\linewidth]{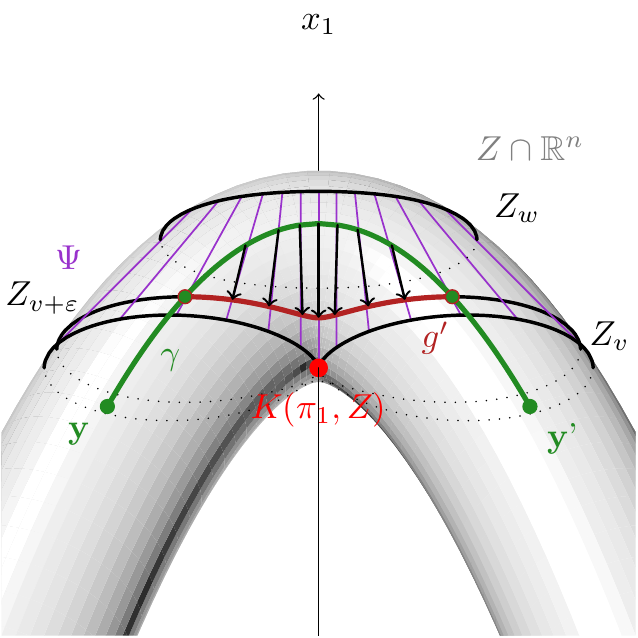}
\end{minipage}
\caption{\rev{Illustration of the two cases covered by the proof of 
Proposition~\ref{prop:secondresult} where $\bphi = \pi_1$ and 
$A=Z\rinf[\pi_1]{w}$, where $Z$ is isomorphic to 
$\V(x_1^2+x_2^2-1)\times \rev{\V(x_1+x_2^2)}$.
The two cases are quite similar; we consider here 
the one where $v$ is a critical value.}
One sees that $\Psi$ connects all the slices $A\req[\pi_1]{u}$ for $u\in 
(v,w)'$. This diffeomorphism allows to transform the problematic parts (not in 
$A\rinfeq[\pi_1]{u}$) of the initial path $\gamma$ (in green), into another 
path $g$ (in red), that lies in $A\req[\pi_1]{u} \subset A\rinfeq[\pi_1]{u}$.}
\label{fig:secondresult}
\end{figure}

The following result is a consequence of Proposition~\ref{prop:secondresult} as 
it deals with a particular case. An illustration of this statement can be found 
in Figure~\ref{fig:secondresultcor}. 
\begin{corollary}\label{cor:secondresult}
Let $Z \subset \CC^n$ be an equidimensional algebraic set of positive dimension 
and let $\bphi\colon \CC^n \rightarrow \CC$ be a regular map defined over $\RR$ 
and proper on $Z\cap \RR^n$.
Let $v<w$ be in $\RR$ such that $Z\rinto{(v,w]} \cap K(\bphi,Z) = \emptyset$, 
and let $C$ be a \SACC of $Z\rinfeq{w}$. 
Then, $C\rinfeq{v}$ is a \SACC of $Z\rinfeq{v}$.
\end{corollary}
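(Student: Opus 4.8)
The plan is to reduce the statement to Proposition~\ref{prop:secondresult} applied to $A := C\rinf{w}$, after first disposing of a degenerate situation. I would begin with the observation that, since $C$ is a \SACC of $Z\rinfeq{w}$, it is both open and closed in $Z\rinfeq{w}$; as $Z\rinfeq{v}\subseteq Z\rinfeq{w}$, this forces $C\rinfeq{v}=C\cap Z\rinfeq{v}$ to be open and closed in $Z\rinfeq{v}$. Hence the whole proof reduces to showing that $C\rinfeq{v}$ is non-empty and \SAC, since a non-empty \SAC subset that is open and closed in a \SA set is exactly one of its \SACCs.

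I would then split according to whether $C\rinto{(v,w)}$ is empty. Since $C$ is \SAC and $\bphi$ is continuous and semi-algebraic, $\bphi(C)$ is a \SAC subset of $\RR$, hence an interval contained in $(-\infty,w]$. If $C\rinto{(v,w)}=\emptyset$, this interval is disjoint from $(v,w)$, so it is either contained in $(-\infty,v]$ or equal to $\{w\}$. The case $\bphi(C)=\{w\}$ is excluded by Lemma~\ref{lem:SACCvide}: for any $\yy\in C$, the \SACC of $Z\rinfeq{w}$ containing $\yy$ is $C$ and $C\rinf{w}=\emptyset$, so that lemma would give $C\subseteq K(\bphi,Z)$, contradicting $Z\rinto{(v,w]}\cap K(\bphi,Z)=\emptyset$ together with $\emptyset\neq C\subseteq Z\req{w}$. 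Therefore $\bphi(C)\subseteq(-\infty,v]$, i.e. $C\rinfeq{v}=C$, which is non-empty and \SAC; this settles this case.

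For the main case $C\rinto{(v,w)}\neq\emptyset$, I would check the hypotheses of Proposition~\ref{prop:secondresult} for $A=C\rinf{w}$. As $Z\req{w}\cap K(\bphi,Z)\subseteq Z\rinto{(v,w]}\cap K(\bphi,Z)=\emptyset$, Corollary~\ref{cor:firstresult} shows that $A=C\rinf{w}$ is a \SACC of $Z\rinf{w}$, hence a \SAC semi-algebraic subset of $\bphi^{-1}((-\infty,w))\cap\RR^n$. One has $A\rinto{(v,w)}=C\rinto{(v,w)}$, which is non-empty by hypothesis, is bounded since $\bphi$ is proper on $Z\cap\RR^n$ (so $Z\rinto{[v,w]}$ is closed and bounded and contains it), and is closed in $\bphi^{-1}((v,w))\cap\RR^n$ (a point of its Euclidean closure with $\bphi$-value in $(v,w)$ lies in the closed set $Z\cap\RR^n$ and in $\overline{C}\cap Z\rinfeq{w}=C$). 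Finally $C\rinto{(v,w)}\subseteq Z\rinto{(v,w]}$ avoids $K(\bphi,Z)$, so by the remark following Lemma~\ref{lem:implicitparam} it is an open subset of the $d$-dimensional Nash manifold $Z\cap\RR^n-K(\bphi,Z)$, hence itself a Nash manifold, and in the charts supplied by Lemma~\ref{lem:implicitparam} the map $\bphi$ becomes $\pi_1$, so $\bphi$ is a Nash submersion on it. Proposition~\ref{prop:secondresult} then gives that $A\rinfeq{u}$ is non-empty and \SAC for every $u\in[v,w)$; taking $u=v$ and noting $A\rinfeq{v}=C\rinf{w}\cap\bphi^{-1}((-\infty,v])=C\rinfeq{v}$ concludes.

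The step I expect to require the most care is this last verification: checking that $C\rinto{(v,w)}$ simultaneously satisfies the four requirements of Proposition~\ref{prop:secondresult}. Closedness in the open strip is where it is essential that $C$ be a full \emph{component} of $Z\rinfeq{w}$ and not merely a \SAC subset, while the Nash-manifold and submersion conditions rest entirely on the absence of critical points of $\bphi$ throughout the strip $Z\rinto{(v,w]}$ — which is also exactly why the level $v$ itself is allowed to be a critical value. The remaining subtlety is the degenerate case $C\rinto{(v,w)}=\emptyset$, disposed of above by using Lemma~\ref{lem:SACCvide} to rule out $\bphi(C)=\{w\}$.
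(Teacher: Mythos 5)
Your proof is correct and follows essentially the same route as the paper: apply Proposition~\ref{prop:secondresult} to $A = C\rinf{w}$, after using Corollary~\ref{cor:firstresult} to show $A$ is a non-empty \SACC of $Z\rinf{w}$, then checking boundedness (via properness), closedness (since $C$ is closed in $\RR^n$), and the Nash manifold and submersion conditions (via the absence of $K(\bphi,Z)$ in the open strip), and finally disposing of the degenerate case $C\rinto{(v,w)}=\emptyset$. Your treatment of that degenerate case is in fact slightly more explicit than the paper's, since you rule out $C\rinfeq{v}=\emptyset$ via Lemma~\ref{lem:SACCvide}, and your opening observation that $C\rinfeq{v}$ is automatically open and closed in $Z\rinfeq{v}$ cleanly replaces the paper's closing maximality argument.
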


\begin{proof}
As $C\rinf{w} = C \cap \bphi^{-1}((-\infty,w)) \cap \RR^n$, we are going to use 
Proposition~\ref{prop:firstresult} with $A = C\rinf{w}$.

First we need to prove that $C\rinf{w}$ is a non-empty \SAC semi-algebraic set. 
Since $Z\req{w} \cap K(\bphi,Z) = \emptyset$, by 
Corollary~\ref{cor:firstresult} $C\rinf{w}$ is a \SACC of $Z\rinf{w}$. 
Hence it is non-empty and \SAC.

Then, we need to prove that $C\rinto{(v,w)}$ is a non-empty Nash manifold, 
bounded and closed in $\bphi^{-1}((v,w)) \cap \RR^n$. 
Suppose first that $C\rinto{(v,w)} = \emptyset$. 
Then 
\[
 C\rinfeq{v} \cup C\req{w} = C \et C\rinfeq{v} \cap C\req{w} = \emptyset.
\]
Since $C$ is \SAC, either $C\rinfeq{v}$ or $C\req{w}$ is empty (as they are 
both closed in $C$). 
In both cases our conclusion follows. 
It remains to tackle the case where $C\rinto{(v,w)}$ is not empty, which we 
assume to hold from now on.

We prove that $C\rinto{(v,w)}$ is bounded. Observe that $C\rinto{(v,w)} 
\subset C\rinto{[v,w]} = C \cap \RR^n \cap \bphi^{-1}([v,w])$. 
Recall that $\bphi$ is proper on $Z \cap \RR^n$ by assumption, and thus on $C 
\cap \RR^n$. Hence, $C\rinto{[v,w]}$ is bounded. 
Besides $C\rinto{(v,w)}$ is closed in $\bphi^{-1}((v,w)) \cap \RR^n$ as
\[
 C\rinto{(v,w)} = C \cap \bphi^{-1}((v,w)) \cap \RR^n,
\] 
and $C$ is closed in $\RR^n$ as it is closed in the closed set $Z\rinfeq{w}$.
Since $C\rinto{(v,w)} \cap K(\bphi,Z) = \emptyset$ then by 
\cite[Proposition 3.3.11]{BCR2013}, $C\rinto{(v,w)}$ is a Nash manifold of 
dimension $\dim(Z)$. 

To apply Proposition~\ref{prop:firstresult}, it remains to prove that $\bphi$ 
is a Nash submersion on $C\rinto{(v,w)}$. 
Let $\yy \in C\rinto{(v,w)}$. Since $\yy \notin \sing(Z)$, then 
$T_\yy C\rinto{(v,w)} = T_\yy Z \cap \RR^n$ according to 
\cite[Proposition 3.3.11]{BCR2013}. 
Since $C\rinto{(v,w)} \cap K(\bphi,Z) = \emptyset$, $d_\yy\bphi$ is onto on 
$T_\yy Z$ and since $\dim{Z}>0$, the image $d_\yy\bphi(T_\yy Z)$  is $\CC$.
Hence 
\[
 d_\yy\bphi(T_{\yy}C\rinto{(v,w)})= \RR.
\]
We just established that all the assumptions of 
Proposition~\ref{prop:secondresult} are satisfied.
One can then apply it to $C\rinf{w}$ and conclude that $C\rinfeq{v}$ is 
non-empty and \SAC.
Finally, since $C$ is a \SACC of $Z\rinfeq{w}$, any \SACC of $Z\rinfeq{v}$ 
contained in $C$ is contained in $C\rinfeq{v}$.
Thus $C\rinfeq{v}$ is a \SACC of $Z\rinfeq{v}$.
\end{proof}

\begin{figure}[h]\centering
\begin{minipage}[c]{0.49\linewidth}
 \includegraphics[width=\linewidth]{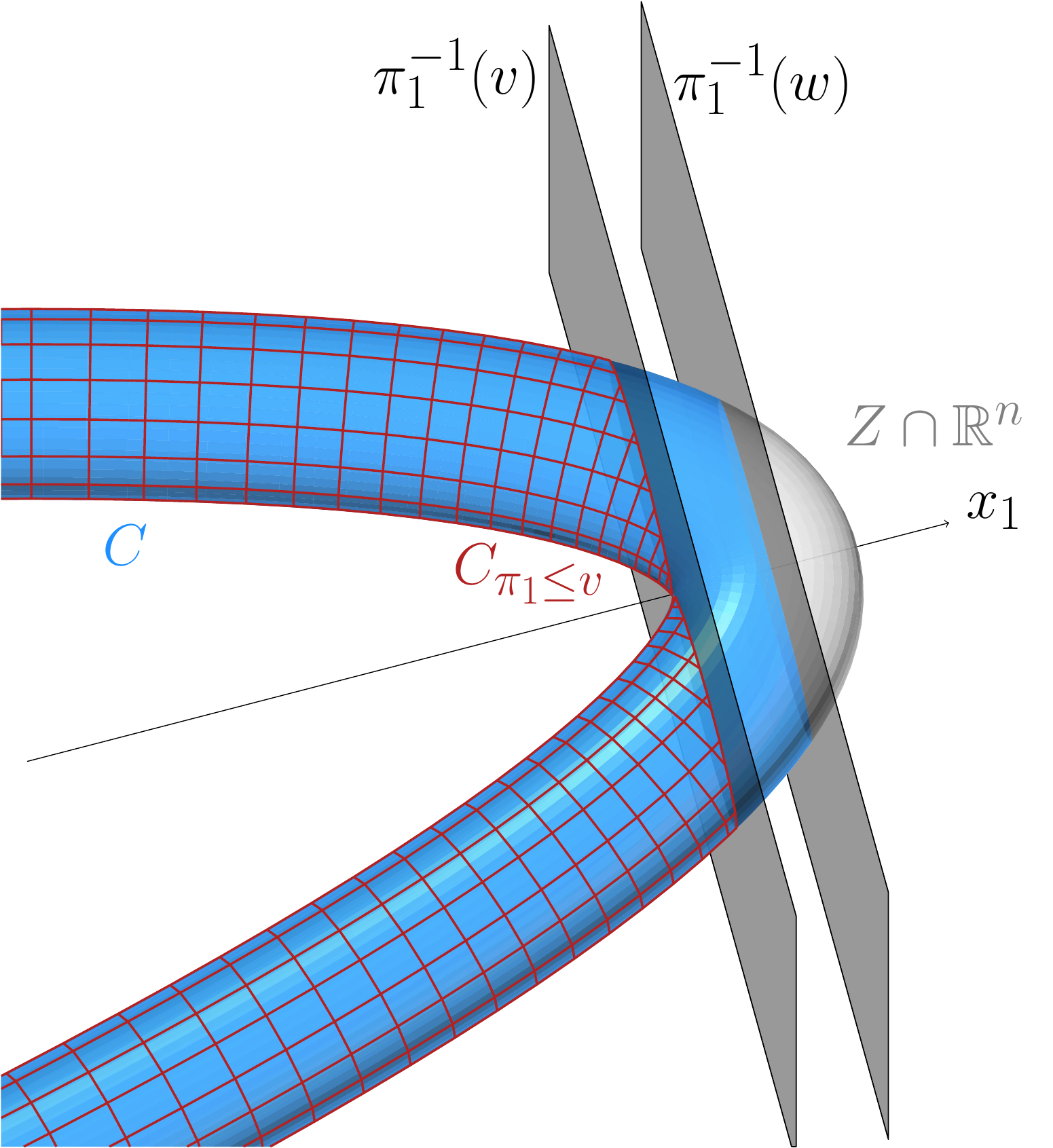}
\end{minipage}\hfill\
\begin{minipage}[c]{0.49\linewidth}
 \includegraphics[width=\linewidth]{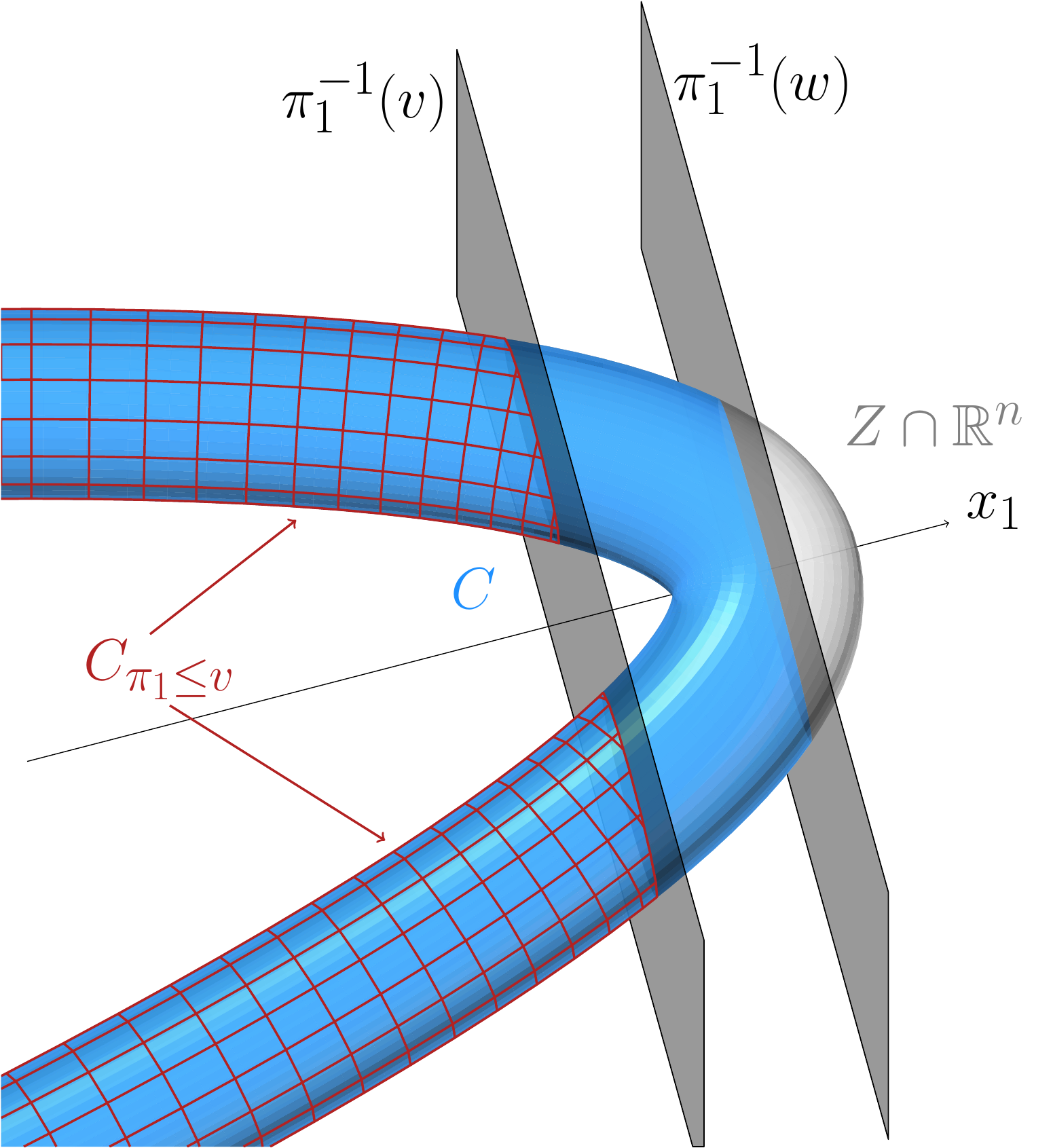}
\end{minipage}
\caption{Illustration of Corollary~\ref{cor:secondresult} where $\bphi = \pi_1$ 
and $Z$ is isomorphic to $\V(x_1^2+x_2^2-1)\times \rev{\V(x_1+x_2^2)}$ in two 
cases.
On the left $Z\rinto[\pi_1]{(v,w)} \cap K(\pi_1,Z) = \emptyset$ and we see that 
$C\rinfeq[\pi_1]{v}$ is still a \SACC of $Z\rinfeq[\pi_1]{v}$. On the 
right $Z\rinto[\pi_1]{(v,w)} \cap K(\pi_1,Z)$ contains a point and we see that 
$C\rinfeq[\pi_1]{v}$ is semi-algebraically disconnected.}
\label{fig:secondresultcor}
\end{figure}

 \section{Proof of the main connectivity result}\label{sec:proof}

Recall that $\bphi = (\phi_1, \ldots, \phi_n) \subset \RR[x_1,\dotsc,x_n]$ and 
for $1\leq i  \leq n, \map{\phi}{i} \colon \yy \mapsto (\phi_1(\yy), \ldots, 
\phi_i(\yy))$.
We denote by $W_i = W(\map{\phi}{i},V)$ the Zariski closure of the set of
critical points of the restriction of $\map{\phi}{i}$ to $V$ and recall that
\[
 K_i = W(\map{\phi}{1}, V) \cup \SW \cup \sing(V)
 \quad \et \quad F_i = \map{\phi}{i-1}[-1](\map{\phi}{i-1}(K_i))\cap V,
\]
where \rev{$\SW$ is a given subset of $V$}.
We suppose that the following assumptions hold:
\begin{description}
\item[$(\sfA)$] $V$ is $d$-equidimensional and its singular locus $\sing(V)$ is
  finite;
\item[$(\sfP)$] the restriction of the map $\map{\phi}{1}$ to $V\cap \RR^n$ is 
proper and bounded from below; 
\item[$(\sfB_1)$] $W_i$ is either empty or $(i-1)$-equidimensional and smooth 
outside $\sing(V)$;
\item[$(\sfB_2)$] for any $\yy = (\yy_1, \ldots, \yy_i)\in \CC^i$,
    $V\cap\map{\phi}{i-1}[-1](\yy)$ is either empty or 
    $(d-i+1)$-equidimensional;
\item[$(\sfC_1)$] $\SW$ is finite;
\item[$(\sfC_2)$] $\SW$ has a non-empty intersection with every \SACC of 
$W(\phiun,W_i)\cap\RR^n$.
\end{description}

Then the goal of this section is to prove that $W_i \cup F_i$ 
intersects each \SACC of $V \cap \RR^n$ and that their intersection is \SAC.

Let $\Rcal = F_i \cup W_i$. We prove that the following so-called roadmap 
property holds:
\begin{center}\itshape
\RM: ``For any \SACC $C$ of $V\cap\RR^n$, the set $C \cap \Rcal$ is non-empty 
and \SAC'',
\end{center}
by proving a truncated version of \RM and show that \rev{it} is enough.
For $u\in \RR$ let 
\begin{center}\itshape
\RMpu: ``\textit{For any \SACC $C$ of $V\rinfeq[ \map{\phi}{1}]{u}$, 
the set $C \cap \Rcal$ is non-empty and \SAC}''.
\end{center}
\begin{lemma}
 If \RMpu holds for all $u \in \RR$, then \RM holds. 
\end{lemma}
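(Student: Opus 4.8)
The plan is a routine \emph{local-to-global} transfer, resting on two elementary observations. First, since $\map{\phi}{1}$ takes finite real values on $V\cap\RR^n$, we have $V\cap\RR^n=\bigcup_{u\in\RR}V\rinfeq[\map{\phi}{1}]{u}$, and the truncations $V\rinfeq[\map{\phi}{1}]{u}$ increase with $u$. Second, every \SACC $C'$ of a truncation $V\rinfeq[\map{\phi}{1}]{u}$ is a \SAC subset of $V\cap\RR^n$, hence is contained in a \emph{unique} \SACC of $V\cap\RR^n$; in particular, if $C'$ meets a \SACC $C$ of $V\cap\RR^n$ then $C'\subset C$, and therefore $C'\cap\Rcal\subset C\cap\Rcal$. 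Throughout, $\Rcal=F_i\cup W_i$ as above.

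Fix a \SACC $C$ of $V\cap\RR^n$. To see that $C\cap\Rcal$ is non-empty, pick any $\yy\in C$ and set $u=\map{\phi}{1}(\yy)\in\RR$. Then $\yy\in V\rinfeq[\map{\phi}{1}]{u}$, so $\yy$ lies in some \SACC $C'$ of $V\rinfeq[\map{\phi}{1}]{u}$; by the observation above $C'\subset C$. Since \RMpu holds by hypothesis, $C'\cap\Rcal\neq\emptyset$, hence $C\cap\Rcal\supset C'\cap\Rcal\neq\emptyset$.

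To see that $C\cap\Rcal$ is \SAC, take $\yy,\yy'\in C\cap\Rcal$. As $C$ is \SAC, there is a \SA path $\gamma\colon[0,1]\to C$ with $\gamma(0)=\yy$ and $\gamma(1)=\yy'$. The image $\gamma([0,1])$ is the continuous image of a compact set, hence compact, so $\map{\phi}{1}\circ\gamma$ is bounded on $[0,1]$; choose $u\in\RR$ with $\map{\phi}{1}(\gamma(t))\le u$ for all $t\in[0,1]$. Then $\gamma([0,1])\subset V\rinfeq[\map{\phi}{1}]{u}$, so $\gamma$ is a \SA path joining $\yy$ and $\yy'$ inside $V\rinfeq[\map{\phi}{1}]{u}$, and consequently $\yy$ and $\yy'$ lie in one and the same \SACC $C'$ of $V\rinfeq[\map{\phi}{1}]{u}$; again $C'\subset C$. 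By \RMpu, $C'\cap\Rcal$ is \SAC, so there is a \SA path from $\yy$ to $\yy'$ inside $C'\cap\Rcal\subset C\cap\Rcal$. Hence $C\cap\Rcal$ is \SAC, which finishes the proof.

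The statement has no real depth: the only points needing care are the transfer of a \SACC of a truncation into a unique \SACC of $V\cap\RR^n$, and the use of compactness of $\gamma([0,1])$ (equivalently, the continuity of $\map{\phi}{1}$ along the path) so that a single truncation level $u$ captures the whole path. Note that neither properness nor boundedness from below of $\map{\phi}{1}$ restricted to $V\cap\RR^n$ from assumption $(\sfP)$ is actually needed here; they only guarantee that each $V\rinfeq[\map{\phi}{1}]{u}$ is closed and bounded, which will matter elsewhere in the proof of Theorem~\ref{thm:mainresult}.
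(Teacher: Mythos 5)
Your proof is correct and follows essentially the same route as the paper: one dominates $\map{\phi}{1}$ along a semi-algebraic path joining the two points, locates the \SACC of the corresponding truncation $V\rinfeq[\map{\phi}{1}]{u}$ inside $C$, and applies \RMpu. The only cosmetic point is that over a general real closed field $\RR$ one should say ``closed and bounded'' rather than ``compact,'' the boundedness of $\map{\phi}{1}\circ\gamma$ on $[0,1]$ being the semi-algebraic extreme value theorem \cite[Theorem 3.23]{BPR2006}, which is how the paper phrases it.
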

\begin{proof}
Let $C$ be a \SACC of $V \cap \RR^n$. 
Since $C$ is non-empty and \SAC, there exist $\yy$ and $\yy'$ 
in $C$, and a semi-algebraic path $\gamma\colon [0,1] 
\rightarrow C$ connecting them. Let 
\[
    u = \max\{ \map{\phi}{1}(\gamma(t)), t \in [0,1]\} \in \RR.
\]
Such a maximum $u$ exists by continuity of $\gamma$ and $ \map{\phi}{1}$, since
$[0,1]$ is closed and bounded, and it follows that $\gamma([0,1]) \subset 
V\rinfeq[ \map{\phi}{1}]{u}$.
Since $\gamma([0,1])$ is \SAC, there exists a (unique) \SACC $B$ of $V\rinfeq[ 
\map{\phi}{1}]{u}$ containing $\gamma([0,1])$. 
In particular, $B$ contains $\yy$ and $\yy'$. 
Since \RMpu[u] holds by assumption, then $B \cap \Rcal$ is non-empty. 
But as $\yy \in B \cap C$ and $B$ is \SAC,  $C$ contains $B$. 
Finally, $C\cap \Rcal$ contains $B\cap \Rcal$ and the former is non-empty.

We can suppose now, in addition, that $\yy$ and $\yy'$ are in $C \cap \Rcal$, 
and let $B$ be defined as above.
Then, $\yy$ and $\yy'$ are in $B \cap \Rcal$, which is \SAC by \RMpu[u]. 
Therefore $\yy$ and $\yy'$ are connected by a semi-algebraic path in 
$B\cap\Rcal$. Since $B \subset C$, $\yy$ and $\yy'$ are semi-algebraically 
connected in $C\cap \Rcal$. 
In conclusion, $C \cap \Rcal$ is \SAC and \RM holds.
\end{proof}

\begin{remark}
 The previous lemma trivially holds in the case of \cite[Theorem 14]{SS2011}, 
since $V \cap \RR^n$ is assumed to be bounded. 
Indeed, in this case, considering $u = \max_{\yy \in V \cap \RR^n}  
\map{\phi}{1}(\yy)$, one has $V\rinfeq[ \map{\phi}{1}]{u} = V\cap \RR^n$.
\end{remark}

\subsection{Restoring connectivity}
Before proving \RMpu for all $u \in \RR$, we need to prove the 
following result, which constitutes the core of the proof of 
Theorem~\ref{thm:mainresult}. 
This proposition shows that the connectivity property of our roadmap candidate 
is satisfied when $u$ is increasing towards singular points of $\map{\phi}{1}$ 
on $V$. This is ensured by the addition of the fibers $F_i$.

\begin{proposition}\label{prop:remainingstatement}
Let $u \in \RR$ and $C$ be a \SACC of $V\rinfeq[\map{\phi}{1}]{u}$ such 
that $C\rinf[ \map{\phi}{1}]{u}$ is non-empty. 
Let $B$ be a \SACC of $C\rinf[ \map{\phi}{1}]{u}$, then:
\begin{enumerate}
 \item $\bar{B} \cap (F_i \cup W_i)$ is non-empty;
 \item  Any point $\yy \in \bar{B} \cap (F_i \cup W_i)$ can be connected to 
a point $\zz \in B \cap (F_i \cup W_i)$ by a semi-algebraic path in $\bar{B} 
\cap (F_i \cup W_i)$.
\end{enumerate}
\end{proposition}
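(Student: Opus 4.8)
The plan is to adapt the sweeping argument of the bounded case in~\cite{SS2011}, now sweeping $V$ by the level sets of $\map{\phi}{1}$, which by $(\sfP)$ is proper and bounded below on $V\cap\RR^n$. Write $\Rcal=W_i\cup F_i$, and note the inclusions $K(\map{\phi}{1},V)=W(\map{\phi}{1},V)\cup\sing(V)\subseteq K_i\subseteq F_i\subseteq\Rcal$, together with the fact that $\map{\phi}{1}$ is the first component of $\map{\phi}{i-1}$, hence constant on every fiber of $\map{\phi}{i-1}$. As preliminary observations: by $(\sfP)$ the set $V\rinfeq[\map{\phi}{1}]{u}$ is closed and bounded, so $C$ and $\bar B$ are bounded; since $C$ is clopen in $V\rinfeq[\map{\phi}{1}]{u}$, the set $C\rinf[\map{\phi}{1}]{u}$ is clopen in $V\rinf[\map{\phi}{1}]{u}$, whence $B$ is a (bounded) \SACC of $V\rinf[\map{\phi}{1}]{u}$, with $\bar B\subseteq C$ and $\bar B-B\subseteq V\req[\map{\phi}{1}]{u}$. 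Item~1 is then immediate from Proposition~\ref{prop:SACCcontientptcrit} applied to $V$ and $\map{\phi}{1}$ (legitimate as $\dim V=d>0$): the bounded \SACC $B$ meets $K(\map{\phi}{1},V)\subseteq\Rcal$, so $\emptyset\neq B\cap\Rcal\subseteq\bar B\cap\Rcal$.

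For item~2, fix $\yy\in\bar B\cap\Rcal$. If $\map{\phi}{1}(\yy)<u$ then $\yy\in\bar B\cap V\rinf[\map{\phi}{1}]{u}=B$ and the constant path works; so assume $\map{\phi}{1}(\yy)=u$. It then suffices to build a \SA path inside $\bar B\cap\Rcal$ from $\yy$ to \emph{some} point where $\map{\phi}{1}<u$, since such a point already lies in $B\cap\Rcal$. I would argue by the following reductions.
\begin{description}
\item[\textnormal{(a) Regular descent inside $W_i$.}] If $\yy\in W_i$, $\yy\notin\sing(V)$ and $\yy\notin K(\map{\phi}{1},W_i)$: from $T_\yy W_i\subseteq T_\yy V$ one first gets $\yy\notin K(\map{\phi}{1},V)$ (else $d_\yy\map{\phi}{1}$ would vanish on $T_\yy V\supseteq T_\yy W_i$, putting $\yy$ in $K(\map{\phi}{1},W_i)$). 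Lemma~\ref{lem:adhcomponent} applied to $W_i$ (equidimensional of positive dimension $i-1$, smooth outside the finite set $\sing(V)$ by $(\sfB_1)$) gives the unique \SACC $D$ of $W_i\rinf[\map{\phi}{1}]{u}$ with $\yy\in\bar D$, which is nonempty; since $D$ is connected inside $V\rinf[\map{\phi}{1}]{u}$ and $\yy\in\bar D\cap\bar B$, uniqueness in Lemma~\ref{lem:adhcomponent} applied to $V$ forces $D\subseteq B$. Then $\bar D\subseteq\bar B\cap W_i\subseteq\bar B\cap\Rcal$ is \SAC and contains $\yy$, so it carries a \SA path from $\yy$ to any $\zz\in D$, which has value $<u$.
\item[\textnormal{(b) Descent inside a fiber.}] If $\yy\in F_i$: the fiber $\Phi_\yy=V\cap\map{\phi}{i-1}[-1](\map{\phi}{i-1}(\yy))\cap\RR^n\subseteq F_i$ is nonempty and $(d-i+1)$-equidimensional by $(\sfB_2)$, carries $\map{\phi}{1}\equiv u$, and is bounded (contained in $V\req[\map{\phi}{1}]{u}$). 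Let $G$ be the \SACC of $\bar B\cap\Phi_\yy$ through $\yy$ and let $\zz_0$ minimize $\phi_i$ on the compact set $G$; then $\zz_0\in W_i\cup\sing(V)$. Indeed, if $\zz_0\notin W_i$ and $\zz_0\notin\sing(V)$, then $\map{\phi}{i-1}$ is a submersion of $V$ at $\zz_0$ (else $\map{\phi}{i}$ would drop rank at $\zz_0\in\reg(V)$, putting $\zz_0$ in $W_i$); hence $\Phi_\yy$ is smooth of dimension $d-i+1$ at $\zz_0$, $\phi_i|_{\Phi_\yy}$ is a submersion there, and $\map{\phi}{1}$ is a submersion of $V$ at $\zz_0$, so $\zz_0\notin K(\map{\phi}{1},V)$. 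Lemmas~\ref{lem:voisinageSAC}--\ref{lem:adhcomponent} at $\zz_0$, together with $\zz_0\in G\subseteq\bar B$, then give $(V\cap N(\zz_0))\req[\map{\phi}{1}]{u}\subseteq\overline{(V\cap N(\zz_0))\rinf[\map{\phi}{1}]{u}}\subseteq\bar B$; since $\Phi_\yy\cap N(\zz_0)\subseteq(V\cap N(\zz_0))\req[\map{\phi}{1}]{u}$ and (for $N(\zz_0)$ small) this set is connected, it lies in $G$, and the open map $\phi_i|_{\Phi_\yy}$ takes values $<\phi_i(\zz_0)$ on it — contradicting minimality. So $G$ supplies a \SA path from $\yy$ inside $\bar B\cap F_i$ to a point $\zz_0\in W_i\cup\sing(V)$ of value $u$; if $\zz_0\in W_i-\sing(V)$ we are reduced to case~(a) or case~(c), and the finitely many points of $\sing(V)$ are handled directly via their own fibers.
\item[\textnormal{(c) Reconnection via $\SW$.}] The remaining case is $\yy\in W(\map{\phi}{1},W_i)\cap\RR^n$ (with $\yy\notin\sing(V)$, harmless since $\sing(W_i)\subseteq\sing(V)$); here $W_i$ need not descend below value $u$ near $\yy$. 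Let $E$ be the \SACC of $W(\map{\phi}{1},W_i)\cap\RR^n$ through $\yy$; by $(\sfC_2)$ it contains a point $s\in\SW\subseteq K_i\subseteq F_i$. One reconnects $\yy$ to $s$ inside $\bar B\cap\Rcal$, staying below value $u$, using the fibration Corollaries~\ref{cor:firstresult}--\ref{cor:secondresult} applied to $W_i$ with the proper map $\map{\phi}{1}$; this brings us back to case~(b) at the point $s$.
\end{description}
A bookkeeping argument using the finiteness of $\sing(V)\cup\SW$ then shows that after finitely many such reductions one reaches a point of value $<u$, and the concatenated path lies in $\bar B\cap\Rcal$; this proves item~2. (When $W_i=\emptyset$ only case~(b) occurs, with $\zz_0\in\sing(V)$.)

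The main obstacle is case~(c). At a critical point of $\map{\phi}{1}$ on the polar variety $W_i$, the candidate set $\Rcal$ may locally be a strict minimum of $\map{\phi}{1}$, so there is no local descent inside $\Rcal$; one must route through $\SW$, which by $(\sfC_2)$ meets every connected component of $W(\map{\phi}{1},W_i)\cap\RR^n$ and sits inside $F_i$. In the bounded case of~\cite{SS2011}, with $\map{\phi}{1}$ a linear projection and $V\cap\RR^n$ compact, such a reconnection automatically stays below the current level; here the level sets of $\map{\phi}{1}$ are not hyperplanes, and forcing the reconnecting path to remain below value $u$ — hence inside $\bar B$ rather than wandering into $V-\bar B$ — is the delicate step, requiring the fibration Corollaries~\ref{cor:firstresult}--\ref{cor:secondresult}, the equidimensionality $(\sfB_2)$ of the fibers of $\map{\phi}{i-1}$, and the boundedness supplied by $(\sfP)$ to be used together.
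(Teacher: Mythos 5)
Item~1 matches the paper: both apply Proposition~\ref{prop:SACCcontientptcrit} to the bounded \SACC $B$, using $(\sfP)$ for boundedness and $K(\map{\phi}{1},V)\subset F_i$.

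For item~2 there is a genuine gap, and it is precisely the one the paper's Lemma~\ref{lem:case2} is designed to close. The paper's engine for descent from $\yy\in\bar B\cap F_i$ is a Puiseux-series argument: by curve selection one takes a germ $\psi$ of curves in $B$ approaching $\yy$, passes to $\RR'=\RPuis$, and applies Lemma~\ref{lem:fibrecontientptcrit} to the fiber $\ext(B,\RR')\cap\map{\phi}{i-1}[-1](\map{\phi}{i-1}(\psi))$ --- a fiber shifted \emph{infinitesimally into} $B$. The critical point of $\map{\phi}{i}$ found there is a germ of curves living inside $B$, so a representative $\tau$ of it yields a semi-algebraic path from $\ww\in\bar B-B$ to a point of $B\cap K(\map{\phi}{i},V)\subset B\cap(W_i\cup F_i)$: this is the actual descent below level $u$. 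Your case~(b) replaces this with the exact fiber $\Phi_\yy$ at level $\map{\phi}{1}=u$; the minimizer $\zz_0$ of $\phi_i$ on $G\subset\bar B\cap\Phi_\yy\subset V\req[\map{\phi}{1}]{u}$ remains at level exactly $u$, so you have found another boundary point of $B$ in $W_i\cup\sing(V)$, not a point of $B\cap\Rcal$. Consequently cases~(b) and~(c) merely shuffle you around the level set $\{\map{\phi}{1}=u\}$, and the concluding ``bookkeeping argument using the finiteness of $\sing(V)\cup\SW$'' that would make the chain (b)$\to$(c)$\to$(b)$\to\dotsb$ terminate is not given and does not obviously exist: nothing prevents case~(b), started from some $s\in\SW$, from minimizing $\phi_i$ at a point that routes back through the same $s$.

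A second, smaller issue is case~(c): you need $E\subset\bar B$ so that routing through $E$ stays in $\bar B$, but you never establish this (and ``staying below value $u$'' is incorrect, since by Sard $\map{\phi}{1}$ is constant on $E$). The paper handles this by introducing $D$, the \SACC of $(W_i)\rinfeq[\map{\phi}{1}]{u}$ through $\yy$, and splitting on whether $D\subset\bar B$: in the case $D\subset\bar B$ it uses Sard to get $E\subset D\subset\bar B$ and then lands in $\SW\subset F_i$; in the case $D\not\subset\bar B$ it detects a boundary crossing directly via the contrapositive of statement~$c)$ of Lemma~\ref{lem:voisinageSAC}, which puts the crossing point in $K(\map{\phi}{1},V)\subset F_i$. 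In both branches the final step is Lemma~\ref{lem:case2}. Your reorganization by the type of $\yy$ (regular in $W_i$, in $F_i$, or in $W(\map{\phi}{1},W_i)$) misses this dichotomy and, more importantly, lacks the one-step Puiseux descent that makes the argument non-circular. Your case~(a), by contrast, is a clean and correct piece of reasoning not spelled out in the paper, but it alone is not enough.
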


Let us begin with a technical lemma:

\begin{lemma}~\label{lem:fibrecontientptcrit}
Let $\KK$ be a real closed field containing $\RR$ and $\KKbar$ be its algebraic 
closure.
Let $Z\subset\KKbar^n$ be a $d$-equidimensional algebraic set, where $d>0$. 
Assume that for any $\zz \in \KKbar^{i-1}$, 
\begin{center}
$Z\cap\map{\phi}{i-1}[-1](\zz)$ is either empty or $(d-i+1)$-equidimensional.
\end{center}
Let $B$ be a \emph{bounded} \SACC of $Z\cap \KK^n$ and let $\yy \in B$.
Let $H$ be the \SACC of $B \cap \map{\phi}{i-1}[-1]( \map{\phi}{i-1}(\yy))$ 
containing $\yy$. Then, the intersection $H \cap K( \map{\phi}{i},Z)$ is not 
empty.
\end{lemma}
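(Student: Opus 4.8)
My plan is to localize everything at the fiber of $\map{\phi}{i-1}$ through $\yy$, apply Proposition~\ref{prop:SACCcontientptcrit} there to the single polynomial $\phi_i$, and then transfer the resulting critical point back to $\map{\phi}{i}$ via the rank characterization of Lemma~\ref{lem:caraccritrank}. First I would note that every statement of Section~\ref{sec:auxiliary} is established by purely semi-algebraic arguments and therefore remains valid with $\RR$ replaced by the real closed field $\KK$. Set $\zz = \map{\phi}{i-1}(\yy) \in \KK^{i-1}$ and $Z_\zz = Z \cap \map{\phi}{i-1}[-1](\zz)$; since $\yy \in Z_\zz$, the hypothesis gives that $Z_\zz$ is $(d-i+1)$-equidimensional, and $H$ is the \SACC of $B \cap (Z_\zz \cap \KK^n)$ containing $\yy$ (here I use $B \subset Z \cap \KK^n$). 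If $d - i + 1 = 0$, then $Z_\zz$ is finite, $H = \{\yy\}$, and $\yy \in K(\map{\phi}{i}, Z)$ trivially: either $\yy \in \sing(Z)$, or $\yy \in \reg(Z)$ and then $\dim T_\yy Z = d < i$ forces $d_\yy \map{\phi}{i}(T_\yy Z) \subsetneq \KKbar^i$. So from now on I assume $d - i + 1 > 0$.

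The second step is a connectivity bookkeeping. Because $B$ is a \SACC of $Z \cap \KK^n$, it is open and closed in it, so $B \cap (Z_\zz \cap \KK^n)$ is open and closed in $Z_\zz \cap \KK^n$; hence $H$, being a \SACC of $B \cap (Z_\zz \cap \KK^n)$, is a \SACC of $Z_\zz \cap \KK^n$. Moreover $H$ is bounded (it is contained in $B$) and closed in $\KK^n$ (it is clopen in the algebraic set $Z_\zz \cap \KK^n$), so $\phi_i(H)$ is a closed bounded semi-algebraic subset of $\KK$, and I may choose $u \in \KK$ with $\phi_i(\zz') < u$ for every $\zz' \in H$. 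Then $H \subset (Z_\zz)\rinf[\phi_i]{u}$, and since $H$ is clopen in $Z_\zz \cap \KK^n$ it is also clopen, hence a \SACC, of the open subset $(Z_\zz)\rinf[\phi_i]{u}$. Applying Proposition~\ref{prop:SACCcontientptcrit} over $\KK$ — to $Z_\zz$, the map $\phi_i$, the value $u$, and the bounded \SACC $H$ — yields a point $\zz_0 \in H \cap K(\phi_i, Z_\zz)$.

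It then remains to prove $K(\phi_i, Z_\zz) \subset K(\map{\phi}{i}, Z)$. Let $\gg$ be generators of $\I(Z)$ and $\gg'$ be generators of $\I(Z_\zz)$. For any $\zz_0 \in Z_\zz$, the polynomials of $\gg$ together with $\phi_1 - \zz_1, \dots, \phi_{i-1} - \zz_{i-1}$ all vanish on $Z_\zz$, hence lie in $\I(Z_\zz)$, so their gradients at $\zz_0$ lie in the row span of $\jac_{\zz_0}(\gg')$; consequently the row span of $\jac_{\zz_0}([\gg, \map{\phi}{i}])$ is contained in that of $\jac_{\zz_0}([\gg', \phi_i])$, and thus $\rank \jac_{\zz_0}([\gg, \map{\phi}{i}]) \le \rank \jac_{\zz_0}([\gg', \phi_i])$. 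By Lemma~\ref{lem:caraccritrank}, and since $Z$ is $d$-equidimensional while $Z_\zz$ is $(d-i+1)$-equidimensional, $K(\map{\phi}{i}, Z)$ consists of the $\zz_0 \in Z$ with $\rank \jac_{\zz_0}([\gg, \map{\phi}{i}]) < n - d + i$, whereas $K(\phi_i, Z_\zz)$ consists of the $\zz_0 \in Z_\zz$ with $\rank \jac_{\zz_0}([\gg', \phi_i]) < n - (d - i + 1) + 1 = n - d + i$. Hence $\zz_0 \in K(\phi_i, Z_\zz)$ forces $\zz_0 \in K(\map{\phi}{i}, Z)$, and the point obtained above witnesses $H \cap K(\map{\phi}{i}, Z) \neq \emptyset$.

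The step I expect to require the most care is this inclusion $K(\phi_i, Z_\zz) \subset K(\map{\phi}{i}, Z)$. Geometrically it is transparent — on $Z_\zz$ the map $\map{\phi}{i}$ coincides with $\phi_i$ up to the additive constant $\zz$, so non-submersivity of $\phi_i$ along the fiber forces non-submersivity of $\map{\phi}{i}$ on $Z$ — but since the obvious equations $\gg, \phi_1 - \zz_1, \dots, \phi_{i-1} - \zz_{i-1}$ of $Z_\zz$ need not generate a radical ideal, one cannot manipulate tangent spaces naively and must argue at the level of gradient spans and the rank characterization, as above. The remaining ingredients — the validity of Section~\ref{sec:auxiliary} over $\KK$, the identification of $H$ as a bounded \SACC of a truncation $(Z_\zz)\rinf[\phi_i]{u}$, and the properties of semi-algebraic images of closed bounded sets — are routine.
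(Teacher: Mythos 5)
Your proof is correct and follows essentially the same route as the paper: obtain a critical point of $\phi_i$ on the bounded fiber slice $H$ (you invoke Proposition~\ref{prop:SACCcontientptcrit}, the paper invokes Lemma~\ref{lem:SACCvide} directly on the $\phi_i$-minimizer — the same argument, since the former is proved from the latter), then upgrade it to a singular point of $\map{\phi}{i}$ on $Z$ via the rank characterization of Lemma~\ref{lem:caraccritrank}. Your explicit detour through generators $\gg'$ of $\I(Z_\zz)$ and the gradient-span inclusion $\mathrm{rowspan}\,\jac_{\zz_0}([\gg,\map{\phi}{i}]) \subset \mathrm{rowspan}\,\jac_{\zz_0}([\gg',\phi_i])$ is a careful clarification of a point the paper leaves implicit — it applies Lemma~\ref{lem:caraccritrank} to $(\gg,\map{\phi}{i-1}-\map{\phi}{i-1}(\yy))$, which need not generate $\I(Z_\zz)$, and the desired rank inequality then holds precisely by your argument.
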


\begin{proof}
Let $Y = Z \cap \map{\phi}{i-1}[-1](\map{\phi}{i-1}(\yy))$. 
By assumption, $Y$ is an equidimensional algebraic set of dimension $d-i+1$.
Besides, $H$ is a bounded \SACC of $Y \cap \KK^n$, since $B$ is a 
bounded \SACC of $Z \cap \KK^n$.

Recall that $\bphi = (\phi_1,\dotsc,\phi_n)$.
Then $\phi_i(H) \subset \RR$ is a closed and bounded semi-algebraic set by 
\cite[Theorem 3.23]{BPR2006}. 
In particular, $\phi_i$ reaches its minimum on $H$.
Let $\zz \in H$ be such that $\phi_i(\zz)= \min \phi_i(H)$, so that 
$H\rinf[\phi_i]{\phi_i(\zz)}$ is empty. Then, by Lemma~\ref{lem:SACCvide},
\[
    \zz \in H \cap K(\phi_i,Y).
\]
Let $\gg \subset \KK[x_1,\dotsc,x_n]$ be a sequence of generators of $\I(Z)$,
so that $Y = \V(\gg, \map{\phi}{i-1} - \map{\phi}{i-1}(\yy))$.
Since $Y$ is $(d-i+1)$-equidimensional, Lemma~\ref{lem:caraccritrank} 
establishes that $\zz$ is such that 
\[
\rank \left[\begin{array}{c}\jac_\zz(\gg) \\ \jac_\zz( \map{\phi}{i-1}) \\ 
\jac_\zz(\phi_i) 
\end{array}\right] < n-(d-(i-1))+1.
\]
Since $ \map{\phi}{i} = ( \map{\phi}{i-1},\phi_i)$, one deduces that
\[
 \rank \left[\begin{array}{c}\jac_\zz(\gg) \\ \jac_\zz( \map{\phi}{i})
\end{array}\right] < n-d +i,
\]
which means that $\zz \in H \cap K( \map{\phi}{i},Z)$. 
Finally, the latter set is non-empty and the statement is proved.
\end{proof}

\begin{notation}
For the rest of the subsection let $u$, $C$ and $B$ as defined in
Proposition~\ref{prop:remainingstatement}.
\end{notation}
Let us deal with one particular case of the second item of 
Proposition~\ref{prop:remainingstatement}.
\begin{lemma}\label{lem:case2}
Let $\yy$ be in $\bar{B} \cap F_i$. Then, there exists a point 
$\zz \in B \cap (F_i \cup W_i)$ and a semi-algebraic path in 
$\bar{B} \cap (F_i \cup W_i)$ connecting $\yy$ to $\zz$. 
\end{lemma}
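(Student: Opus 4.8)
The plan is to dispose of the trivial case, then travel inside a single fibre of $\map{\phi}{i-1}$ to reach a point of $W_i$, then descend along $W_i$ (with detours through the fibres of $F_i$) down to $B$, controlling closures throughout by an infinitesimal deformation as in Proposition~\ref{prop:secondresult}. First, since $C$ is a \SACC of $V\rinfeq[\map{\phi}{1}]{u}$ it is closed in $V\cap\RR^n$, so $\bar B\subseteq C$; and since $B$ is closed in $C\rinf[\map{\phi}{1}]{u}$, one gets $\bar B\cap\{\map{\phi}{1}<u\}=B$. Hence if $\map{\phi}{1}(\yy)<u$ then $\yy\in B$ and the constant path with $\zz=\yy$ suffices; so we may assume $\map{\phi}{1}(\yy)=u$, and then $\yy\in\bar B\setminus B$. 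Note also that $\bar B\subseteq V\rinfeq[\map{\phi}{1}]{u}$ is bounded, since by $(\sfP)$ the map $\map{\phi}{1}$ is proper and bounded from below on $V\cap\RR^n$.

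Since $\yy\in F_i$, there is $p\in K_i$ with $\map{\phi}{i-1}(p)=\map{\phi}{i-1}(\yy)$; set $Y=V\cap\map{\phi}{i-1}[-1](\map{\phi}{i-1}(\yy))$, so $Y\subseteq F_i$ and, by $(\sfB_2)$, $Y$ is $(d-i+1)$-equidimensional (it contains $\yy$). As $\map{\phi}{1}$ is the first component of $\map{\phi}{i-1}$, it equals $u$ on $Y$. Let $H$ be the \SACC of the closed bounded semialgebraic set $\bar B\cap Y\cap\RR^n$ containing $\yy$, and let $\zz_1\in H$ attain $\min_H\phi_i$, so $H\rinf[\phi_i]{\phi_i(\zz_1)}=\emptyset$. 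Arguing as in the proof of Lemma~\ref{lem:fibrecontientptcrit} — Lemma~\ref{lem:SACCvide} on the equidimensional set $Y$ gives $\zz_1\in K(\phi_i,Y)$, and the rank criterion of Lemma~\ref{lem:caraccritrank} then gives $\zz_1\in K(\map{\phi}{i},V)=W_i\cup\sing(V)$ — one obtains $\zz_1\in H\cap(W_i\cup\sing(V))$, joined to $\yy$ by a semialgebraic path lying in $H\subseteq\bar B\cap Y\cap\RR^n\subseteq\bar B\cap F_i$. If $\zz_1\in\sing(V)$ a variant of the descent below applies, since $\sing(V)$ is finite and $\sing(V)\subseteq K_i\subseteq F_i$; so we may assume $\zz_1\in W_i\cap\bar B$.

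It remains to descend from $\zz_1$, which lies at level $\map{\phi}{1}=u$, to a point of $B$. By $(\sfB_1)$, $W_i$ is $(i-1)$-equidimensional, smooth outside $\sing(V)$, and $\map{\phi}{1}$ is proper on $W_i\cap\RR^n$ (being proper on $V\cap\RR^n$); so Corollaries~\ref{cor:firstresult} and \ref{cor:secondresult} are available for the components of $W_i$. Let $H'$ be the \SACC of $W_i\cap\bar B\cap\RR^n$ containing $\zz_1$ and let $\zz\in H'$ attain $\min_{H'}\map{\phi}{1}$. If $\map{\phi}{1}(\zz)<u$, then $\zz\in\bar B\cap\{\map{\phi}{1}<u\}=B$ and $\zz\in W_i$, so $\zz\in B\cap(F_i\cup W_i)$, and a path inside $H'\subseteq\bar B\cap(F_i\cup W_i)$ joins $\zz_1$ to $\zz$ — done. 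Otherwise $\map{\phi}{1}(\zz)=u$, so $H'\subseteq(W_i)\req[\map{\phi}{1}]{u}$; then by Lemma~\ref{lem:SACCvide} applied to a component of $W_i$ through $\zz_1$, the set $H'$ meets $W(\map{\phi}{1},W_i)\cup\sing(W_i)\subseteq W(\map{\phi}{1},W_i)\cup\sing(V)$, and using $(\sfC_2)$ together with $\SW\subseteq F_i$ one reaches a point of $\SW$ in the same \SACC of $W(\map{\phi}{1},W_i)\cap\RR^n$, along whose $\map{\phi}{i-1}$-fibre (lying in $F_i$) one continues; alternating between $W_i$ and fibres of $F_i$ this way, one eventually lands in $B$.

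The main obstacle is that none of $Y\cap\RR^n$, $W_i\cap\RR^n$ or $W(\map{\phi}{1},W_i)\cap\RR^n$ is contained in $\bar B$, so $H$ and $H'$ are \SACCs of their intersections with $\bar B$, not of the algebraic sets themselves; Lemma~\ref{lem:SACCvide} and Corollaries~\ref{cor:firstresult}--\ref{cor:secondresult} then cannot be quoted verbatim, and there is no a priori control on whether the path stays in $\bar B$ or actually reaches the original $B$. This is resolved exactly as in Proposition~\ref{prop:secondresult}: one passes to the Puiseux field $\RPuis$ and replaces $u$ by $u+\eps$, so that $\yy$ lies in the \emph{interior} slice $V\rinf[\map{\phi}{1}]{u+\eps}$, hence in a genuine \SACC $\tilde B$ of it, still bounded over $\RR$; since $\tilde B$ is open and closed in $V\rinf[\map{\phi}{1}]{u+\eps}$, its intersections with $Y$, $W_i$ and $W(\map{\phi}{1},W_i)$ become unions of \SACCs of the corresponding slices $(\cdot)\rinf[\map{\phi}{1}]{u+\eps}$, which is precisely the form required to invoke the auxiliary results; one then performs the construction above over $\RPuis$ and applies $\limeps$ to obtain a semialgebraic path over $\RR$ inside $\bar B$ ending at a point of $B$. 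Carrying out this deformation coherently, and controlling the alternating path in the last step, is where the bulk of the proof lies.
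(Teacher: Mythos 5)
Your proposal and the paper's proof both pass to the Puiseux field, but they deform different objects, and this difference is fatal to your approach.

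You take the fibre at the real point $\yy$ itself: you set $Y=V\cap\map{\phi}{i-1}[-1](\map{\phi}{i-1}(\yy))$ and look for a critical point of $\map{\phi}{i}$ inside $\bar B\cap Y$. But $\map{\phi}{1}$ is the first coordinate of $\map{\phi}{i-1}$, so $\map{\phi}{1}\equiv u$ on all of $Y$; every point $\zz_1$ you find this way satisfies $\map{\phi}{1}(\zz_1)=u$ exactly, hence lies in $\bar B\setminus B$, and \emph{any} representative Puiseux curve of such a point stays at level $u$ for all parameter values. Replacing the cut $u$ by $u+\eps$ does not help: $\tilde B$ becomes a genuine $\RPuis$-\SACC and $\yy$ becomes interior to the slice, as you say, but the fibre $Y$ at $\yy$ is unchanged and still sits entirely on the level set $\map{\phi}{1}=u$, so the critical point produced by Lemma~\ref{lem:fibrecontientptcrit} remains at level $u$. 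You are thus forced into the ``descent via $W_i$, alternating between $W_i$ and fibres'' step, which you yourself flag as the bulk of the proof and which is not actually justified: there is no termination argument, and it is not clear the wandering path ever enters $B$ rather than looping among boundary components.

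The paper does something genuinely different and more direct. From $\yy\in\bar B$, the curve selection lemma \cite[Th.~3.22]{BPR2006} gives a path $\gamma$ with $\gamma(0)=\yy$ and $\gamma(t)\in B$ for $t>0$; its right germ $\psi$ is a Puiseux point of $\ext(B,\RPuis)$, so in particular $\map{\phi}{1}(\psi)<u$ (an infinitesimal below $u$). One then applies Lemma~\ref{lem:fibrecontientptcrit} over $\RPuis$ \emph{to the fibre at $\psi$}, not at $\yy$: the critical point $\zeta$ obtained lies in $H_1\subset\ext(B,\RPuis)\cap\ext(K(\map{\phi}{i},V)\cap\RR^n,\RPuis)$, i.e.\ it is an $\RPuis$-point of $B\cap K(\map{\phi}{i},V)$, not of the boundary. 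Taking a real representative $\tau$ of $\zeta$ on some $(0,t_0)$ produces $\tau(t)\in B\cap K(\map{\phi}{i},V)$ for $t>0$ and $\tau(0)=\limeps\zeta\in\bar B$, so $\zz=\tau(t_0/2)$ is the desired point of $B\cap(F_i\cup W_i)$, and the path from $\yy$ to $\zz$ is assembled from $\limeps H_1$ and $\tau$. In short: the missing idea in your proposal is to perturb the base point of the fibre (via the germ $\psi$ inside $B$) rather than the cut level $u$; only the former moves the critical point strictly into $B$ and lets a Puiseux representative perform the descent automatically.
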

\begin{proof} 
Let $\yy$ be in $\in \bar{B} \cap F_i$. We assume that $\yy\notin B$ so that 
$ \map{\phi}{1}(\yy) = u$, otherwise taking $\zz = \yy$ would end the proof.
Since $\yy \in \bar{B}$, by the curve selection lemma \cite[Th. 3.22]{BPR2006}, 
there exists a semi-algebraic path $\gamma\colon [0,1] \rightarrow \RR^n$ such 
that $\gamma(0) = \yy$ and $\gamma(t) \in B$ for all $t \in (0,1]$. 
Let $\eps$ be an infinitesimal, $\RR' = \RR\left\langle\eps\right\rangle$ be 
the field of algebraic Puiseux series and $\psi = (\psi_1,\dotsc,\psi_n)$ be 
the semi-algebraic germ of $\gamma$ at the right of the origin (see 
\cite[Section 3.3]{BPR2006}). 
According to \cite[Theorem 3.17]{BPR2006}, we can identify $\psi$ 
with an element of $(\RR')^n$ (by a slight abuse of notation, we will denote 
them 
in the same manner). 
Hence by \cite[Proposition 3.21]{BPR2006}, $\limeps{\psi} = \yy$.
Let finally
\[
H = \ext(B,\RR') \cap  \map{\phi}{i-1}[-1]( \map{\phi}{i-1}(\psi)) \subset 
(\RR')^n
\]
where $\ext(B,\RR')$ is the extension of $B$ to $\RR'$ and
$\map{\phi}{j}$ for $1\leq j \leq n$, with some notation abuse, still denote 
the extension of $\map{\phi}{j}$ to $\RR'$. 

Since $\gamma((0,1)) \subset B$, by \cite[Proposition 3.19]{BPR2006},
$\psi$ is  in $\ext(B,\RR')$. Hence, $\psi$ in $H$ and $H$ is non-empty. 
Moreover $B$ is bounded since $\phiun\colon V \cap \RR^n\to\RR$ is a proper map 
bounded below by assumption $(\sfP)$. Then \cite[Proposition 3.19]{BPR2006} 
states that $\ext(B,\RR')$ and then $H$ are bounded over $\RR$. Hence the 
map $\limeps$ is well defined on $H$ and
\[
	\yy \in \limeps{H} = \{ \limeps{\yy'}, \: \yy' \in H \} \subset \RR^n.
\]
Finally, as $\map{\phi}{i-1}$ is semi-algebraic and continuous, $\limeps{H}$ is 
contained in $\bar{B} \cap \map{\phi}{i-1}[-1](\map{\phi}{i-1}(\yy))$ by 
\cite[Lemma 3.24]{BPR2006}. But $\yy \in F_i$, so that 
\[
  \map{\phi}{i-1}[-1](\map{\phi}{i-1}(\yy)) \subset 
 \map{\phi}{i-1}[-1](\map{\phi}{i-1}(K_i)),
\]
and finally $\limeps H$ is actually in $\bar{B}\cap F_i$.

Let $H_1$ be  the \SACC of $H$ containing $\psi$. By \cite[Proposition 
5.24]{BPR2006}, $\limeps H_1$ is the \SACC of $\limeps H$ containing $\yy$.
Actually, we just proved that every $\ww$ in $\limeps H_1$ can be 
semi-algebraically connected to $\yy$ into $\bar{B} \cap F_i$. 
We find now some $\ww\in\limeps H_1$ that can be connected to a point $\zz \in 
B 
\cap (F_i \cup W_i)$ to end the proof. Such a $\ww$ must be the origin of a 
germ of semi-algebraic functions that lies in $B \cap (W_i \cup F_i)$.

By assumption $(\sfA)$, $V$ is $d$-equidimensional.
By assumption $(\sfB_2)$, for all $\zz \in V$, the algebraic set $V \cap 
\map{\phi}{i-1}[-1]( \map{\phi}{i-1}(\zz))$ is $(d-i+1)$-equidimensional.
Then, if we denote by $\CC'$ the algebraic closure of $\RR'$, it is an 
algebraic closed extension of $\CC$, so that the algebraic sets of $(\CC')^n$ 
\[
   Z = \big\{ \zz \in (\CC')^n \mid \forall h \in \I(V), h(\zz) = 0 \big\} \et
   Z \cap \map{\phi}{i-1}[-1]( \map{\phi}{i-1}(\psi)))
\]
are equidimensional of dimension respectively $d$ and $(d-i+1)$.
Since $B$ is a \SACC of $V\rinf[ \map{\phi}{1}]{u}$, then, by 
\cite[Proposition 5.24]{BPR2006}, $\ext(B,\RR')$ is a \SACC of
\[
 \ext(V\rinf[ \map{\phi}{1}]{u},\RR') =
 \ext(V\cap\RR^n,\RR')\rinf[\map{\phi}{1}]{u}
 = Z\rinf[\phiun]{u},
\]
by \cite[Transfer Principle, Th. 2.98]{BPR2006}.
Then, since $H_1$ is a \SACC of $H=\ext(B,\RR')\cap\map{\phi}{i-1}[-1]( 
\map{\phi}{i-1}(\psi))$, one can apply Lemma~\ref{lem:fibrecontientptcrit}
on $Z$ with $\KK = \RR'$. Hence
\[
     H_1 \cap K( \map{\phi}{i}, Z) \neq \emptyset.
\]
By Lemma~\ref{lem:caraccritminor}, $K(\map{\phi}{i},Z)$ is defined over $\RR$ 
as $V$  and $\map{\phi}{i}$ are. 
Then, by \cite[Transfer Principle, Th. 2.98]{BPR2006},
\[
 K( \map{\phi}{i},Z) \cap (\RR')^n = \ext(K( 
\map{\phi}{i},V)\cap \RR^n,\RR'),
\]
so that 
\[
 \emptyset \quad \subsetneq 
 \quad H_1 \cap \ext(K( \map{\phi}{i},V) \cap\RR^n,\RR') \quad \subset \quad
 \ext(B \cap K( \map{\phi}{i},V),\RR').
\]
Therefore let $\zeta \in \ext(B \cap K( \map{\phi}{i},V),\RR')$, let $\ww = 
\limeps{\zeta}$ and $\tau$ be a representative of $\zeta$ on $(0,t_0)$, where 
$t_0 >0$. 
By \cite[Proposition 3.21]{BPR2006}, we can continuously extend $\tau$ to 0 
such that $\tau(0) = \ww$. 
Besides for all $t\in(0,t_0)$, 
\[
 \tau(t) \in B \cap K( \map{\phi}{i},V) \subset B \cap (W_i \cup F_i).
\]
Then $\tau([0,t_0)) \subset \bar{B} \cap (F_i \cup W_i)$ so that
\[
\ww \in \bar{B} \cap (F_i \cup W_i) \et \zz = \tau(t_0/2) \in B \cap 
(F_i \cup W_i).
\]
Besides, since $\ww \in \limeps H_1$ we have seen that it can be connected to 
$\yy$ a semi-algebraic path in $\bar{B} \cap (F_i \cup W_i)$.
In the end, there exist two consecutive paths into $\bar{B}\cap(F_i \cup W_i)$, 
connecting $\yy$ to $\ww$, and $\ww$ to $\zz \in B \cap \Rcal$ (namely $\tau$).
\end{proof}

\begin{figure}[h]\centering
  \includegraphics[width=\linewidth]{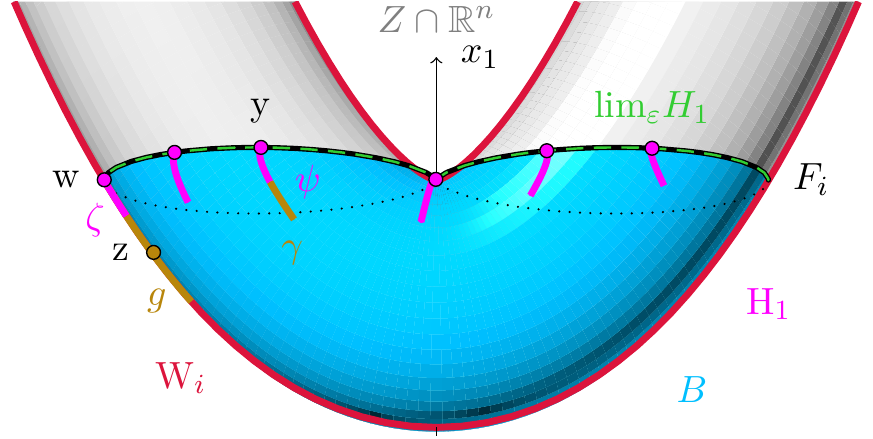}
  \caption{Illustration of proof of Lemma~\ref{lem:case2} with
  $\map{\phi}{1} = \pi_1$ and $V$ is isomorphic to 
 $\V(x_1^2+x_2^2-1)\times \rev{\V(x_1-x_2^2)}$. 
  Elements of $H_1$ can be seen as curves of infinitesimal lengths, starting 
from a point of $\limeps H_1$, and lying in $B$.
  Here, $\limeps H_1$ is the set of points that share the same first coordinate 
  than $\yy$. Hence, the above proof consisted in choosing a $\zeta$ in $H_1$, 
  that lives ``inside'' $W_i\cup\sing(V)$ (actually in 
$\ext(W_i\cup\sing(V),\RPuis)$).}
\label{fig:remainingstate}
\end{figure}

We can now prove Proposition~\ref{prop:remainingstatement}. This proof is 
illustrated by Figure~\ref{fig:remainingstate}.
\begin{proof}[Proof of Proposition~\ref{prop:remainingstatement}]
Let $B$ be a \SACC of $C\rinf[ \map{\phi}{1}]{u}$. Since $ \map{\phi}{1}$ is a 
proper map bounded from below on $V\cap \RR^n$ by assumption $\sfP$, 
$C\rinf[\map{\phi}{1}]{u}$, and then $B$, are bounded.
Then applying Proposition~\ref{prop:SACCcontientptcrit} shows that:
\[
 \emptyset \; \subsetneq \; B \cap K( \map{\phi}{1},V) \; \subset \;  B\cap 
F_i \; \subset \; B \cap (F_i \cup W_i).
\]
The first item is then proved.
Let $\yy \in \bar{B} \cap(F_i \cup W_i)$.
To prove the second item, one only needs to consider the case where $\yy \in  
\bar{B} \cap (W_i - F_i)$ according to Lemma~\ref{lem:case2}.
Moreover one can assume that $\yy \notin B$ and then $\map{\phi}{1}(\yy) = u$, 
otherwise, taking $\zz = \yy$, would end the proof. 

Let $D$ be the \SACC of $(W_i)\rinfeq[\map{\phi}{1}]{u}$ containing $\yy$.
\rev{We consider} two disjoint cases.
\begin{enumerate}
 \item \emph{If $D \not\subset \bar{B}$,}
 there exists $\yy' \in D$ such that $\yy' \notin \bar{B}$. 
 Then let $\gamma\colon [0,1] \rightarrow D$ such that $\gamma(0)=\yy$ and 
$\gamma(1)=\yy'$.
 Hence, if 
 \[
 t_1 = \max\{t\in[0,1) \mid \gamma(t) \in \bar{B}\},
 \]
 then $\gamma(t_1) \in K(\map{\phi}{1},V)$ by the contrapositive of statement 
$c)$ 
of Lemma~\ref{lem:voisinageSAC}.
 Since $K(\map{\phi}{1},V)\subset F_i$, we can apply 
Lemma~\ref{lem:case2} to $\gamma(t_1)$ and find $\zz \in B\cap(F_i\cup W_i)$ 
that is connected to $\gamma(t_1)$ and then to $\yy$ by a semi-algebraic path 
in 
$\bar{B} \cap (F_i \cup W_i)$.
 
 \item \emph{If $\mathit{D \subset \bar{B}}$,} we claim that there exists some 
$\zz \in D\cap F_i$. 
 Indeed since $D$ is a \SACC of $(W_i)\rinfeq[\phiun]{u}$ and $\phiun$ is a 
proper map, $D$ is bounded. 
 Then by Proposition~\ref{prop:SACCcontientptcrit} there exists $\yy' \in D 
\cap K(\phiun,W_i)$. 
 If $\yy' \in \sing(W_i)$ then $\yy' \in \sing(V)$ by assumption $\sfB_1$ and 
taking $\zz = \yy' \in F_i$ one concludes as in the first item. 

 Else $\yy'$  is in $W(\phiun,W_i)$, and we let $E$ be the \SACC of 
$W(\phiun,W_i)$ 
containing $\yy'$. 
 Since $\phiun(W(\phiun,W_i))$ is finite by Sard's lemma, $\phiun(E) = 
\{\phiun(\yy')\}$, so that $E \subset (W_i)\rinfeq[\phiun]{u}$.
 Hence, since $E$ is \SAC, $E \subset D$.
 By assumption $\sfC_2$, there exists $\zz \in E \cap \SW$,
 so that $\zz \in D \cap \SW \subset D \cap F_i$ and we are done.
 
 Then we can connect $\yy$ to $\zz$ inside $D \subset \bar{B} \cap W_i$ and 
since $\zz$ is in $D \cap F_i$, which is contained in $\bar{B} \cap F_i$, we 
can 
connect similarly 
$\zz$ to some $\zz'\in B \cap (F_i\cup W_i)$ inside $\bar{B} \cap F_i$ by 
Lemma~\ref{lem:case2}.
Putting things together, $\yy$ is connected to some $\zz'\in B \cap (F_i\cup 
W_i)$ by a semi-algebraic path in $\bar{B} \cap F_i$.\\[-3em]
\end{enumerate}
\end{proof}

\begin{corollary}\label{cor:remainingstatement}
 Let $u \in \RR$ such that for all $u'<u$, \RMpu[u'] holds. Let $C$ be a 
\SACC of $V\rinfeq[ \map{\phi}{1}]{u}$ such that $C\rinf[\map{\phi}{1}]{u}$ is 
non-empty. 
If $B$ is a \SACC of $C\rinf[ \map{\phi}{1}]{u}$, then $\bar{B}\cap\Rcal$ is 
non-empty and \SAC.
\end{corollary}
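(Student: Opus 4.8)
\emph{The plan.} The non-emptiness of $\bar{B}\cap\Rcal$ is nothing but item~1 of Proposition~\ref{prop:remainingstatement}. So it only remains to prove that $\bar{B}\cap\Rcal$ is \SAC: pick $\yy,\yy'\in\bar{B}\cap\Rcal$. By item~2 of Proposition~\ref{prop:remainingstatement} there are points $\zz,\zz'\in B\cap\Rcal$ together with semi-algebraic paths inside $\bar{B}\cap\Rcal$ joining $\yy$ to $\zz$ and $\yy'$ to $\zz'$. Hence everything reduces to joining $\zz$ and $\zz'$ by a semi-algebraic path lying in $\bar{B}\cap\Rcal$; concatenating the three paths then finishes the proof.

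For this reduced claim I would first upgrade the status of $B$. Since $V\req[\map{\phi}{1}]{u}$ is closed in $V\rinfeq[\map{\phi}{1}]{u}$ and $C$ is both closed and open in $V\rinfeq[\map{\phi}{1}]{u}$, the set $C\rinf[\map{\phi}{1}]{u}=C\cap V\rinf[\map{\phi}{1}]{u}$ is both closed and open in $V\rinf[\map{\phi}{1}]{u}$, so $B$, being a \SACC of $C\rinf[\map{\phi}{1}]{u}$, is actually a \SACC of $V\rinf[\map{\phi}{1}]{u}$. Now, $B$ is \SAC, so there is a semi-algebraic path $\gamma\colon[0,1]\to B$ with $\gamma(0)=\zz$ and $\gamma(1)=\zz'$. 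As $\map{\phi}{1}<u$ on $B$, the continuous semi-algebraic function $\map{\phi}{1}\circ\gamma$ attains on the closed bounded interval $[0,1]$ a maximum $u'<u$ (\cite[Theorem 3.23]{BPR2006}), hence $\gamma([0,1])\subset V\rinfeq[\map{\phi}{1}]{u'}$ and $\gamma([0,1])$ is contained in a single \SACC $C'$ of $V\rinfeq[\map{\phi}{1}]{u'}$. Since $u'<u$ we have $C'\subset V\rinf[\map{\phi}{1}]{u}$; as $C'$ is \SAC and meets $B$ (it contains $\gamma([0,1])$) while $B$ is a \SACC of $V\rinf[\map{\phi}{1}]{u}$, we conclude $C'\subset B\subset\bar{B}$. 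Finally, since $u'<u$, the hypothesis gives that $\RMpu[u']$ holds: $C'\cap\Rcal$ is non-empty and \SAC, it contains $\zz$ and $\zz'$ (both lie in $\gamma([0,1])\subset C'$ and in $\Rcal$), and $C'\cap\Rcal\subset B\cap\Rcal\subset\bar{B}\cap\Rcal$. Thus $\zz$ and $\zz'$ are joined by a semi-algebraic path inside $\bar{B}\cap\Rcal$, which completes the reduction and the proof.

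\emph{Where the difficulty lies.} Granting Proposition~\ref{prop:remainingstatement}, this corollary is essentially bookkeeping: the two points needing a little care are the clopen argument showing that $B$ is a \SACC of $V\rinf[\map{\phi}{1}]{u}$ (which is what lets us trap $C'$ inside $B$, hence inside $\bar{B}$) and the elementary compactness argument producing $u'<u$ with $\gamma([0,1])\subset V\rinfeq[\map{\phi}{1}]{u'}$. The substantive work — using the properness and lower-boundedness of $\map{\phi}{1}$ from $(\sfP)$, the role of the fibers $F_i$, assumptions $(\sfB)$ and $(\sfC)$, and the Puiseux series argument — has already been done in Proposition~\ref{prop:remainingstatement} and Lemma~\ref{lem:case2}; here we only invoke the inductive hypothesis $\RMpu[u']$ at a suitably chosen level $u'<u$.
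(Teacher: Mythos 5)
Your proof is correct and follows essentially the same route as the paper's: reduce to the endpoints $\zz,\zz'\in B\cap\Rcal$ via Proposition~\ref{prop:remainingstatement}, pick a path $\gamma$ in $B$ joining them, define $u'=\max \map{\phi}{1}\circ\gamma<u$, find a single \SACC of $V\rinfeq[\map{\phi}{1}]{u'}$ containing $\gamma([0,1])$, invoke $\RMpu[u']$, and concatenate. The only cosmetic difference is that you take the \SACC $C'$ of $V\rinfeq[\map{\phi}{1}]{u'}$ containing $\gamma([0,1])$ and show $C'\subset B$ via the clopen argument, while the paper takes the \SACC $B'$ of $B\rinfeq[\map{\phi}{1}]{u'}$ and asserts that it is also a \SACC of $V\rinfeq[\map{\phi}{1}]{u'}$; both ultimately rest on the same observation that $B$ is clopen in $V\rinf[\map{\phi}{1}]{u}$, which you spell out in detail.
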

\begin{proof}
Let $\yy$ and $\yy'$ be in $\bar{B} \cap \Rcal$. According to 
Proposition~\ref{prop:remainingstatement}, they can respectively be
connected to some $\zz$ and $\zz'$ in $B \cap \Rcal$, by a semi-algebraic path 
in $\bar{B} \cap \Rcal$. 
As $B$ is \SAC, there exists a semi-algebraic path $\gamma\colon [0,1] 
\rightarrow B$ connecting $\zz$ to $\zz'$. Let 
\[
u'=\max \big\{  \map{\phi}{1}(\gamma(t)) \mid t \in [0,1]\big\}, 
\]
so that $\gamma([0,1]) \subset V\rinfeq[ \map{\phi}{1}]{u'}$. Such a $u'$ 
exists by continuity of $\gamma$, and satisfies $u'<u$, as $[0,1]$ is closed 
and bounded.

Let $B'$ be the \SACC of $B\rinfeq[ \map{\phi}{1}]{u'}$ that contains 
$\gamma([0,1])$. Since $B'$ is also a \SACC of $V\rinfeq[ \map{\phi}{1}]{u'}$, 
property \RMpu[u'] states that $B' \cap \Rcal$ is non-empty and \SAC. 
Then, as $\zz$ and $\zz'$ are in $B' \cap \Rcal$, they can be connected by a 
semi-algebraic path in $B' \cap \Rcal$, and then, in $B \cap \Rcal$. 
Thus $\yy$ and $\yy'$ are connected by a semi-algebraic path in $\bar{B} 
\cap \Rcal$ and we are done.
\end{proof}

\subsection{Recursive proof of the truncated roadmap property}
In order to prove that \RMpu holds for all $u \in \RR$, one can consider two 
disjoint cases: whether $u$ is a real singular value of $\map{\phi}{1}$,
that is $u \in \map{\phi}{1}(K_i)$, or not.
The following lemma allows us to proceed by induction.

\begin{lemma}
 The set $ \map{\phi}{1}(K_i)$ is non-empty and finite.
\end{lemma}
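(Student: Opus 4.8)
The plan is to prove the two assertions — non-emptiness and finiteness of $\map{\phi}{1}(K_i)$ — separately, recalling that by Definition~\ref{def:polarvariety} one has $K_i = W(\map{\phi}{1},V)\cup\sing(V)\cup\SW$. We may assume $V\cap\RR^n\neq\emptyset$, as otherwise there is nothing to prove in Theorem~\ref{thm:mainresult}.

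For non-emptiness, the idea is to exhibit a concrete point of $K_i$ as a minimizer of $\map{\phi}{1}$ on the real locus. First I would use assumption $(\sfP)$: since $\map{\phi}{1}$ restricted to the closed set $V\cap\RR^n$ is proper and bounded from below by some $M$, the sublevel set $\{\,\yy\in V\cap\RR^n : \map{\phi}{1}(\yy)\leq m+1\,\}$ (where $m$ is the infimum) equals the preimage of $[M,m+1]$, hence is non-empty, closed and bounded, thus compact; so $\map{\phi}{1}$ attains the value $m$ at some $\yy_0\in V\cap\RR^n$. Then I would argue $\yy_0\in K(\map{\phi}{1},V)\subset K_i$: if $\yy_0\in\sing(V)$ this is immediate; otherwise $\yy_0\in\reg(V)\cap\RR^n$, which near $\yy_0$ is a smooth real submanifold of dimension $d$ with tangent space $T_{\yy_0}V\cap\RR^n$ (by \cite[Proposition 3.3.11]{BCR2013}, as already used in the proof of Corollary~\ref{cor:secondresult}). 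Since $\yy_0$ is a local minimizer of $\map{\phi}{1}$ on this manifold, first-order optimality forces the differential of $\map{\phi}{1}\vert_V$ at $\yy_0$ to vanish on $T_{\yy_0}V\cap\RR^n$.

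The step I expect to be the main obstacle is turning this real first-order condition into the complex rank condition defining $\Wo(\map{\phi}{1},V)$. The key point is that, because $V$ and $\bphi$ are defined over $\QQ\subset\RR$, the complex tangent space satisfies $T_{\yy_0}V=(T_{\yy_0}V\cap\RR^n)\otimes_\RR\CC$ (the tangent space is the kernel of a Jacobian of real polynomials, which has the same rank over $\RR$ and over $\CC$); hence the $\CC$-linear map $d_{\yy_0}\map{\phi}{1}$, vanishing on the real part, vanishes on all of $T_{\yy_0}V$, so $d_{\yy_0}\map{\phi}{1}(T_{\yy_0}V)=\{0\}\subsetneq\CC$ and $\yy_0\in\Wo(\map{\phi}{1},V)\subset W(\map{\phi}{1},V)\subset K_i$. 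Either way $\map{\phi}{1}(K_i)\neq\emptyset$.

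For finiteness, I would treat the three pieces of $K_i$ in turn: $\map{\phi}{1}(\sing(V))$ is finite since $\sing(V)$ is finite by $(\sfA)$, and $\map{\phi}{1}(\SW)$ is finite since $\SW$ is finite by $(\sfC_1)$. For $W(\map{\phi}{1},V)$, Sard's lemma (exactly as invoked later for $W(\phiun,W_i)$) shows that the set of critical values $\map{\phi}{1}(\Wo(\map{\phi}{1},V))$ is finite; since $\map{\phi}{1}$ is a polynomial map, hence continuous for the Zariski topology, and since a finite subset of $\CC$ is Zariski closed, $\map{\phi}{1}(W(\map{\phi}{1},V))=\map{\phi}{1}(\overline{\Wo(\map{\phi}{1},V)})\subset\overline{\map{\phi}{1}(\Wo(\map{\phi}{1},V))}=\map{\phi}{1}(\Wo(\map{\phi}{1},V))$ is finite as well. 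A finite union of finite sets being finite, $\map{\phi}{1}(K_i)$ is finite, which completes the plan; everything outside the tangent-space argument is routine.
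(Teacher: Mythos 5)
Your proof is correct, but the non-emptiness part takes a genuinely different route from the paper, so a comparison is worthwhile. For finiteness you and the paper argue the same way: Sard's lemma gives finiteness of $\map{\phi}{1}(\Wo(\map{\phi}{1},V))$, and $(\sfA)$ and $(\sfC_1)$ give finiteness of the contributions of $\sing(V)$ and $\SW$; you additionally make explicit, via Zariski continuity of $\map{\phi}{1}$, the passage from $\Wo(\map{\phi}{1},V)$ to its Zariski closure $W(\map{\phi}{1},V)$, a small step the paper leaves implicit. For non-emptiness, the paper uses $(\sfP)$ to see that any sublevel set $V\rinf[\map{\phi}{1}]{u}$ is bounded and then invokes Proposition~\ref{prop:SACCcontientptcrit}, which has already been established via the chain Lemma~\ref{lem:implicitparam} $\to$ Lemma~\ref{lem:voisinageSAC} $\to$ Lemma~\ref{lem:adhcomponent} $\to$ Lemma~\ref{lem:SACCvide}, i.e.\ the implicit-function-theorem machinery built for the connectivity statements. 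You instead produce a global minimizer $\yy_0$ of $\map{\phi}{1}$ on $V\cap\RR^n$ directly from properness and boundedness below, and then run a classical first-order optimality argument: the differential $d_{\yy_0}\phi_1$ vanishes on the real tangent space $T_{\yy_0}V\cap\RR^n$, and since $V$ and $\phi_1$ are defined over the reals, $T_{\yy_0}V$ is the complexification of its real trace, so $d_{\yy_0}\phi_1(T_{\yy_0}V)=\{0\}\subsetneq\CC$ and $\yy_0\in\Wo(\map{\phi}{1},V)\subset K_i$. Your route is more elementary and self-contained; the paper's reuse of Proposition~\ref{prop:SACCcontientptcrit} costs nothing extra in context because that proposition has already been proved and is used again elsewhere. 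You also explicitly dispose of the case $V\cap\RR^n=\emptyset$ by observing that Theorem~\ref{thm:mainresult} is vacuous there, which is cleaner than the paper's ``since $V$ is not empty'', as $V\neq\emptyset$ does not by itself give a real point to minimize over.
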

\begin{proof}
By the algebraic version of Sard's theorem \cite[Proposition 
B.2]{SS2017}, the set of critical values of $\map{\phi}{1}$ on $V$ is an 
algebraic set of $\CC$ of dimension 0. Then, it is either empty or non-empty 
but finite. 
Hence, $\map{\phi}{1}(K_i)$ is either empty or non-empty but finite, as 
$\SW$ and $\sing(V)$ are, by assumption.
Moreover since $ \map{\phi}{1}$ is a proper map bounded from below on $V \cap 
\RR^n$ by assumption $(\sfP)$, for any $u\in \RR$, $Z\rinf{u}$ is bounded. 
Then, 
since $V$ is not empty, by Proposition~\ref{prop:SACCcontientptcrit} the sets 
$K(\map{\phi}{1},V)$ and then $ \map{\phi}{1}(K_i)$ are not empty.
\end{proof}

We denote by $v_1 < \dotsc < v_\ell$ the points of $\map{\phi}{1}(K_i\cap 
\RR^n)$ and, in addition, let $v_{\ell+1} = +\infty$. 
We proceed by proving the two following 
steps.
\begin{description}
    \item[Step 1:] Let $u \in \RR$, if \RMpu[u'] holds for all $u'< 
    u$, then \RMpu holds.
    \item[Step 2:] Let $j \in \{1,\dotsc, \ell\}$, if \RMpu[v_j] holds, 
    then for all $u \in (v_j,v_{j+1})$, \RMpu holds.
\end{description}
Remark that, by Lemma~\ref{lem:SACCvide}, $v_1 = \min_{V\cap\RR^n} 
\map{\phi}{1}$, since $V\cap\RR^n$ is closed. 
Then for $u'<v_1$, $V\rinfeq{u'}=\emptyset$ and \RMpu[u'] trivially holds.
Hence, proving these two steps is enough to prove \RMpu for all $u$ 
in $\RR$, by an immediate induction.

\begin{proposition}[\textbf{Step 1}]\label{prop:step1}
 Let $u \in \RR$. Assume that for all $u'<u$, \RMpu[u'] holds. Then \RMpu holds.
\end{proposition}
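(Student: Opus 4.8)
The plan is to fix a \SACC $C$ of $V\rinfeq[\map{\phi}{1}]{u}$ and show that $C\cap\Rcal$ is non-empty and \SAC, splitting according to whether $C\rinf[\map{\phi}{1}]{u}$ is empty or not. First I would dispatch the degenerate case: if $C\rinf[\map{\phi}{1}]{u}=\emptyset$, then Lemma~\ref{lem:SACCvide} (applied with $\bphi=\map{\phi}{1}$, $Z=V$) gives $C=C\req[\map{\phi}{1}]{u}\subset K(\map{\phi}{1},V)$. Since $K(\map{\phi}{1},V)\subset F_i\subset\Rcal$, we get $C\subset\Rcal$, so $C\cap\Rcal=C$ is non-empty and \SAC by hypothesis on $C$. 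So from now on I assume $C\rinf[\map{\phi}{1}]{u}\neq\emptyset$.

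In that case, decompose $C\rinf[\map{\phi}{1}]{u}$ into its finitely many \SACCs $B_1,\dots,B_r$ (each bounded, by assumption $(\sfP)$). By Corollary~\ref{cor:remainingstatement}, for each $k$ the set $\bar{B_k}\cap\Rcal$ is non-empty and \SAC. The key topological point is that $C\rinfeq[\map{\phi}{1}]{u}=C$ equals the union $\bigcup_k \bar{B_k}$ together with $C\req[\map{\phi}{1}]{u}$; more precisely I claim $C=\bigcup_k\bar{B_k}$. Indeed $\bigcup_k B_k=C\rinf[\map{\phi}{1}]{u}$ is dense in $C$: any $\yy\in C\req[\map{\phi}{1}]{u}$ lies in the closure of $C\rinf[\map{\phi}{1}]{u}$ — if $\yy\notin K(\map{\phi}{1},V)$ this is exactly statement $c)$ of Lemma~\ref{lem:voisinageSAC} combined with Lemma~\ref{lem:adhcomponent}; if $\yy\in K(\map{\phi}{1},V)$ one still needs $\yy\in\bar{B_k}$ for some $k$, which follows since $\yy\in C$ and $C$, being \SAC hence connected, cannot be partitioned into the closed-in-$C$ pieces $\bigcup_k\bar{B_k}$ and the (nonempty, were the claim false) remainder. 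So $C=\bigcup_{k=1}^r \bar{B_k}$.

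Now I would prove non-emptiness and connectedness of $C\cap\Rcal$. Non-emptiness is immediate since $\bar{B_1}\cap\Rcal\subset C\cap\Rcal$ is non-empty. For connectedness, take $\yy,\yy'\in C\cap\Rcal$; by the covering $C=\bigcup_k\bar{B_k}$ there are indices $k,k'$ with $\yy\in\bar{B_k}$, $\yy'\in\bar{B_{k'}}$. Since $C$ is \SAC, connect $\yy$ to $\yy'$ by a \SA path in $C$; this path meets finitely many of the closed pieces $\bar{B_k}$, and consecutive ones overlap (their union covers a connected path), so it suffices to show: whenever $\bar{B_k}\cap\bar{B_{k'}}\neq\emptyset$, a point of $\bar{B_k}\cap\Rcal$ can be \SA-connected to a point of $\bar{B_{k'}}\cap\Rcal$ within $C\cap\Rcal$. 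A point $\zz\in\bar{B_k}\cap\bar{B_{k'}}$ lies in $C\req[\map{\phi}{1}]{u}$ (two distinct components of $C\rinf[\map{\phi}{1}]{u}$ cannot share an interior point), and by Lemma~\ref{lem:adhcomponent} (uniqueness of the adherent \SACC) the point $\zz$ must lie in $K(\map{\phi}{1},V)\subset F_i\subset\Rcal$ — otherwise $\zz$ would be adherent to a unique \SACC of $V\rinf[\map{\phi}{1}]{u}$, contradicting $\zz\in\bar{B_k}\cap\bar{B_{k'}}$ with $B_k\neq B_{k'}$. So $\zz\in\bar{B_k}\cap\Rcal$ and $\zz\in\bar{B_{k'}}\cap\Rcal$; by Corollary~\ref{cor:remainingstatement} it is \SA-connected inside $\bar{B_k}\cap\Rcal$ to any chosen point there and inside $\bar{B_{k'}}\cap\Rcal$ to any chosen point there. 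Chaining these paths along the finitely many overlapping pieces hit by the original path yields a \SA path from $\yy$ to $\yy'$ in $C\cap\Rcal=\bigcup_k(\bar{B_k}\cap\Rcal)$. Hence $C\cap\Rcal$ is \SAC, and \RMpu holds.

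The main obstacle I anticipate is the bookkeeping around the overlaps: making rigorous that a \SA path in $C$ passes through the closed pieces $\bar{B_k}$ in a controlled (finite, consecutively-overlapping) way, and that every relevant overlap point lands in $\Rcal$. The crucial leverage in both places is the uniqueness clause of Lemma~\ref{lem:adhcomponent}: a non-singular point of the fiber $Z\req[\map{\phi}{1}]{u}$ is adherent to exactly one \SACC of $Z\rinf[\map{\phi}{1}]{u}$, so any point simultaneously adherent to two distinct $B_k$'s is forced to be singular, hence in $F_i$. Everything else is an assembly of Corollary~\ref{cor:remainingstatement} with standard \SA connectedness arguments (finiteness of \SACCs, the curve selection / path-connectedness of \SAC sets from \cite{BPR2006}).
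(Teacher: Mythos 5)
Your treatment of the degenerate case $C\rinf[\map{\phi}{1}]{u}=\emptyset$ and the non-emptiness of $C\cap\Rcal$ (via Corollary~\ref{cor:remainingstatement} applied to each $\bar{B_k}$) matches the paper and is correct. The observation that a point in $\bar{B_k}\cap\bar{B_{k'}}$ with $k\neq k'$ must lie in $C\req[\map{\phi}{1}]{u}\cap K(\map{\phi}{1},V)$ by the uniqueness clause of Lemma~\ref{lem:adhcomponent} is also sound and is precisely the mechanism the paper relies on. However, your connectedness argument has two genuine gaps.

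First, the claimed identity $C=\bigcup_k\bar{B_k}$ is not established. Your connectedness argument does not go through: $\bigcup_k\bar{B_k}$ is closed in $C$ (finitely many closed sets), so its complement is open in $C$; but a partition of a connected set into a non-empty closed part and a non-empty open part is perfectly possible (e.g. $[0,1]=\{0\}\cup(0,1]$). To contradict connectedness you would need the complement to be closed as well, i.e. $\bigcup_k\bar{B_k}$ to be open in $C$, and nothing in the hypotheses forces this. In fact the identity may fail when $C\req[\map{\phi}{1}]{u}$ has a positive-dimensional critical piece not adherent to $C\rinf[\map{\phi}{1}]{u}$, so it should not be used without proof.

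Second, the reduction to "finitely many closed pieces $\bar{B_k}$, consecutive ones overlap" is informal. A semi-algebraic path $\gamma\colon[0,1]\to C$ can oscillate between several $\bar{B_k}$'s, and it is not automatic that one can extract a finite chain of pieces with pairwise-overlapping consecutive terms tracing the path. The paper avoids both difficulties by working on the domain of the path: it sets $G=\gamma^{-1}\bigl(C\req[\map{\phi}{1}]{u}\cap K(\map{\phi}{1},V)\bigr)$ (closed in $[0,1]$), $H=[0,1]\setminus G$ (open), and applies Proposition~\ref{prop:firstresult} to each connected component $H_j$ of $H$ to conclude that $\gamma(H_j)\subset\bar{B_k}$ for a unique $k$. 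The endpoints $\gamma(\ell_j),\gamma(r_j)$ then lie in $\bar{B_k}\cap\Rcal$ (either because $\ell_j,r_j\in G\subset\gamma^{-1}(K(\map{\phi}{1},V))$, or because they equal $\yy,\yy'\in\Rcal$), so Corollary~\ref{cor:remainingstatement} replaces $\gamma_{\mid H_j}$ by a path $\tau_j$ in $\bar{B_k}\cap\Rcal$. The pieces $\gamma(G_j)$ are already in $\Rcal$ and are kept. This sidesteps any need for $C=\bigcup_k\bar{B_k}$ and gives the finite-chain structure automatically, since the $H_j$ are the (finitely many) connected components of a semi-algebraic open subset of $[0,1]$. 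You need Proposition~\ref{prop:firstresult} here; pointwise applications of Lemma~\ref{lem:adhcomponent} are not enough to control a whole path segment.
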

The proof of this proposition is illustrated by Figure~\ref{fig:step1}.
\begin{proof}
Let $u \in \RR$ be such that for all $u'<u$, \RMpu[u'] holds and let $C$ be 
a \SACC of $V\rinfeq[ \map{\phi}{1}]{u}$. 
We have to prove that $C\cap \Rcal$ is non-empty and \SAC.

If $C\rinf[ \map{\phi}{1}]{u}$ is empty, then, by Lemma~\ref{lem:SACCvide}, 
$C \subset K( \map{\phi}{1},V)$.
But the points of $K( \map{\phi}{1},V)$ are either in $W_i$ or in 
$\sing(V) \subset F_i$.
Hence $K(\map{\phi}{1},V) \subset \Rcal$ and $C\cap\Rcal = C$, which is 
non-empty and \SAC by definition.

From now on, $C\rinf[ \map{\phi}{1}]{u}$ is supposed to be non-empty and let 
$B_1,\dotsc,B_r$ be its \SAC components. 
According to Corollary~\ref{cor:remainingstatement}, for all $1 \leq j \leq r$, 
$\overline{B_j} \cap \Rcal$ is non-empty and \SAC.
Then, as $\overline{B_j} \subset C$, 
\[
 \overline{B_j}\cap \Rcal \subset C \cap \Rcal 
\]
for every $1 \leq j \leq r$, and $C \cap \Rcal$ is non-empty.

Let us now prove that $C \cap \Rcal$ is \SAC. Let $\yy$ and $\yy'$ in $C 
\cap \Rcal$. As $C$ is \SAC, there exists a semi-algebraically continuous map 
$\gamma\colon [0,1] \rightarrow C$ such that $\gamma(0)=\yy$ and 
$\gamma(1)=\yy'$. Now let 
\[
 G = \gamma^{-1}(C\req[ \map{\phi}{1}]{u} \cap K( \map{\phi}{1},V)) \et H = 
[0,1] - G.
 \]
 We denote by $G_1,\dotsc, G_N$ the connected components of $G$ and 
$H_1,\dotsc,H_M$ those of $H$.
The sets $H_j$ for $1 \leq j \leq M$ are open intervals of $[0,1]$, and we note 
$\ell_j = \inf(H_j)$ and $r_j = \sup(H_j)$. 
Since $\gamma(G)$ already lies in $C \cap \Rcal$, let us establish that for 
every $1 \leq j \leq M$, $\gamma(\ell_j)$ and $\gamma(r_j)$ can be connected by 
another semi-algebraic path $\tau_j$ in $C \cap \Rcal$.
 
Let $1\leq j \leq M$, then $\gamma(H_j) \cap (C\req[ \map{\phi}{1}]{u} \cap 
K( \map{\phi}{1},V)) = \emptyset$ by definition. 
Moreover, $\gamma(H_j) \subset C$ so that
\[
 \gamma(H_j) \cap (V\req[ \map{\phi}{1}]{u} \cap K( \map{\phi}{1},V)) = 
\emptyset.
\] 
Hence, since $H_j$ is connected, there exists (by 
Proposition~\ref{prop:firstresult}) a unique \SACC $B$ of $V\rinf[ 
\map{\phi}{1}]{u}$ such that $\gamma(H_j) \subset \bar{B}$.
But $\gamma(H_j) \subset C$, so that $\bar{B}$ and thus $B$ are actually 
contained in $C$. Therefore, $B$ is actually a \SACC of $C\rinf[ 
\map{\phi}{1}]{u}$ and there exists $1 \leq k \leq r$ such that $B = B_k$.
At this step $\gamma(H_j) \subset \bar{B_k}$, so that
\[
\gamma([\ell_j,r_j]) = \gamma(\bar{H_j}) \; \subset \; \bar{\gamma(H_j)} 
\; \subset \; \bar{B_k},
\]
and both $\gamma(\ell_j)$ and $\gamma(r_j)$ are in $\bar{B_k}$.
Remark that both $\ell_j$ and $r_j$ are in $G$, so that both $\gamma(\ell_j)$ 
and $\gamma(r_j)$ are in $K(\map{\phi}{1},V) \subset F_i \subset 
\Rcal$.
Thus, both $\gamma(\ell_j)$ and $\gamma(r_j)$ are in $\bar{B_k} \cap \Rcal$. 
According to Corollary~\ref{cor:remainingstatement}, they can be connected by a 
semi-algebraic path $\tau_j\colon [0,1] \to \bar{B_k} \cap \Rcal \subset C \cap 
\Rcal$.
 
In conclusion, we have proved that for $1 \leq j \leq M$, $\gamma(\ell_j)$ and 
$\gamma(r_j)$ can be connected by a semi-algebraic path $\tau_j$ in $C \cap 
\Rcal$.
Therefore the semi-algebraic sub-paths $\gamma_{\mid H_j}$ can be replaced by 
the $\tau_j$'s, which lie in $C \cap \Rcal$. 
Moreover, for all $1 \leq j \leq N$
\[
\gamma(G_j) \subset C\cap \Rcal.
\]
Since the $H_j$'s and $G_j$'s form a partition of $[0,1]$, by putting together 
alternatively the $\tau_j$'s and the $\gamma_{\mid G_j}$'s, one obtains a 
semi-algebraic path in $C \cap \Rcal$ connecting $\yy=\gamma(0)$ to $\yy' 
=\gamma(1)$. And we are done.
\end{proof}

\begin{figure}[h]\centering
  \includegraphics[width=0.8\linewidth]{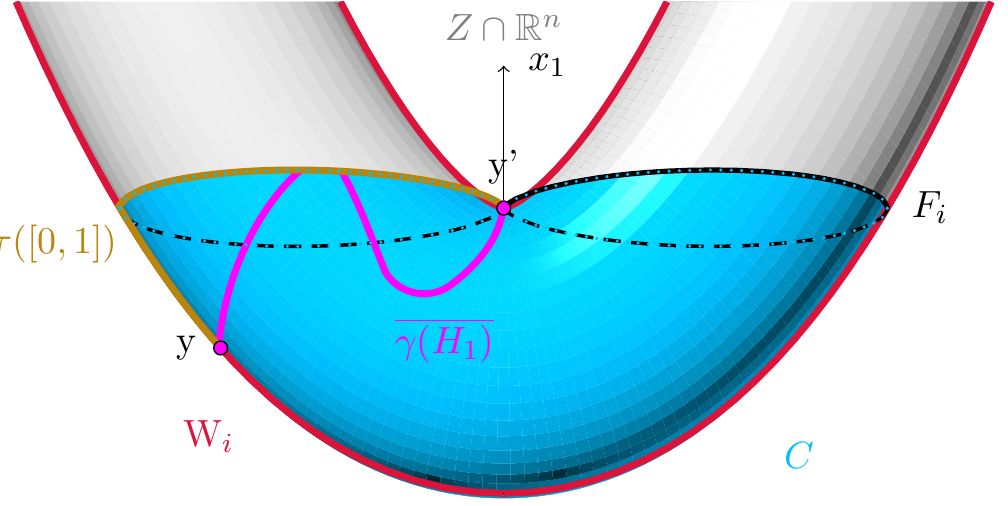}
  \caption{Illustration of proof of Proposition~\ref{prop:step1} with
$\map{\phi}{1} = \pi_1$ and $V$ is isomorphic to 
$\V(x_1^2+x_2^2-1)\times \rev{\V(x_1-x_2^2)}$.
  Here, only $\yy'$ belongs to $C\req[\pi_1]{u} \cap K(\pi_1,V)$. Then we 
replace the 
path $\gamma=\gamma_{\mid H_1}$ by a path $\tau_1$ that lies in the 
intersection of the roadmap and the \SACC $C$.}
 \label{fig:step1}
\end{figure}

\begin{proposition}[\textbf{Step 2}]\label{prop:step2}
    Let $j \in \{1,\dotsc,\ell\}$, if \RMpu[v_j] holds, then 
    for all $u  \in (v_j,v_{j+1})$, \RMpu holds.
\end{proposition}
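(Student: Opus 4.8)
The plan is to leverage the hypothesis \RMpu[v_j] at the singular level $v_j$, carry it to the regular level $u$ using Corollary~\ref{cor:secondresult}, and then show that every point of $C\cap\Rcal$ can be joined, by a path inside $C\cap\Rcal$, to the piece of $\Rcal$ already controlled at level $v_j$. Fix $u\in(v_j,v_{j+1})$ and a \SACC $C$ of $V\rinfeq[\map{\phi}{1}]{u}$; we must show $C\cap\Rcal$ is non-empty and \SAC, and we may assume $i\geq 2$ (for $i=1$ one has $F_i=V$, so $\Rcal=V$ and $C\cap\Rcal=C$). I would first record two structural facts: the real points of $K(\map{\phi}{1},V)$ and of $\sing(V)$ lie in $K_i\cap\RR^n$, hence have $\map{\phi}{1}$-value in $\{v_1,\dots,v_\ell\}$; and, by $(\sfP)$, every set of the form $V\rinfeq[\map{\phi}{1}]{c}$ or $(W_i)\rinfeq[\map{\phi}{1}]{c}$, and the real part of every fibre $V\cap\map{\phi}{i-1}[-1](\zz)$ with $\zz$ of real first coordinate (it sits in a single level set of $\map{\phi}{1}$), is closed and bounded. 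Consequently $V\rinto[\map{\phi}{1}]{(v_j,u]}\cap K(\map{\phi}{1},V)=\emptyset$, so Corollary~\ref{cor:secondresult} (using $(\sfA)$) shows $C':=C\rinfeq[\map{\phi}{1}]{v_j}$ is a \SACC of $V\rinfeq[\map{\phi}{1}]{v_j}$; it is non-empty because $\map{\phi}{1}$ attains a minimum on the closed bounded set $C$ at a point which Lemma~\ref{lem:SACCvide} forces into $K(\map{\phi}{1},V)$, hence at a level in $\{v_1,\dots,v_\ell\}\cap(-\infty,u]\subseteq(-\infty,v_j]$. By \RMpu[v_j], $C'\cap\Rcal$ is non-empty and \SAC, so already $C\cap\Rcal\supseteq C'\cap\Rcal\neq\emptyset$.

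It remains to prove $C\cap\Rcal$ is \SAC, and since $C'\cap\Rcal$ is \SAC and $C'\subseteq C$, it suffices to connect each $\yy\in C\cap\Rcal$ to a point of $C'\cap\Rcal$ by a semi-algebraic path inside $C\cap\Rcal$. If $\map{\phi}{1}(\yy)\leq v_j$ then $\yy\in C'\cap\Rcal$ and we are done, so assume $u_\yy:=\map{\phi}{1}(\yy)\in(v_j,u]$. Next I would reduce to the case $\yy\in W_i$: if instead $\yy\in F_i$, set $Y=V\cap\map{\phi}{i-1}[-1](\map{\phi}{i-1}(\yy))$, which by $(\sfB_2)$ is $(d-i+1)$-equidimensional, on which $\map{\phi}{1}$ is constant equal to $u_\yy$, and which is contained in $F_i$; let $H$ be the \SACC of $Y\cap\RR^n$ containing $\yy$, a bounded \SACC with $H\subseteq F_i\cap C$. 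Running the argument in the proof of Lemma~\ref{lem:fibrecontientptcrit} on $H$ (minimize $\phi_i$ on $H$, apply Lemma~\ref{lem:SACCvide}, then the rank computation of Lemma~\ref{lem:caraccritrank}) produces $\zz\in H\cap K(\map{\phi}{i},V)=H\cap(W_i\cup\sing(V))$; since $\map{\phi}{1}(\zz)=u_\yy\notin\{v_1,\dots,v_\ell\}$ and $\map{\phi}{1}(\sing(V)\cap\RR^n)\subseteq\{v_1,\dots,v_\ell\}$, necessarily $\zz\in W_i$, and $\yy$ is joined to $\zz$ by a path inside $H\subseteq F_i\cap C\subseteq\Rcal\cap C$. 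So we may assume from now on that $\yy\in W_i\cap C$ with $u_\yy>v_j$.

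For the descent along $W_i$: let $D$ be the \SACC of $(W_i)\rinfeq[\map{\phi}{1}]{u_\yy}$ containing $\yy$; then $D\subseteq W_i\cap C$ and $D$ is closed and bounded, so $\map{\phi}{1}$ attains a minimum $m$ on $D$, say at $\ww$. The crux is the claim $m\leq v_j$. Otherwise $m\in(v_j,v_{j+1})$; since the \SACC of $(W_i)\rinfeq[\map{\phi}{1}]{m}$ through $\ww$ is contained in $D$ and hence has empty part strictly below $m$, Lemma~\ref{lem:SACCvide} applied to $W_i$ — equidimensional of dimension $i-1>0$ by $(\sfB_1)$ — gives $\ww\in K(\map{\phi}{1},W_i)=W(\map{\phi}{1},W_i)\cup\sing(W_i)$. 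If $\ww\in\sing(W_i)$ then $\ww\in\sing(V)$ by $(\sfB_1)$, so $m\in\{v_1,\dots,v_\ell\}$, a contradiction. Otherwise $\ww$ lies in a \SACC $E$ of $W(\map{\phi}{1},W_i)\cap\RR^n$, on which $\map{\phi}{1}$ is constant equal to $m$ (the set $\map{\phi}{1}(W(\map{\phi}{1},W_i))$ is finite by the algebraic Sard theorem), and $(\sfC_2)$ furnishes $\zz'\in E\cap\SW\subseteq K_i\cap\RR^n$, whence $m\in\{v_1,\dots,v_\ell\}$, again a contradiction. Hence $m\leq v_j$, so $\ww\in D\rinfeq[\map{\phi}{1}]{v_j}\subseteq C'\cap W_i\subseteq C'\cap\Rcal$, and $\yy$ is joined to $\ww$ by a path inside $D\subseteq W_i\cap C\subseteq\Rcal\cap C$. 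Together with the \SAC-ness of $C'\cap\Rcal$, this shows $C\cap\Rcal$ is \SAC, so \RMpu holds.

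The main obstacle is precisely this descent. The minimization steps need properness from $(\sfP)$ so that minima are attained and the level sets involved are bounded; and, crucially, one must rule out a minimizing point of a \SACC of $W(\map{\phi}{1},W_i)$ being "trapped" at a value strictly between $v_j$ and $v_{j+1}$ — this is exactly what assumption $(\sfC_2)$ (combined with the algebraic Sard theorem) provides, i.e. it is here that the extra roadmap ingredient $\SW$, the only genuinely new piece compared to the bounded setting, is used. The remaining ingredients (Corollary~\ref{cor:secondresult}, the inductive hypothesis \RMpu[v_j], and the elementary fact that a union of two overlapping \SAC sets is \SAC) are routine bookkeeping.
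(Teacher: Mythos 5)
Your proof is correct and shares the paper's skeleton: use Corollary~\ref{cor:secondresult} to show $C'=C\rinfeq[\map{\phi}{1}]{v_j}$ is a \SACC of $V\rinfeq[\map{\phi}{1}]{v_j}$, apply \RMpu[v_j] to get $C'\cap\Rcal$ non-empty and \SAC, then reduce each $\yy\in C\cap\Rcal$ at level $>v_j$ to a point of $W_i$ and descend along $W_i$ to level $v_j$. You differ in the two inner steps, though. For the reduction to $\yy\in W_i$, the paper argues in one line that $\yy\in F_i$ would force $\map{\phi}{1}(\yy)\in\{v_1,\dots,v_\ell\}$ (tacitly identifying $\map{\phi}{1}(K_i)\cap\RR$ with $\map{\phi}{1}(K_i\cap\RR^n)$); you instead rerun the Lemma~\ref{lem:fibrecontientptcrit} minimization inside the bounded fiber component $H$ to produce $\zz\in H\cap W_i$ at the same $\map{\phi}{1}$-level and connect $\yy$ to $\zz$ through $H\subset F_i\cap C$ — more work, but it avoids the real/complex subtlety in the paper's one-liner. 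For the descent, the paper takes the \SACC $D$ of $(W_i)\rinfeq[\map{\phi}{1}]{u}$ through $\yy$, notes $(v_j,u)\cap\map{\phi}{1}(W(\map{\phi}{1},W_i))=\emptyset$ via $(\sfC_2)$ and Sard, and reinvokes Corollary~\ref{cor:secondresult} on $W_i$ to get a non-empty $D\rinfeq[\map{\phi}{1}]{v_j}$; you instead minimize $\map{\phi}{1}$ directly on $D$ and rule out a minimizer in $(v_j,v_{j+1})$ using Lemma~\ref{lem:SACCvide}, $(\sfB_1)$, Sard and $(\sfC_2)$. Your version is more elementary (it skips a second pass through the Thom isotopy machinery) while the paper's is shorter; both pinpoint $(\sfC_2)$ as the ingredient that rules out critical values of $\map{\phi}{1}$ on $W_i$ getting trapped strictly between $v_j$ and $v_{j+1}$.
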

The proof of this proposition is illustrated by Figure~\ref{fig:step2}.
\begin{proof}
Let $j \in \{0,\dotsc,\ell\}$ and $u \in (v_j,v_{j+1})$. Let $C$ be a \SACC of 
$V\rinfeq[ \map{\phi}{1}]{u}$; we have to prove that $C \cap \Rcal$ is 
non-empty and \SAC.
 
Let us first prove that $C\rinfeq[ \map{\phi}{1}]{v_j}\cap \Rcal$ is non-empty 
and \SAC. By assumption $(\sfA)$, $V$ is an equidimensional algebraic set of 
positive dimension, and by assumption $(\sfP)$, the restriction of 
$\map{\phi}{1}$ to $V\cap\RR^n$ is a proper map bounded below. 
Moreover, as $ \map{\phi}{1}\left(K( \map{\phi}{1},V) \cap \RR^n\right) \subset 
\{v_1,\dotsc,v_\ell\}$, then 
\[
V\rinto[ \map{\phi}{1}]{(v_j,u]} \cap K( \map{\phi}{1},V) = \emptyset.
\]
Then using Corollary~\ref{cor:secondresult}, one deduces that $C\rinfeq[ 
\map{\phi}{1}]{v_j}$ is a \SACC of $V\rinfeq[\map{\phi}{1}]{v_j}$. 
Hence, by property \RMpu[v_j], the set $C\rinfeq[ \map{\phi}{1}]{v_j} \cap 
\Rcal$ is non-empty and \SAC. In particular, $C \cap \Rcal$ is non-empty.
 
Let us now prove that $C \cap \Rcal$ is \SAC. Let $\yy$ be in $C \cap \Rcal$. 
According to the previous paragraph, one just need to be able to connect $\yy$ 
to a point $\zz$ of $C\rinfeq[ \map{\phi}{1}]{v_j} \cap \Rcal$ by a 
semi-algebraic path in $C \cap \Rcal$ and then apply \RMpu[v_j].
First, if $\yy \in C\rinfeq[ \map{\phi}{1}]{v_j} \cap \Rcal$, there is nothing 
to do. Suppose now that $\yy \in C\rinto[ \map{\phi}{1}]{(v_j,u]} \cap \Rcal$. 
We claim that actually
\[
\yy \in C \cap W_i.
\]
Indeed, if $\yy \in C \cap F_i$, then $ \map{\phi}{i-1}(\yy) \in 
 \map{\phi}{i-1}(K_i)$ and $ \map{\phi}{1}(\yy)$ would be one of the 
$v_1,\dotsc,v_\ell$. 

Let $D$ be the \SACC of $(C\cap W_i)\rinfeq[\phiun]{u}$ containing $\yy$. 
Remark that $D$ is a \SACC of $(W_i)\rinfeq[\phiun]{u}$, as it contains $\yy$ 
and is contained in $C$. Since $\phiun(W(\phiun,W_i))$ is finite by Sard's 
lemma, we get that \rev{$\phiun(W(\phiun,W_i)) \subset \phiun(\SW)$, by 
assumption $(\sfC_2)$, so that}
\[
(v_j,u) \cap \phiun(W(\phiun,W_i)) = \emptyset.
\]
Since $W_i$ is equidimensional and smooth outside $\sing(V)$, then by 
Corollary~\ref{cor:secondresult}, $D\rinfeq[\phiun]{v_j}$ is a \SACC of 
$(W_i)\rinfeq[\phiun]{v_j}$.
Therefore, let $\zz \in D\rinfeq[\map{\phi}{1}]{v_j}$. Since $D$ is \SAC, there 
exists a semi-algebraic path, connecting $\yy \in D \subset C \cap \Rcal$ to 
$$\zz \in D\rinfeq[ \map{\phi}{1}]{v_j} \subset C\rinfeq[ \map{\phi}{1}]{v_j} 
\cap \Rcal$$ in $D \subset C \cap \Rcal$.  
We are done.
\end{proof}

\begin{figure}[h]
 \begin{center}\centering
  \includegraphics[width=0.8\linewidth]{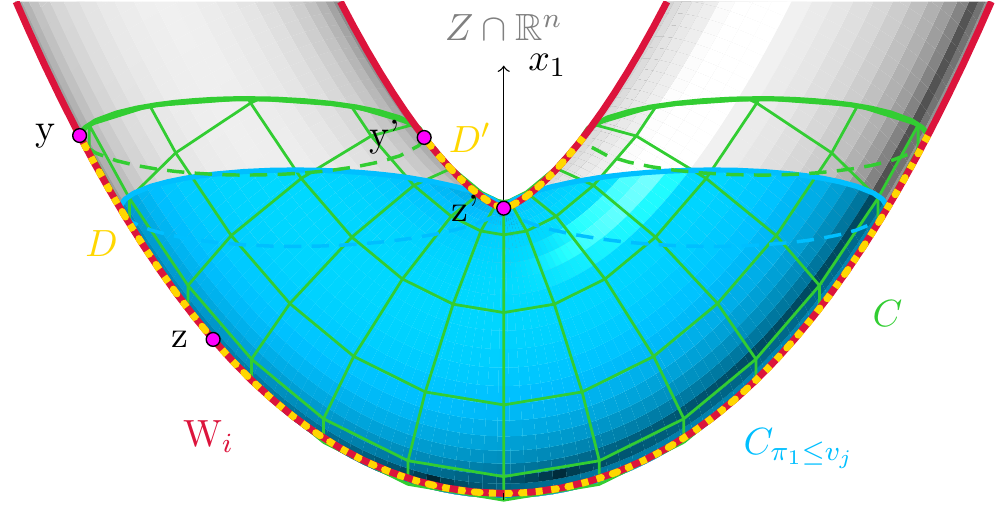}
  \caption{Illustration of proof of Proposition~\ref{prop:step2} with
  $\map{\phi}{1} = \pi_1$ and $V$ is isomorphic to 
$\V(x_1^2+x_2^2-1)\times \rev{\V(x_1-x_2^2)}$. 
  We connect the points $\yy$ and $\yy'$ in $C\cap W_i$ to respectively
  $\zz$ and $\zz'$ in $C\rinfeq[\pi_1]{v_j}$. Then we are reduced to 
  the case of Step 1.}
  \label{fig:step2}
 \end{center}
\end{figure}

 \section{Conclusions and perspectives}\label{sec:perspectives}

\rev{We illustrate below two ways of using Theorem~\ref{thm:mainresult} in order
  to generalize} the algorithms of \cite{SS2017} to the case of unbounded smooth
real algebraic sets.

Let $V\subset \CC^{n}$ be an equidimensional algebraic set of dimension $d$
\rev{given as the solutions of some polynomials $f_1, \ldots, f_p$ in
  $\QQ[x_1, \ldots, x_n]$}. Assume that $\sing(V)$ is finite. Take
\[
  \phi_1=\sum_{k=1}^n (x_k-\aa_k)^2 
  \quad \text{and for\: $2 \leq j \leq n$}\quad
  \phi_j = \sum_{k=1}^n \bb_{j,k}x_k, \qquad 
\] 
where $\aa = (\aa_1,\dotsc,\aa_n) \in \QQ^n$ and, for $2 \leq j \leq n$, $\bb_j
= (\bb_{j,1},\dotsc, \bb_{j,n}) \in \QQ^n$. Then, assumption $(\sfP)$ holds. 
Also,
according to \cite{BGHP2005, BGHSS2010}, for a generic choice of $\aa$ and 
$\bb$,
the dimension properties of assumption $(\sfB)$ do hold.

\rev{For some chosen $2\leq i \leq d$, let $W_i$ and $F_i$ be respectively the 
polar
  variety and set of fibers as defined in the introduction.} \rev{One can
  compute} a set $S\subset W(\phi_1, W_i)\subset V$ by using any algorithm such
as \cite[Chap. 13]{BPR2006} or~\cite{SaSc03}, returning sample points in all
connected components of real algebraic sets.

Hence, one can apply Theorem~\ref{thm:mainresult} to $V$, $\bphi$
\rev{and $i$}.
\rev{We deduce that $W_i\cup F_i$ has a non-empty and connected intersection
  with all connected components of $V\cap\RR^{n}$, but it is in general an
  object of dimension greater than $1$, so more work is needed.}

\rev{Our first option is to recursively perform similar operations,
  using this time polynomials defining respectively $W_i$ and $F_i$,
  eventually building a roadmap for~$V$ itself (this is justified by
  \cite[Prop. 2]{SS2011}). This requires that both $W_i$ and $F_i$ are
  equidimensional with finitely many singular points.}

\rev{The cost of the whole procedure will depend on the degrees of
  $W_i$ and $F_i$.  Denote by $D$ the degrees of the $f_i$'s and by
  $\delta$ (resp. $\sigma$) the degree of $V$ (resp. $S$); using
  \cite[Chap. 13]{BPR2006} gives $\sigma \in D^{O(n)}$. Assuming that
  $d = n-p$ and that the ideal generated by $f_1, \ldots, f_p$ is
  radical, one can apply Heintz's version of the Bézout theorem
  \cite{Heintz83} as in \cite{SaSc03} to deduce that $\delta \leq
  D^{p}$ and that the degree of $W_i$ is bounded by $\delta (nD)^{n-p}
  \in (nD)^{O(n)}$.  Similarly, one can expect that the degree
  $\delta'$ of $W(\phi_1, W_i)$ lies in $(nD)^{O(n)}$ by applying
  similar arguments to those used in \cite{SS2017}. Since the degree
  of $F_i$ is bounded by the product of $\delta$ and $\delta'+
  \sigma$, we deduce that its degree also lies in $(nD)^{O(n)}$.}

\rev{Hence, as explained in \cite{SS2011}, the overall complexity of such
  recursive algorithms is  $(nD)^{O(nr)}$, where $r$ is the depth of
  the recursion, provided that the involved geometric sets do satisfy the
  properties needed by Theorem~\ref{thm:mainresult} and can be represented and
  computed with algebraic data within complexities which are polynomial in their
  degrees.}

\rev{To understand the possible depth of recursion one could expect,
  one also needs to have a look at the dimensions of $W_i$ and
  $F_i$. Observe that $W_i$ is expected to have dimension $i -
  1$. Similarly, $F_i$ is expected to have dimension $d -
  (i-1)$. Taking $i\simeq \lfloor \frac{\dim(V)}{2} \rfloor$ will
  decrease the dimensions of $W_i$ and $F_i$ to $\simeq \lfloor
  \frac{\dim(V)}{2} \rfloor$ if they are not empty (this will require
  coordinates).  Hence the depth $r$ of this new recursive roadmap
  algorithm will be bounded by $\log_2(n)$.}

\rev{A second approach to design our new algorithm takes $i =
  2$. Then, $W_2$ is expected to have dimension $1$ (or be empty), so
  no further computation is needed. On the other hand, $F_2$ still has
  dimension $d - 1$, but a key observation is that $F_2$ is now {\em
    bounded}. Then, one can directly apply a slight variant of the
  algorithm in \cite{SS2017} taking $F_2$ as input: that algorithm
  already keeps the depth of recursion bounded by $\log_2(n)$, but we
  should now handle the fact that we work in the hypersurface
  $\phi_1^{-1}(\phi_1(K_1))$.  Again, all of this is under the
  assumption that one can make $F_2$ satisfy the assumptions of
  Theorem~\ref{thm:mainresult}.}

\rev{We will investigate that approach in a forthcoming paper.}

\rev{Thus, the next steps to obtain nearly optimal algorithms for
  computing roadmaps of smooth real algebraic sets, without compactness
  assumptions, are:
  \begin{itemize}
  \item to study how the constructions of generalized Lagrange systems
    introduced in~\cite{SS2017} for encoding polar varieties associated to
    linear projections can be reused in our context;
  \item to prove that assumption $(\sfB)$ holds for some generic choice of $\aa$
    and $\bb$ for our polar varieties, which by contrast to those used
    in~\cite{SS2017} are no more associated to linear projections;
  \item to prove that the variant of the algorithm designed
    in~\cite{SS2017} discussed above still has a complexity similar to
    the one obtained in~\cite{SS2017}.
  \end{itemize}
}

\rev{The example below illustrates how this whole machinery might work
  and how Theorem~\ref{thm:mainresult} can already be used.}
\begin{example}\label{exa:exroadmap}
Let $V=\V(g)\subset \CCo^3$ be the hypersurface defined by the vanishing set of 
the polynomial $g=x_1^3+x_2^3+x_3^3-x_1-x_2-x_3-1\in\QQo[x_1,x_2,x_3]$. As a 
hypersurface, $V$ is 2-equidimensional and since $\sing(V)=\emptyset$, $V$ 
satisfies $(\sfA)$.

  Let $\revbis{\bphi=((x_1-1)^2+x_2^2+x_3^2,\:x_1,\:x_2)} \subset 
\QQo[x_1,x_2,x_3]$. As the restriction of $\map{\phi}{1}$ to $\RRo^n$ is the 
square of the \revbis{Euclidean distance to $(1,0,0)$}, $(\sfP)$ is satisfied.
Since $2\leq i \leq d$, we must take $i=2$. Then we see that one can 
write
\[
    W_2=\V(f,
    \revbis{(3x_1x_3+1)(x_1-x_3)+ 3x_3^2 - 1} 
).
\]
One checks that $W_2$ is 1-equidimensional and has no singular point as well, so
that $(\bphi,2)$ satisfies $(\sfB_1)$. Let $K_2 = \Wo(\map{\phi}{1},W_2)$,
\rev{which is} a finite set of \rev{cardinality} \revbis{$45$ (of which $5$ are real)}. 
Besides, for any $\alpha\in \CCo$,
\[
	V \cap \map{\phi}{1}[-1](\alpha) = 
	\revbis{\V(f,(x_1-1)^2+x_2^2+x_3^2-\alpha)}
\]
is either empty or an equidimensional algebraic set of dimension 1. Therefore, 
$(\bphi,2)$ satisfies $(\sfB)$.
Finally, since $\Wo(\map{\phi}{1},W_2)\cap \RRo^3$ is a finite set, assumption 
$(\sfC)$ holds vacuously. \rev{Recall that, by definition, $F_2 =
  {\map{\phi}{1}}^{-1}(\map{\phi}{1}(K_2))\cap V $.} In conclusion, by
Theorem~\ref{thm:mainresult}, $W_2 \cup F_2$ is a $1$-roadmap of
$(V,\emptyset)$. Figure~\ref{fig:exroadmap} illustrates this example.

\begin{figure}[h]\centering
\begin{minipage}[c]{0.45\linewidth}
  \includegraphics[width=\linewidth]{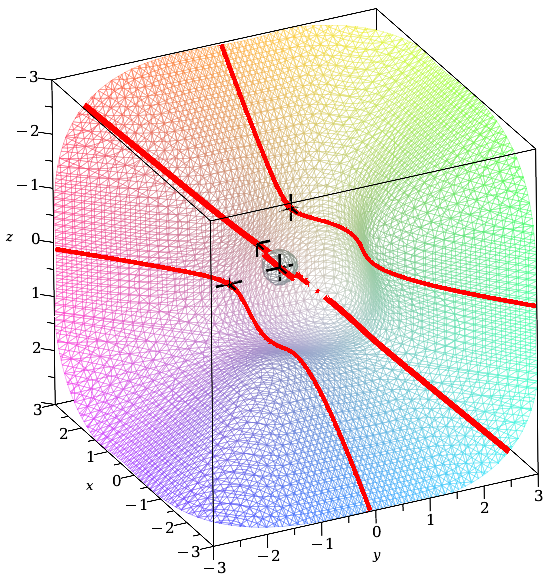}
\end{minipage}\hfil
\begin{minipage}[c]{0.45\linewidth}
  \includegraphics[width=\linewidth]{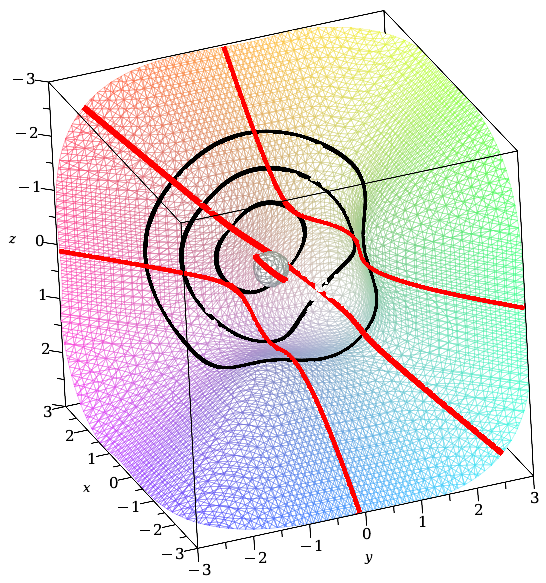}
\end{minipage}
\caption{An illustration of Example~\ref{exa:exroadmap}. The real trace $V\cap 
\RRo^3$ is plotted twice as a grid. On the left, $W_2\cap\RRo^3$ is represented 
as red lines, and the crosses represent all the real points of $K_2$. Then, on 
the right, we replaced the points of $K_2$ by the fibers of $F_2\cap\RRo^3$ 
(black lines), to repair the connectivity failures of $W_2\cap\RRo^3$. In 
particular, $F_2\cap\RRo^3$ connects the \SAC components of $W_2\cap\RRo^3$ 
that lie in the same \SACC of $V\cap\RRo^3$.} 
\label{fig:exroadmap}
\end{figure}
\end{example}

  \revbis{We expect that algorithmic progress on the computation of roadmaps for
  real algebraic and semi-algebraic sets will lead to implementations that will
  automate the analysis of kinematic singularities for e.g. serial and parallel
  manipulators. In particular, there are many families of robots where these
  algorithms could be used if they scale enough. This is the case for e.g. 6R
  manipulators (see e.g. the results on the number of aspects in \cite{Wenger07}
  which need to be extended) in the context of serial manipulators, for the 
  study of self-motion spaces of parallel platforms such as Gough-Stewart ones 
  (the case of such manipulators with $6$ lengths still remains open, see e.g.
  \cite{NaSc17}) and for the identification of cuspidality manipulators (see 
  \cite{CPSSW2022} for a general approach, relying on roadmap algorithms).
  For some of these applications, one needs to compute the
  number of connected components of semi-algebraic sets defined as the
  complement of a real hypersurface defined by $f = 0$ where $f$ is a
  multivariate polynomial. Note that this can be done by computing a roadmap
  for the (non-bounded) real algebraic set defined by $tf-1=0$ where $t$ is a
  new variable. This illustrates the potential interest of the algorithms that
  would be derived from the connectivity theorem of this paper.}

\section*{Acknowledgments}
The first and second authors are supported by ANR grants ANR-18-CE33-0011 
\textsc{Sesame}, and ANR-19-CE40-0018 \textsc{De Rerum Natura}, the joint 
ANR-FWF ANR-19-CE48-0015 \textsc{ECARP} project, the European Union’s Horizon 
2020 research and innovative training network program under the Marie 
Skłodowska-Curie grant agreement N\textsuperscript{o} 813211 (POEMA) and the 
Grant FA8665-20-1-7029 of the EOARD-AFOSR. The third author was supported by an 
NSERC Discovery Grant.
\rev{We also thank the referees for their helpful remarks and suggestions.}
\bibliographystyle{abbrv}
\bibliography{biblio.bib}

\end{document}